\newcommand{\Ra}{\Rightarrow}
\newcommand{\La}{\Leftarrow}
\newcommand{\Lra}{\Leftrightarrow}
\newcommand{\sra}{\shortrightarrow}
\newcommand{\ud}{\triangleq}
\newcommand{\diff}{\stackrel{\text{\tiny \(\vartriangle\)}}{\Leftrightarrow}}
\newcommand{\sraa}[1]{\rat{#1}}
\newcommand{\exten}[1]{\lceil #1\rceil}
\newcommand{\rat}[1]{\stackrel{\text{\tiny \(#1\)}}{\rightarrow}}
\DeclarePairedDelimiter\tuple{\langle}{\rangle}
\DeclarePairedDelimiter\braces{\{}{\}}
\DeclareMathOperator{\pre}{\mathsf{pre}}
\DeclareMathOperator{\post}{\mathsf{post}}
\DeclareMathOperator{\simu}{{sim}}
\DeclareMathOperator{\Part}{Part}
\DeclareMathOperator{\PreO}{PreO}
\DeclareMathOperator{\Rel}{Rel}
\DeclareMathOperator{\bis}{bis}
\DeclareMathOperator{\Sim}{Sim}
\newcommand{\Pbis}{P_{\bis}}
\newcommand{\Psim}{P_{\simu}}
\newcommand{\Rsim}{R_{\simu}}
\newcommand{\Ubad}{U_{bad}}
\newcommand{\bN}{\mathbb{N}}
\newcommand{\bS}{\mathbb{S}}
\newcommand{\timeout}{\dag}
\newcommand{\subsetarr}{\stackrel{\subset}{\scriptscriptstyle\shortrightarrow}}
\DeclareMathOperator{\RefAlgo}{\mathsf{SymRef}}
\newtheorem{problem}[theorem]{Problem}
\newtheorem{assumption}[theorem]{Assumption}
\title{Computing Reachable Simulations} 
\author{Pierre Ganty}{IMDEA Software Institute, Pozuelo de Alarc\'{o}n, Spain}{pierre.ganty@imdea.org}{https://orcid.org/0000-0002-3625-6003}{}
\author{Nicolas Manini}{IMDEA Software Institute, Pozuelo de Alarc\'{o}n, Spain \and Universidad Polit\'{e}cnica de Madrid, Madrid, Spain}{nicolas.manini@imdea.org}{https://orcid.org/0000-0002-7561-3763}{}
\author{Francesco Ranzato}{University of Padova, Padova, Italy}{francesco.ranzato@unipd.it}{https://orcid.org/0000-0003-0159-0068}{}
\authorrunning{P. Ganty, N. Manini and F.Ranzato} %
\keywords{Simulation, Reachability, Reachable Simulation Problem, Algorithm, Symbolic Data Structure.}
\begin{document}

\maketitle

\begin{abstract}
We study the problem of computing the reachable principals of simulation preorder and the reachable blocks of simulation equivalence. Following a theoretical investigation of the decidability and complexity aspects of this problem, which in particular highlights a stark contrast with the already settled case of bisimulation, 
we design algorithms to solve this problem by leveraging the idea of interleaving reachability and simulation computation while possibly avoiding the computation of all the reachable states or the whole simulation preorder.
In particular, we put forward a symbolic algorithm processing state partitions and, in turn, relations between their blocks, which is suited for processing infinite state systems.
\end{abstract}

\section{Introduction}\label{sec:alt-intro}

Given a, possibly infinite, labeled transition system $S$,
we study the problem of computing the reachable principals 
of the simulation preorder, i.e.\ the greatest simulation relation on $S$, 
and the reachable blocks of the simulation partition, i.e.\ the  equivalence induced by the simulation preorder on $S$, here called \emph{reachable} (part of) \emph{simulation} (preorder or partition) problem.  
By reachable we mean the principals of the simulation preorder and the 
blocks of the simulation partition that intersect the reachable states of the system, therefore ignoring unreachable principals and blocks which, typically, are of no/negligible interest. 

A na\"{\i}ve solution to our \emph{reachable simulation problem} would be: first, compute the simulation preorder/partition, and 
then, filter out the principals/blocks containing no reachable states.
This, however, would require the computation of the entire simulation preorder and of, possibly, all the reachable states.
In this work, we will present a completely different solution relying on a convoluted interleaving of reachability and simulation computation
that allows us to possibly avoid the computation of all the reachable states
and of the whole simulation preorder.

\subparagraph*{Contributions.} 
In Section~\ref{sec:onlinesim}, we show, by means of a reduction to an undecidable termination problem for infinite state systems, and then for finite state systems through a reduction to the st-connectivity NL-complete problem for directed graphs, that there is stark contrast between the problem of computing the reachable blocks of the bisimulation partition, settled by Lee and Yannakakis~\cite{leeOnlineMinimizationTransition1992} in STOC 1992,  and the analogous problem for the simulation preorder and equivalence tackled by this paper. 
In Section~\ref{sec:soundonline}, we put forward a first algorithm which solves the reachable simulation problem for the simulation preorder while yielding a sound over-approximation for the simulation partition. 
Additionally, we sketch in Section~\ref{subsec:partition} how a variation of this first algorithm allows us to compute the reachable blocks of the simulation partition (due to lack of space, the full description of this variant algorithm is deferred to Appendix~\ref{sec:completeonline}). 
Our results prove correctness and termination of this first algorithm for finite state systems and discuss how correctness is achieved under simple assumptions for the case of infinite state systems as well. 
Moreover, we provide examples showing correctness and termination on some infinite state systems.
Section~\ref{sec:symbolic_algo} introduces the notion of 2PR triple consisting of two state partitions (2P) and of a relation (R) between the blocks of these two partitions, which allows us to design a symbolic algorithm for solving the reachable simulation problem for the simulation preorder, where a relation between states is symbolically represented through a 2PR.
Besides inheriting the termination and correctness guarantees of the first algorithm, we claim that the 2PR-based algorithm terminates faster and more often for the case of infinite state systems.
We support our claims by providing toy examples showing their differences and also by providing empirical evidence where we show (in Appendix~\ref{sec:experiments}) that the symbolic algorithm runs significantly faster than the first explicit algorithm on some benchmarks arising from well-known mutual exclusion protocols, that is, the 2PR-based symbolic approach pays off.

\subparagraph*{Applications.} 
Computing reachable simulation principals and blocks may find several practical applications.  
A noteworthy use case of the reachable principals of the simulation preorder is given by the determinization algorithms $\textsc{Subset}(f)$ and $\textsc{Transset}(f)$ for nondeterministic finite automata designed by van Glabbeek and Ploeger \cite{vanglabbeekFiveDeterminisationAlgorithms2008}. In fact, these procedures can be computationally enhanced by leveraging the simulation preorder  (where $f$ is instantiated to the operator $\mathsf{compress}_{\subsetarr}$ and ${\subsetarr}$ denotes the simulation preorder) restricted to the reachable principals only: %
since the automaton determinization proceeds forward starting from the initial states, the unreachable principals of the simulation preorder are therefore useless. 
It turns out that these simulation-based algorithms compute smaller deterministic automata compared to their plain versions~\cite{vanglabbeekFiveDeterminisationAlgorithms2008}.
In a different application, the reachable blocks of the simulation partition define the states of the reduced system.
The question of computing the transitions between the blocks of the reduced system has been investigated in depth by Bustan and Grumberg~\cite{bustanSimulationBasedMinimization}, who explore the difference and trade-offs of the \(\exists\exists\)  (i.e., $B\sra^{\exists\exists} B'$ if{}f $\exists s\in B\ldotp\exists s'\in B'\ldotp s\sra s'$) and  \(\forall\exists\) definitions (i.e., $B\sra^{\forall\exists} B'$ if{}f $\forall s\in B\ldotp\exists s'\in B'\ldotp s\sra s'$) for transitions between blocks, and, in turn, design an algorithm for computing the minimal number of transitions.

\subparagraph*{Related Work.} 
The closest work to ours is that by Lee and Yannakakis \cite{leeOnlineMinimizationTransition1992}, who first designed in 1992 
an intricate interleaving of reachability and bisimulation computation, here referred to as the LY algorithm.
To put the LY algorithm in its historical context, it was one of several algorithms to compute the reachable part of the bisimulation-based quotiented system~\cite{bouajjaniMinimalModelGeneration1991,bouajjaniMinimalStateGraph1992,leeOnlineMinimizationTransition1992}.
These algorithms have in common the interleaving of the bisimulation computation---which is itself a partition refinement algorithm---with the computation determining which bisimulation blocks are reachable.
The remarkable interest of interleaving reachability and bisimulation computation is that the resulting algorithms terminate as least as often (possibly more often) than the na\"{\i}ve procedure consisting in first computing the bisimulation and next determining its reachable blocks.
Later on, these algorithms for computing the reachable bisimulation have been revisited by Alur and Henzinger in a chapter of their unpublished book on computer-aided verification~\cite[Chapter~4]{alurGraphMinimization}, as well as by the theoretical and experimental comparison made by Fisler and Vardi~\cite{fislerBisimulationMinimizationSymbolic2002}.
Let us also mention that algorithms combining reachability and bisimulation computation inspired by the LY algorithm have been used 
in several different contexts ranging from 
program analysis~\cite{gulavaniSYNERGYNewAlgorithm2006,pasareanuConcreteModelChecking2005} to hybrid systems verification~\cite{majumdarAbstractionbasedControllerDesign2020}.

Our focus on simulation stems from the well-known fact that this behavioural relation provides a better state space reduction than bisimilarity, yet the similarity quotient retains enough precision for checking all linear temporal logic formulas or branching temporal logic formulas without quantifier switches~\cite{BensalemBLS92,hmc18,GrumbergL91,gl94,LoiseauxGSBB95}.
Moreover, infinite state systems like 2D rectangular automata may have infinite bisimilarity quotients, yet their similarity quotients are always finite %
(see \cite{henzingerComputingSimulationsFinite1995} and \cite{henzingerHybridAutomataFinite1995} for the precise details including the initial partition). 
~

There is a large body of work~\cite{BloomP95,Cece17,crt2011,gpp03,gp08,henzingerComputingSimulationsFinite1995,Ranzato2013,ranzato2014,RT07,ranzatoEfficientSimulationAlgorithm2010,TanC01} on efficiently computing the simulation preorder, through both explicit or symbolic algorithms. 
One major interest for simulation algorithms comes from the fact that simulation-based minimization of (labeled) transition systems strongly preserves $\forall$CTL$^*$ formulas \cite{bustanSimulationBasedMinimization,GrumbergL91,gl94}.
Ku\v{c}era and Mayr~\cite{KM02,KuceraM02} have compared simulation and bisimulation equivalence from the perspective of the computational complexity for deciding them, and gave precise justifications to the claim  that 
similarity is computationally harder than bisimilarity.  

To the best of our knowledge, no previous work considered the problem of computing the reachable principals 
of the simulation preorder or the reachable 
blocks of the simulation partition.

Due to lack of space, additional material, proofs included, is deferred to the appendix. 

\section{Background}\label{sec:background}
\subparagraph*{Orders and Partitions.} %
Given a (possibly infinite) set \(\Sigma\), we denote with \(\wp(\Sigma)\) the powerset of \(\Sigma\), and with \(\Rel(\Sigma)\ud \wp(\Sigma\times \Sigma)\) the set of relations over \(\Sigma\).
If \(R\in \Rel(\Sigma)\) then: for \(S\subseteq\Sigma\), \(R(S)\ud \{s'\in \Sigma \mid \exists s\in S\ldotp (s,s')\in R\}\); for \(s\in \Sigma\), the set \(R(s)\ud R(\{s\})\) is the \emph{principal} of \(s\); %
\(\Rel(\Sigma)\ni R^{-1} \ud \{(y, x) \in \Sigma \times \Sigma \mid (x, y) \in R\}\) is the \emph{converse} relation of \(R\).
Moreover, for a given set \(S\subseteq\Sigma\) we denote by \(R^S \ud \{R(x) \in \wp(\Sigma)\mid x \in \Sigma,\,R(x) \cap S \neq \varnothing\}\) the set of principals of $R$ that intersect $S$.
A relation \(R\in \Rel(\Sigma)\) is a \emph{preorder} %
if it is reflexive and transitive, and \(\PreO(\Sigma)\ud \{R\in \Rel(\Sigma) \mid R\ \text{is a preorder}\}\) denotes the set of preorders on \(\Sigma\). 
Moreover, \(R\in \Rel(\Sigma)\) is an \emph{equivalence} on $\Sigma$ if it is a symmetric preorder.
A \emph{partition} of \(\Sigma\) consists of pairwise disjoint nonempty subsets of \(\Sigma\), called \emph{blocks}, whose union is \(\Sigma\), and \(\Part(\Sigma)\) denotes the set of partitions of \(\Sigma\).
We will consider finite partitions (i.e., consisting of finitely many nonempty subsets), unless otherwise specified.
An equivalence relation $R$ on \(\Sigma\) induces a partition of \(\Sigma\) where each block is an equivalence class of $R$, and vice versa. 
Given a partition \(P\in \Part(\Sigma)\), \(P(s)\), \(P(S)\) and \(P^S\)  (for \(S\subseteq\Sigma, s\in\Sigma\)) are well defined thanks to the equivalence induced by \(P\). 
In particular, \(P(s)\) is the block including~\(s\), \(P(S)= \mathop{\cup}\{P(s) \in P \mid s\in S\}\), and \(P^S =  \{P(s) \in P \mid s\in S\}\in \Part(P(S))\). 
Given \(P, Q \in \Part(\Sigma)\), \(P \wedge Q \in \Part(\Sigma)\) denotes the partition obtained by intersecting the underlying equivalence relations (this coincides with the meet in the lattice of partitions w.r.t.\ their ``finer than'' partial order).  
\subparagraph*{Simulation and Bisimulation.}
Let \(G=(\Sigma,I, L,\mathord{\sra})\) be a (labeled) transition system, where \(\Sigma\) is a (possibly infinite yet countable) set of states,
\(I \subseteq \Sigma\) is a subset of \emph{initial states}, 
\(L\) is a finite set of action labels, and 
\(\mathord{\sra} \subseteq \Sigma\times L\times\Sigma\) is the \emph{labeled transition relation}, 
where we denote \((x, a, y) \in \mathord{\sra}\) as \(x \sraa{a} y\), and  
\(x \sra y \diff \exists a \in L \ldotp x \sraa{a} y\). 
When \(L\) is a singleton set, we leverage the previous notation when writing \(x\sra y\), and we also write that \((x, y) \in \mathord{\sra}\) when no ambiguity arises.
Given \(a \in L\), \(\post_a \colon \wp(\Sigma) \sra \wp(\Sigma)\) denotes the usual \emph{successor transformer} \(\post_a(X) \ud \{y \in \Sigma \mid \exists x \in X \ldotp x\sraa{a} y\}\), and, dually, \(\pre_a \colon \wp(\Sigma) \sra \wp(\Sigma)\) is the \emph{predecessor}  \(\pre_a(Y) \ud \{x \in \Sigma \mid \exists y \in Y \ldotp x\sraa{a} y\}\).
Moreover, we define \(\post \colon \wp(\Sigma) \sra \wp(\Sigma)\) as \(\post(X) \ud \cup_{a \in L}\post_a(X)\) and, symmetrically, \(\pre \colon \wp(\Sigma) \sra \wp(\Sigma)\) as \(\pre(X) \ud \cup_{a \in L}\pre_a(X)\).
Thus, \(\post^*(I)\ud\cup_{n\in \bN}\post^n(I)\) is the set of \emph{reachable states} (from the initial states) of \(G\).

Given an (initial) preorder \(R_i\in \PreO(\Sigma)\) (e.g., \(R_i\) can be the equivalence relation induced by the initial states or by some labeling of a Kripke structure), a relation \(R\in \Rel(\Sigma)\) is a \emph{simulation} on \(G\) w.r.t.\ \(R_i\) if: (1) \(R\subseteq R_i\); (2) \((s,t)\in R\) and \(s\sraa{a} s'\) imply \(\exists t'\ldotp\, t\sraa{a} t'\) and \((s',t')\in R\). 
Given two principals \(R(s)\), \(R(s')\) such that \({s \sraa{a} s'}\), \(R(s)\) is \emph{\(a\)-stable} (or simply \emph{stable}) w.r.t.\ \(R(s')\) when \(R(s) \subseteq \pre_a(R(s'))\), otherwise \(R(s)\) is called \emph{\(a\)-unstable} (or simply \emph{unstable}) w.r.t.\ \(R(s')\), and, in this case, \(R(s')\) can \emph{refine} \(R(s)\).
The greatest (w.r.t.\ \(\subseteq\)) simulation relation  on \(G\) exists and turns out to be a preorder called \emph{simulation preorder}  of \(G\) w.r.t.\ \(R_i\), denoted by \(\Rsim\in \PreO(\Sigma)\), while \(\Psim\in \Part(\Sigma)\) is the \emph{simulation partition} 
induced by the \emph{similarity} equivalence \(\Rsim \cap (\Rsim)^{-1}\). 
A relation \(R\in \Rel(\Sigma)\) is a \emph{bisimulation} on \(G\) w.r.t.\ an (initial) partition $P_i\in\Part(\Sigma)$ 
if both \(R\) and \(R^{-1}\) are simulations on \(G\) w.r.t.\ $P_i$.  
The greatest (w.r.t.\ \(\subseteq\)) bisimulation relation  on \(G\) w.r.t.\ $P_i$ exists and turns out to be an equivalence called \emph{bisimulation equivalence} (or bisimilarity), whose corresponding \emph{bisimulation partition} is denoted by  \(\Pbis\in\Part(\Sigma)\).

\section{The Problem of Computing Reachable Simulations}\label{sec:onlinesim}

Firstly, let us  define precisely the problem investigated by this work.  
\begin{problem}[\textbf{Reachable Simulation Problem}]\label{problem1}\ \\ \
\textsc{Given:} A labeled transition system \(G=(\Sigma,I,L,\mathord{\sra})\) and an initial preorder \(R_i\in \PreO(\Sigma)\).\\
\textsc{Compute:} The reachable principals of \(\Rsim\) and the reachable blocks of \(\Psim\), 
where \(\Rsim\) and \(\Psim\) are, resp., the simulation preorder and partition of $G$ w.r.t.\ $R_i$. \lipicsEnd
\end{problem}

One first challenge we face in this problem is related to the notion of reachability.
Let us observe that the notion of reachability for blocks of any partition $P\in\Part(\Sigma)$---such as the partition $\Psim$ 
induced by simulation equivalence---is naturally defined as follows:
\begin{align*}
P^{\post^*(I)} &=\{B\in P \mid B \cap \post^*(I) \neq \varnothing\}\\
&=\{P(s) \in P \mid s \in \post^*(I)\}  \enspace.
\end{align*}
When considering simulations rather than bisimulations, a notion of reachability for principals of a relation is also needed, 
leading to multiple generalizations of reachability for blocks of a partition (viz.\ \(P^{\post^*(I)}\)).
In fact, reachability for principals is not uniquely formulated, and as such, the two following different definitions of \emph{reachable principal} of a reflexive relation \(R \in \Rel(\Sigma)\) can both be considered adequate:
\begin{align}
R^{\post^*(I)} &= \{R(s) \in \wp(\Sigma) \mid s \in \Sigma, \, R(s) \cap \post^*(I) \neq \varnothing\} \enspace,\quad \text{alternatively,} \label{eq:principals_intersect}\\
R_{\text{alt}}^{\post^*(I)} &\ud\{R(s) \in \wp(\Sigma) \mid s \in \post^*(I)\}\enspace . \label{eq:principals_generated}
\end{align} 
Clearly, \(R_{\text{alt}}^{\post^*(I)}\subseteq R^{\post^*(I)}\) holds, and the inclusion may be strict for $\Rsim$ as shown next.

\begin{example}	\label{ex:principals_inclusion_strict}
  Let us consider the transition system \((\Sigma=\{0,1\}, I = \{1\}, L=\{a\}, \mathord{\sra}=\{(1,1)\})\) as depicted below, and the initial preorder \(R_i = \Sigma \times \Sigma\). 
The simulation preorder induced by \(R_i\) is given by \(\Rsim(0) = \{0,1\}\), \(\Rsim(1)=\{1\}\) and the blocks of the corresponding simulation partition \(\Psim\) are represented as boxes in the figure. 

{\centering
\tikz[baseline=(1.south),auto]{
      \tikzstyle{arrow}=[->,>=latex']
      \path      
        (-1,0) node[name=1]{1}
        (0,0) node[name=0]{0}
        ;
      \node (C1) [left of = 1] {};
      \draw[color=blue!80,thick,arrow={-Stealth},shorten >=-3pt, shorten <=3pt] (C1) to (1);
      
      \draw[arrow] (1) edge[loop right] node[right] {} (1);
      
           \path (1.north west) ++(-0.1,0.1) node[name=a1]{} (1.south east) ++(0.1,-0.1) 
      node[name=a2]{};
     \draw[rounded corners=6pt] (a1) rectangle (a2);
     
          \path (0.north west) ++(-0.1,0.1) node[name=b1]{} (0.south east) ++(0.1,-0.1) 
      node[name=b2]{};
     \draw[rounded corners=6pt] (b1) rectangle (b2);
}  
\par}
  
\medskip
\noindent  
Thus, since \(\post^*(I) = \{1\}\), the only reachable principal according to~\eqref{eq:principals_generated} is \(\Rsim(1)\), while the 
reachable principals according to~\eqref{eq:principals_intersect} are both \(\Rsim(0)\) and \(\Rsim(1)\). \lipicsEnd
\end{example}

Let us also remark that the reachable principals, according to \eqref{eq:principals_intersect}, might be infinitely many while, for \eqref{eq:principals_generated}, they might be finitely many.

\subsection{Undecidability for Infinite State Systems}\label{sec:undecidability}
We show that the problem of computing the reachable blocks of a simulation partition $\Psim$ (namely, the subset of blocks $\Psim^{\post^*(I)}=\{B\in \Psim \mid B \cap \post^*(I) \neq \varnothing\}$) over an infinite transition system is, in general, unsolvable even under the assumption that $\Psim$ is a finite partition.
This impossibility result is in stark contrast with the problem of computing reachable blocks for the 
bisimulation partition $\Pbis$, which has been shown solvable by Lee and Yannakakis \cite[Theorem~3.1 and the following paragraph therein]{leeOnlineMinimizationTransition1992}.
To prove our impossibility result, we show that the undecidable halting problem for 2-counter machines can be reduced to the reachability problem for the blocks of a simulation partition, and, in particular, we do so for finite simulation partitions.

A 2-counter machine ($2$-CM, for short) is a tuple \(M = (Q, \Delta, q_0 , H )\), where: \(Q\) is a finite set of states including an initial state \(q_0\), \(H\subseteq Q\) is a set of halting states with \(q_0\notin H\), and \(\Delta\) is a set of transitions. The two counters are \(c_1\) and \(c_2\), and store non-negative integer values (and nothing else).
The set \(\Delta\subseteq Q\times(\{1,2\}\times\{\mathord{+},\mathord{-},\mathord{0}\})\times Q\) comprises three types of transitions:
\begin{alphaenumerate}
  \item \(q \xrightarrow{c_i:=c_i+1} q'\), which increases counter \(c_i\) where \(i\in \{1,2\}\);
  \item \(q \xrightarrow{c_i:=c_i-1} q'\), which decreases counter \(c_i\) where \(i\in \{1,2\}\); if the value of counter $c_i$ is zero then the transition cannot be fired;
  \item \(q \xrightarrow{c_i=0} q'\), which performs a zero-test on counter \(c_i\) where \(i\in \{1,2\}\); if the value of counter $c_i$ is non-zero then the transition cannot be fired.
\end{alphaenumerate}

\noindent
A configuration of \(M\) is a triple \((q, j_1, j_2) \in Q\times \bN^{2}\), where \(q\in Q\) is a state and \(j_1,j_2 \in \bN\) are the values stored by the counters \(c_1\) and \(c_2\) of $M$. 
A configuration \((q, j_1, j_2)\) is halting when \(q\in H\).
The operational semantics of a 2-CM is determined by a transition relation \(\rightarrow\) between configurations and is defined as expected.
The \(2\)-CM \(M\) halts on input \((j_1,j_2)\) if there exist configurations \(\mathit{conf}_1,\ldots,\mathit{conf}_k\) such that: \(\mathit{conf}_1\rightarrow \cdots \rightarrow \mathit{conf}_k\); \(\mathit{conf}_1=(q_0,j_1,j_2)\); \(\mathit{conf}_k\) is halting. 
The halting problem for $2$-CMs is undecidable \cite[Chapter~14]{minsky67}:

\begin{problem}[\textbf{Halting Problem of $2$-CMs}]\ \label{hp2cm} \\ \
\textsc{Given:} A 2-CM \(M\).\\
\textsc{Decide:} Can \(M\) halt on input \((0, 0)\)? \lipicsEnd
\end{problem}

We will show by reduction that since Problem~\ref{hp2cm} is undecidable then so must be the following decision problem:
\begin{problem}\ \label{rse} \\ \
\textsc{Given:} A 2-CM \(M=(Q,C,\Delta,q_0,H)\) and a preorder \(r_i\) on the set $Q$ of states.\\ 
\textsc{Let:} $G=(\Sigma = Q\times \bN^2, I = \{(q_0,0,0)\}, L=\{a\}, \mathord{\rightarrow})$ be the transition system for $M$.\\
\textsc{Let:} $\Psim$ be the simulation partition of \(G\) w.r.t.\
\(R_i \ud \{ \bigl((q,x,y),(q',x',y')\bigr) \mid (q,q')\in r_i\text{ and } x,y,x',y'\in\bN\}\).\\
\textsc{Decide:} \(\forall B\in \Psim \colon \post^*(\{(q_0,0,0)\})\cap B \neq \varnothing\)\,?
\lipicsEnd
\end{problem}

Consider an instance \(M\) of the halting problem for 2-CMs.
We first define a counter machine \(M_1\) that is obtained by adding to \(M\) a new non-halting state \(q_{1}\) for each non-halting state \(q\) of \(M\).
Besides the transitions of \(M\), \(M_1\) has two additional transitions for each new state  \(q_{1}\) added at the previous step: \(q \xrightarrow{c_1:=c_1+1} q_{1}\) and \(q_{1} \xrightarrow{c_1:=c_1-1} q\).
This definition of $M_1$ ensures that every non-halting configuration in \((Q\smallsetminus H)\times\mathbb{N}^2\) can always progress to a non-halting successor configuration in \((Q\smallsetminus H)\times\mathbb{N}^2\).
Observe that adding such states and transitions does not modify whether \(M\) halts: in fact, \(M\) halts if{}f \(M_1\) halts. 
Furthermore, we assume, without loss of generality, that halting states have no outgoing transitions. 
This assumption can be enforced (if needed) because if an halting state \(h\in H\) has outgoing transitions then we define \(M_2\) by duplicating in \(M_1\) the state \(h\) and all its ingoing transitions into a new non-halting state $q_h$, and, then,
we remove all the outgoing transitions out of \(h\).
Again, this transformation does not modify whether \(M\) halts: \(M\) halts if{}f \(M_2\) halts.
We thus consider the transition system induced by $M_2$ with configurations $Q\times \bN^2$ and with a singleton 
set of initial states \(I \ud \{(q_0, 0, 0)\}\).
Next, let us define the preorder \(r_i\) to be \( ( H \times Q ) \cup \bigl( (Q\smallsetminus H) \times (Q\smallsetminus H)\bigr) \in \PreO(Q)\). Hence, it turns out that 
\[
R_i \ud \:\big((H \times \mathbb{N}^2) \times (Q \times \mathbb{N}^2)\big)\; \cup \big(((Q\smallsetminus H) \times \mathbb{N}^2) \times ((Q\smallsetminus H) \times \mathbb{N}^2)\big)\in \PreO(Q\times \bN^2)\enspace .
\]
Next, we show that \(R_i\) is the simulation preorder, i.e.\ $R_i=\Rsim$, because we can prove that for every transition \(\mathit{conf}_x \rightarrow \mathit{conf}_y\) the inclusion \(R_i(\mathit{conf}_x)\subseteq \pre(R_i(\mathit{conf}_y))\) holds. 
Firstly, note that if \(\mathit{conf}_x\) is halting then \(\mathit{conf}_x \rightarrow \mathit{conf}_y\) for no configuration \(\mathit{conf}_y\) (the definition of \(M_2\) guarantees this property). 
If \(\mathit{conf}_x\) is non-halting then we have that \(R_i(\mathit{conf}_x)=(Q\smallsetminus H) \times\mathbb{N}^2\). 
If \(\mathit{conf}_y\) is halting then we have \(R_i(\mathit{conf}_y)= Q\times\mathbb{N}^2\), which, together with the fact that every configuration of \((Q\smallsetminus H)\times\mathbb{N}^2\) has an outgoing transition (this is ensured by \(M_1\)), shows that the inclusion \(R_i(\mathit{conf}_x)\subseteq \pre(R_i(\mathit{conf}_y))\) holds.
The last case to consider is when both \(\mathit{conf}_x\) and \(\mathit{conf}_y\) are non-halting, which implies that \(R_i(\mathit{conf}_x)=R_i(\mathit{conf}_y)=(Q\smallsetminus H)\times\mathbb{N}^2\). 
Here, we find that the inclusion \(R_i(\mathit{conf}_x)\subseteq \pre(R_i(\mathit{conf}_y))\) holds since every element of \((Q\smallsetminus H)\times\mathbb{N}^2\) has an outgoing transition into some non-halting configuration. 

Therefore, \(R_i=\Rsim\) holds, so that the simulation partition \(\Psim\) generated  by the equivalence
 \(\Rsim\cap (\Rsim)^{-1}\) is:
\begin{equation}\label{eq:reduction_psim_finite}
\Psim=\{ H\times\mathbb{N}^2, (Q\smallsetminus H)\times\mathbb{N}^2\}\enspace . 
\end{equation}

Finally, it turns out that \(\{(q_0,0,0)\}\) can reach a halting configuration 
in \(M\) if{}f  it holds \(\post^*(\{(q_0,0,0)\}) \cap (H \times \bN^2) 
\neq \varnothing \land \post^*(\{(q_0,0,0)\})  \cap ((Q\smallsetminus H) \times \bN^2)\neq \varnothing\), thus if{}f \(\forall B\in \Psim \colon \post^*(\{(q_0,0,0)\})\cap B \neq \varnothing\). 
We have therefore shown the following result. 
\begin{theorem}\label{thm:undecidability}
Problem~\ref{rse} is undecidable.
\end{theorem}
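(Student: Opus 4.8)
The plan is to finish the reduction from the halting problem of $2$-CMs (Problem~\ref{hp2cm}), undecidable by~\cite[Chapter~14]{minsky67}, to Problem~\ref{rse}. The reduction map is precisely the construction carried out above: from an arbitrary instance $M$ of Problem~\ref{hp2cm} one effectively builds $M_1$, then $M_2$, and couples it with the preorder $r_i = (H \times Q) \cup ((Q\smallsetminus H) \times (Q\smallsetminus H))$, producing the instance $(M_2, r_i)$ of Problem~\ref{rse}. Each transformation is a syntactic, computable manipulation of a finite machine description, so the map is effective; it thus remains only to prove that it is correct, namely that $M$ halts on $(0,0)$ if and only if every block of $\Psim$ meets $\post^*(\{(q_0,0,0)\})$.

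First I would record that the machine transformations preserve halting: $M$ halts on $(0,0)$ iff $M_1$ does iff $M_2$ does. The role of $M_1$ is to guarantee that every non-halting configuration in $(Q\smallsetminus H)\times\mathbb{N}^2$ has a successor that is again non-halting, realized by the increment/decrement gadget routing through the fresh states $q_1$; the role of $M_2$ is to turn each halting state into a dead end. Neither step adds or removes a halting run from $(q_0,0,0)$, which is the only thing the halting question depends on.

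The crux, and the step I expect to be the main obstacle, is the identity $R_i = \Rsim$. Since $\Rsim$ is by definition the greatest simulation contained in $R_i$, the inclusion $\Rsim \subseteq R_i$ is immediate, and the real work is to show that $R_i$ is itself a simulation w.r.t.\ $R_i$, which forces $R_i \subseteq \Rsim$. As $L = \{a\}$ is a singleton and $R_i$ is reflexive, this reduces to verifying the stability condition $R_i(\mathit{conf}_x) \subseteq \pre(R_i(\mathit{conf}_y))$ for every transition $\mathit{conf}_x \rightarrow \mathit{conf}_y$. I would dispatch this by the case analysis sketched above: if $\mathit{conf}_x$ is halting there is no such transition (by $M_2$), so the condition is vacuous; if $\mathit{conf}_x$ is non-halting then $R_i(\mathit{conf}_x) = (Q\smallsetminus H)\times\mathbb{N}^2$, and whether $\mathit{conf}_y$ is halting (so $R_i(\mathit{conf}_y) = Q\times\mathbb{N}^2$) or non-halting (so $R_i(\mathit{conf}_y) = (Q\smallsetminus H)\times\mathbb{N}^2$), stability holds because the $M_1$ gadget ensures every non-halting configuration has, respectively, some successor and some non-halting successor. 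The delicate point is exactly that these cases go through only thanks to the two transformations, which is why $M_1$ and $M_2$ are built into the reduction.

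Finally, from $R_i = \Rsim$ I would read off $\Psim = \{H\times\mathbb{N}^2,\ (Q\smallsetminus H)\times\mathbb{N}^2\}$ as in~\eqref{eq:reduction_psim_finite}, a finite two-block partition, establishing in particular that undecidability persists even under the assumption that $\Psim$ is finite. The biconditional then closes the argument: $\post^*(\{(q_0,0,0)\})$ meets $H\times\mathbb{N}^2$ iff $M_2$ reaches a halting configuration from $(q_0,0,0)$, equivalently iff $M$ halts; and since $q_0 \notin H$ the initial configuration already lies in $(Q\smallsetminus H)\times\mathbb{N}^2$, so that block is always reached. Hence both blocks are reachable iff $H\times\mathbb{N}^2$ is, iff $M$ halts, which is exactly the decision asked by Problem~\ref{rse}. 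As this effective reduction maps halting instances to yes-instances and non-halting instances to no-instances, the undecidability of Problem~\ref{hp2cm} transfers to Problem~\ref{rse}.
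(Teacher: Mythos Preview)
Your proposal is correct and follows essentially the same approach as the paper: the reduction via $M_1$ and $M_2$, the verification that $R_i$ is already a simulation (hence $R_i=\Rsim$) by the same case analysis on halting/non-halting source and target, the resulting two-block $\Psim$, and the final biconditional using $q_0\notin H$. Your write-up makes explicit a couple of points the paper leaves implicit (e.g., that $\Rsim\subseteq R_i$ is automatic and that stability of $R_i$ yields the reverse inclusion), but the argument is the same.
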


As a consequence, there exists no algorithm that, for a transition system $G$ and initial 
preorder $R_i$, is able to compute the reachable blocks of the simulation partition $\Psim$ of $G$ w.r.t.~$R_i$.
In particular, the proposed reduction considers a case where the simulation partition (that is, \eqref{eq:reduction_psim_finite}) is finite, meaning that even under the hypothesis that 
$\Psim$ consists of a finite set of blocks, the problem still remains undecidable.
This negative result is to be contrasted with the positive result of Lee and Yannakakis for the bisimulation partition $\Pbis$
stating that the LY algorithm terminates when \(\Pbis\) is finite \cite[Theorem~3.1 and the following paragraph therein]{leeOnlineMinimizationTransition1992}. 
Intuitively, the above reduction does not work for the corresponding problem replacing simulation with bisimulation because, in general, we cannot define a finite bisimulation that split halting and non-halting states as we did for \eqref{eq:reduction_psim_finite}.

\subsection{Complexity for Finite State Systems}\label{sec:complexity_finite}
We show a further key difference between the problems of deciding reachability of blocks of \(\Psim\) and \(\Pbis\) over \emph{finite} transition systems. These two problems are obviously both decidable, since we can simply compute independently \(\post^*(I)\) and $\Pbis$/$\Psim$, and, then check whether \(\post^*(I) \cap B=\varnothing\) holds for every block \(B\) in $\Pbis$ or $\Psim$.

For the case of bisimulation, deciding the reachability of $B\in \Pbis$ is solved in \(O(|\Pbis^{\post^*(I)}|)\) time by leveraging the definition of bisimulation: %
picking a pair of bisimilar states $x$ and $y$, i.e.\ such that \(\Pbis(x) = \Pbis(y)\),  it turns out that \(\{B\in \Pbis \mid \post(x)\cap B\neq \varnothing\}= \{B\in \Pbis \mid \post(y)\cap B\neq \varnothing\}\), namely, for $B\in \Pbis$, $x$ can reach $B$ in one transition if{}f $y$ can reach $B$ in one transition, hence picking one state per block of $\Pbis$ suffices. 

For the simulation case, deciding the reachability of $B\in\Psim$ is more involved than in the bisimulation case. 
As the following example suggests, to decide whether $B\in\Psim$ is reachable it seems unavoidable to have to decide whether there is a path of arbitrary length in the transition system reaching~\(B\).

\begin{example}\label{ex:unavoidable}
Let $G_1$, $G_2$ be the following transition systems (resp., left and right in the picture), 
and consider $G_1$  where $I=\{1\}$ and $L$ is a singleton (so the label is omitted).
\tikz[baseline=(1.south),auto]{
      \tikzstyle{arrow}=[->,>=latex']
      \path       
      (0,0) node[name=1]{1}
      (1,0) node[name=2]{2}
      (2,0) node[name=3]{3}
      (3,0) node[name=4]{$\cdots$}
      (4,0) node[name=7]{$n$}
      (5,0) node[name=00]{0}
      ;
      \node (C1) [left of = 1] {};
      \draw[color=blue!80,thick,arrow={-Stealth},shorten >=-3pt, shorten <=3pt] (C1) to (1);
      \draw[arrow,shorten >=0pt, shorten <=0pt] (1) to (2);
      \draw[arrow,shorten >=0pt, shorten <=0pt] (2) to (3);
      \draw[arrow,shorten >=0pt, shorten <=0pt] (3) to (4);
      \draw[arrow,shorten >=0pt, shorten <=0pt] (4) to (7);
      \draw[arrow] (7) edge[loop above] node[right] {} (7);
      \draw[arrow,shorten >=-2pt, shorten <=-2pt] (7) to (00);
                        
     \path (1.north west) ++(-0.1,0.1) node[name=a1]{} (7.south east) ++(0.1,-0.1) 
      node[name=a2]{};
     \draw[rounded corners=6pt] (a1) rectangle (a2);
     
     \path (00.north west) ++(-0.1,0.1) node[name=z1]{} (00.south east) ++(0.1,-0.1) 
      node[name=z2]{};
     \draw[rounded corners=6pt] (z1) rectangle (z2);
}
  \qquad   
\tikz[baseline=(1.south),auto]{
      \tikzstyle{arrow}=[->,>=latex']
      \path     
      (0,0) node[name=1]{1}
      (1,0) node[name=2]{2}
      (2,0) node[name=3]{3}
      (3,0) node[name=4]{$\cdots$}
      (4,0) node[name=7]{$n$}
      (5,0) node[name=00]{0}
      ;
      \node (C1) [left of = 1] {};
      \draw[color=blue!80,thick,arrow={-Stealth},shorten >=-3pt, shorten <=3pt] (C1) to (1);
      \draw[arrow,shorten >=0pt, shorten <=0pt] (1) to (2);
      \draw[arrow,shorten >=0pt, shorten <=0pt] (2) to (3);
      \draw[arrow,shorten >=0pt, shorten <=0pt] (3) to (4);
      \draw[arrow,shorten >=0pt, shorten <=0pt] (4) to (7);
      \draw[arrow] (7) edge[loop above] node[right] {} (7);

     \path (1.north west) ++(-0.1,0.1) node[name=a1]{} (7.south east) ++(0.1,-0.1) 
      node[name=a2]{};
     \draw[rounded corners=6pt] (a1) rectangle (a2);
     
     \path (00.north west) ++(-0.1,0.1) node[name=z1]{} (00.south east) ++(0.1,-0.1) 
      node[name=z2]{};
     \draw[rounded corners=6pt] (z1) rectangle (z2);
}

\medskip
\noindent
Here, we have that $\Psim$ w.r.t.\ $R_i=\Sigma\times\Sigma$ 
is given by the two boxes depicted in the left diagram, because $\Rsim(0)=\interval{0}{n}$ and, for all $k\in \interval{1}{n}$, $\Rsim(k)=\interval{1}{n}$, so that $\Psim=\{\interval{0}{0},\interval{1}{n}\}$ and, in turn, the block $\interval{0}{0}\in \Psim$ is actually reachable.

\noindent
Then, define \(G_2\) by removing the transition $n\sra 0$ from $G_1$.
In this case, $\Rsim$ and, therefore, $\Psim$  remain the same as above, however the block $\interval{0}{0}\in\Psim$ becomes unreachable.  

\noindent
Hence, to distinguish whether the block \(\interval{0}{0}\) is reachable or not in these two instances $G_1$ and $G_2$, we have to detect that in \(G_1\) the state \(0\) is reachable, while in \(G_2\) it is not.

\noindent
Let us remark that this is not the case for bisimulation, because we have that $\Pbis=\{\interval{0}{0},\interval{1}{1},\interval{2}{2},\ldots,\interval{n}{n}\}$ for $G_1$, 
while $\Pbis=\Psim$ for $G_2$.  \lipicsEnd
\end{example}

Let us now turn Example~\ref{ex:unavoidable} into a formal complexity argument by showing that, for finite transition systems with a two block (bi)simulation partition (this same argument of course applies to any given finite number of blocks), deciding whether both blocks are reachable is NL-hard for the simulation partition while it is in L for the bisimulation partition. This problem is formally stated as follows.

\begin{problem}\ \label{rsdpfs} \\ \
\textsc{Given:} A finite transition system \(G=(\Sigma,I,L,\mathord{\sra})\) and a two block simulation partition \(\Psim=\{B_0,B_1\}\) of \(G\).\\%
\textsc{Decide:} \(\post^*(I)\cap B_0 \neq \varnothing\) and \(\post^*(I)\cap B_1 \neq \varnothing\)\,? \lipicsEnd
\end{problem}

\begin{restatable}{theorem}{nlhardnesslmembership}
\label{thm:nlhardness}
Problem~\ref{rsdpfs} is {\rm{NL}}-hard, while the analogous decision problem 
for bisimulation partition is in {\rm{L}}. 
\end{restatable}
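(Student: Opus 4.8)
The plan is to prove the two assertions separately. For NL-hardness I would give a logspace reduction \emph{from} the $st$-connectivity problem \textsc{STCON} (given a directed graph $D=(V,E)$ and vertices $s,t$, is $t$ reachable from $s$?), which is NL-complete. Given such an instance I build a single-labelled transition system $G$ on state set $\Sigma = V \cup \{\bot\}$, with $\bot$ a fresh sink, $I=\{s\}$, and transition relation obtained from $E$ by (i) keeping every edge of $E$, (ii) adding a self-loop $v \sra v$ for every $v\in V$, and (iii) adding the single edge $t \sra \bot$. Taking the initial preorder $R_i=\Sigma\times\Sigma$, I claim the simulation partition is exactly $\Psim=\{V,\{\bot\}\}$, with $B_1\ud V$ and $B_0\ud\{\bot\}$. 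To establish this I would exhibit the candidate $R\ud (V\times V)\cup(\{\bot\}\times\Sigma)$ and verify it is a simulation: for a pair $(u,x)$ with $u,x\in V$, any move of $u$ goes either to some $u'\in V$ or (only when $u=t$) to $\bot$, and in both cases the self-loop $x\sra x$ matches it since $(u',x)\in R$ and $(\bot,x)\in R$; pairs $(\bot,x)$ are vacuous as $\bot$ is a sink. Maximality is immediate: $\bot$ is simulated by every state (being a sink) but simulates no $v\in V$ (each $v$ has the move $v\sra v$ that $\bot$ cannot match), so no simulation can add a pair $(v,\bot)$; hence $\Rsim=R$ and $\Psim=\{V,\{\bot\}\}$.

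Correctness of the reduction then follows from the observation that $B_1=V$ is always reachable (it contains $s$), whereas $B_0=\{\bot\}$ is reachable if and only if $\bot\in\post^*(\{s\})$. Since the only edge entering $\bot$ is $t\sra\bot$, and since the added self-loops leave reachability among $V$-states unchanged, $\bot$ is reachable in $G$ exactly when $t$ is reachable from $s$ in $D$. Thus both blocks are reachable iff $(D,s,t)$ is a yes-instance of \textsc{STCON}, and the construction---adding one state, $\card{V}$ self-loops, one edge, and printing the two blocks $V$ and $\{\bot\}$---is plainly computable in logarithmic space, giving NL-hardness. (A single label suffices, so hardness for the general problem follows.)

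For the bisimulation side I would use the key property noted just before the statement: if $\Pbis(x)=\Pbis(y)$ then $x$ and $y$ reach exactly the same blocks in one step. Consequently the $\exists\exists$ and $\forall\exists$ quotients coincide, and I would form the two-node quotient graph with $B\Rightarrow B'$ iff some (equivalently every) state of $B$ has a successor in $B'$, together with the set of initial blocks $\{\Pbis(s)\mid s\in I\}$. The decisive lemma is that for every block $B$, $B\cap\post^*(I)\neq\varnothing$ iff $B$ is reachable from an initial block in the quotient graph: the forward direction just projects a witnessing path, while the backward direction lifts a quotient path $B_{i_0}\Rightarrow\cdots\Rightarrow B_{i_k}$ to a concrete path by induction, using the bisimulation property to replace each abstract step $B_{i_j}\Rightarrow B_{i_{j+1}}$ by a concrete successor inside $B_{i_{j+1}}$. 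Since the quotient has only two nodes, its edge relation and the set of initial blocks are computed by a single scan of the input in logarithmic space, and reachability in a constant-size graph is trivially in L; composing these steps keeps the whole procedure in L.

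The main obstacle is the hardness direction, specifically guaranteeing that the constructed system has \emph{exactly} two simulation blocks: the self-loop gadget forces all of $V$ into one similarity class, but the edge $t\sra\bot$ must be handled carefully, since a priori it could split $t$ off from the other $V$-states---the verification above shows it does not, precisely because every $V$-state can absorb the move to the sink via its own self-loop. On the membership side the only non-routine point is the backward (path-lifting) implication, which is exactly where the bisimulation hypothesis is essential and fails for mere simulation, mirroring the contrast exhibited in Example~\ref{ex:unavoidable}.
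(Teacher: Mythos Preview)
Your proposal is correct. Both directions go through, and your verification that $R=(V\times V)\cup(\{\bot\}\times\Sigma)$ is exactly $\Rsim$ is sound; the self-loop on every $v\in V$ is precisely what lets any $x\in V$ absorb a move of $u$ (including $t\sra\bot$), so $t$ is not split off.

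The route differs from the paper's mainly in the hardness gadget. The paper reduces from st-connectivity in \emph{leveled} directed graphs: since $t$ sits in the last level it already has no outgoing edges, so no fresh sink is needed, and the two blocks $\{t\}$ and $N\smallsetminus\{t\}$ are forced not by the trivial preorder but by a tailored initial preorder $R_i=(\{t\}\times N)\cup((N\smallsetminus\{t\})\times(N\smallsetminus\{t\}))$, which is then shown to equal $\Rsim$. Your construction instead works from general \textsc{STCON}, adds a dedicated sink $\bot$, and starts from $R_i=\Sigma\times\Sigma$; this trades one extra state for the simpler initial relation and avoids the detour through leveled graphs. Both are logspace and yield exactly two simulation blocks. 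On the L side the paper argues directly that (with $I\subseteq B_0$) a single scan for a transition $B_0\to B_1$ decides the question, because by bisimilarity one such edge exists iff \emph{every} state of $B_0$ has one; your quotient-graph formulation says the same thing once specialised to two nodes, and your path-lifting lemma is exactly the ingredient the paper's one-line argument is using implicitly.
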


The NL-hardness proof reduces from the st-connectivity problem in leveled directed graphs, while the L membership proof crucially leverages the definition of bisimulation.
Intuitively, the situation described in Example~\ref{ex:unavoidable} cannot occur anymore for the case of bisimulation because its definition mandates that if a state of the block \(\interval{1}{n}\) has a transition to a state in \([0]\) then each state of \(\interval{1}{n}\) has to have a transition to \([0]\).
Roughly speaking, it turns out that if there is a path of arbitrary length then there is path of length one.
A detailed proof of Theorem~\ref{thm:nlhardness} is given in Appendix~\ref{sec:app-complexity}.

\begin{algorithm}[tbh]
\caption{\textsf{Preorder-based Algorithm}}\label{algo:explicit_sound}
 { \color{green!40!black}
\KwIn{A transition system \(G=(\Sigma,I,L,\mathord{\sra})\), an initial preorder \(R_i\in \PreO(\Sigma)\), an initial finite set \(\sigma_i \subseteq \post^*(I)\).}
}
\(\Rel(\Sigma) \ni R := R_i\)\;
\(\wp(\Sigma) \ni \sigma:=\sigma_i\)\;
\While(){\True}{ 
  { \color{blue!75!black} {\tcp{{\rm \textsc{Inv\textsubscript{1}}}: \(\forall x\in \Sigma\ldotp \Rsim(x) \subseteq R(x)\subseteq R_i(x)\)}}
  \tcp{{\rm \textsc{Inv\textsubscript{2}}}: \(\sigma_i \subseteq \sigma \subseteq \post^*(I)\)}
  \tcp{{\rm \textsc{Inv\textsubscript{3}}}: \(\forall x\in\Sigma\ldotp x\in R(x)\)}
  }
\(U:= \{R(x) \mid R(x) \cap \sigma = \varnothing,\, R(x) \cap (I\cup \post(\sigma))\neq \varnothing\)\};

\(V:= \{\tuple{a,x,x'} \in L \times \Sigma^2 \mid R(x) \cap \sigma \neq \varnothing, x\sraa{a} x',\, R(x) \not\subseteq  \pre_a(R(x'))\}\)\;

\Nif{\label{loc:ex_nif_begin}
	\((U \neq \varnothing) \longrightarrow \mathit{Search}:\) \label{loc:ex_search_begin}\\ 
	\Indp
		\Choose \(R(x)\in U,\, s \in  R(x)\cap (I \cup \post(\sigma))\)\;
		\(\sigma:= \sigma \cup \{s\}\)\;\label{loc:ex_sigmaupdate} \label{loc:ex_search_end}
	\Indm

  \((V \neq \varnothing) \longrightarrow \mathit{Refine}:\) \label{loc:ex_refine_begin}\\   
	\Indp
		\Choose \(\tuple{a, x,x'}\in V\)\;
		\( R(x) := R(x) \cap \pre_a(R(x'))\)\;\label{loc:ex_refine_principal} \label{loc:ex_refine_end}
	\Indm
	
  \((U = \varnothing \land V = \varnothing) \longrightarrow {}\)\Return{\(\tuple{R, \sigma}\)}\label{loc:ex_return}\;
}\label{loc:ex_nif_end}

}
\end{algorithm}

\section{A Reachable Simulation Algorithm}\label{sec:soundonline}
We put forward Algorithm~\ref{algo:explicit_sound} which, given a transition system \(G\), an initial preorder \(R_i\) and an initial set of reachable nodes \(\sigma_i\), where \(\sigma_i\) can be empty, computes 
the reachable principals of \(\Rsim\) according to definition~\eqref{eq:principals_intersect}, along with a sound overapproximation of the reachable blocks of \(\Psim\). 

This algorithm maintains a current relation $R\in\Rel(\Sigma)$ specified through its principals $R(x)\in \wp(\Sigma)$, and a set $\sigma\subseteq\Sigma$ containing states reachable from $I$, so that $R^\sigma = \{R(x) \mid R(x)\cap \sigma \neq \varnothing\}$ includes the principals of $R$ which are currently known to be reachable, i.e., containing a reachable state. 
The algorithm computes the set $U$ of principals that can be added to $R^\sigma$ and the set $V$ representing unstable pairs of principals.
A principal $R(x)$ belongs to $U$ if it contains an initial state or a successor of a state already known to be reachable.
A triple $\tuple{a,x,x'}$ belongs to $V$ if $R(x)$ is known  to be reachable and it can be refined by $R(x')$, i.e., $x\sraa{a} x'$ and $R(x) \not\subseteq \pre_a(R(x'))$. 
Algorithm~\ref{algo:explicit_sound} is presented  in \emph{logical form}, meaning that in this pseudocode we do not require or provide a specific representation for the transition system $G$ or for the sets maintained by the algorithm, namely the state relation $R$, the set of reachable states $\sigma$, and the sets $U$ and $V$. 
The questions around specific representations will be addressed in Section~\ref{sec:2pr}.

Algorithm~\ref{algo:explicit_sound} either updates the reachability information for some principal in
$U$ or stabilizes the pair of principals associated to $\tuple{a,x,x'}\in V$ by refining $R(x)$. 
The pseudocode of Algorithm~\ref{algo:explicit_sound}  uses a \(\KwSty{nif}\) statement %
which is a nondeterministic choice between guarded commands.
In that \(\KwSty{nif}\) statement we have three guarded commands: 
either the \emph{Search} (lines~\ref{loc:ex_search_begin}--\ref{loc:ex_search_end}) or the \emph{Refine} procedures (lines~\ref{loc:ex_refine_begin}--\ref{loc:ex_refine_end}) are executed, or, at line~\ref{loc:ex_return}, when 
both their guards are false, 
the return statement is taken. 
Thus, every execution consists of an arbitrary interleaving of \emph{Search} and \emph{Refine}, possibly followed by the return at line~\ref{loc:ex_return}.
Observe that the guards are such that when the algorithm terminates neither \emph{Search} nor \emph{Refine} are enabled. 

A principal $R(x)$ is refined at line~\ref{loc:ex_refine_principal} only if it is known to be currently reachable. 
Upon termination, $R^\sigma$ turns out to be precisely the set of reachable principals of $\Rsim$ (cf.~\eqref{corr-stm-expl_sound} below).
However, $R$ may well contain unstable principals, so that, in general, $R$  and $\Rsim$ do not coincide.
Turning to the simulation partition, Algorithm~\ref{algo:explicit_sound} yields a sound overapproximation of \(\Psim^{\post^*(I)}\) (cf.~\eqref{corr-stm-expl_sound2} below).

\begin{restatable}[\textbf{Correctness of Algorithm~\ref{algo:explicit_sound}}]{theorem}{correctnessExplicit}
\label{thm:explicit_sound}
Let $\tuple{R,\sigma}\in \Rel(\Sigma)\times\wp(\Sigma)$ be the output of Algorithm~\ref{algo:explicit_sound} on input \(G\) with \(|\Sigma|\in\bN\), $R_i\in \PreO(\Sigma)$ and \(\sigma_i \subseteq \post^*(I)\).
  Let $\Rsim\in \PreO(\Sigma)$ and $\Psim\in \Part(\Sigma)$ be, resp., the simulation preorder and partition w.r.t.\ $R_i$. 
  Let \(P\ud \{y \in \Sigma \mid R(y) = R(x)\}_{x\in \Sigma}\in \Part(\Sigma)\).
  Then:
  \begin{align}
    \Rsim^{\post^*(I)} &= R^\sigma,\tag{1.a}\label{corr-stm-expl_sound}\\
    \Psim^{\post^*(I)}
                       &\subseteq
                       \{B \in P \mid R(B)\cap \sigma \neq \varnothing\}\enspace . \tag{1.b}\label{corr-stm-expl_sound2}
  \end{align} 
\end{restatable}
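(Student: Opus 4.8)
The plan is to analyze the values of \(R\) and \(\sigma\) at termination, where the exit condition guarantees \(U=V=\varnothing\). First I would establish the three loop invariants \textsc{Inv\textsubscript{1}}, \textsc{Inv\textsubscript{2}}, \textsc{Inv\textsubscript{3}} by induction on the number of iterations. The base case is immediate since \(R=R_i\) is a reflexive preorder with \(\Rsim\subseteq R_i\), and \(\sigma=\sigma_i\subseteq\post^*(I)\). For the inductive step, the \emph{Search} branch only enlarges \(\sigma\) by a state \(s\in I\cup\post(\sigma)\), which lies in \(\post^*(I)\) by \textsc{Inv\textsubscript{2}}; the \emph{Refine} branch replaces \(R(x)\) by \(R(x)\cap\pre_a(R(x'))\), where the only nonroutine point is \(\Rsim(x)\subseteq\pre_a(R(x'))\), which follows from the simulation property of \(\Rsim\) together with \(\Rsim(x')\subseteq R(x')\) (i.e.\ \textsc{Inv\textsubscript{1}} at \(x'\)), while reflexivity survives because \(x\sraa{a}x'\) and \(x'\in R(x')\) give \(x\in\pre_a(R(x'))\). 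Since \(\lvert\Sigma\rvert\in\bN\), each \emph{Search} strictly enlarges \(\sigma\subseteq\Sigma\) and each \emph{Refine} strictly decreases \(\sum_x\lvert R(x)\rvert\le\lvert\Sigma\rvert^2\), so the algorithm terminates and the output is well defined.

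The crux is a closure property holding at termination: calling a principal \emph{marked} when \(R(x)\cap\sigma\neq\varnothing\), I claim that if \(R(x)\) is marked and \(x\sraa{a}x'\) then \(R(x')\) is marked. Indeed, picking \(s\in R(x)\cap\sigma\), stability \(V=\varnothing\) gives \(s\in R(x)\subseteq\pre_a(R(x'))\), so \(s\sraa{a}s'\) with \(s'\in R(x')\cap\post(\sigma)\); were \(R(x')\) unmarked it would lie in \(U\), contradicting \(U=\varnothing\). From this I would extract two facts. First, \(R_m\ud\bigcup\{\{x\}\times R(x)\mid R(x)\text{ marked}\}\) is a simulation w.r.t.\ \(R_i\): the inclusion \(R_m\subseteq R_i\) is \textsc{Inv\textsubscript{1}}, and the transfer condition follows from stability together with the closure property, which ensures the target principal is again marked; hence \(R_m\subseteq\Rsim\), and with \textsc{Inv\textsubscript{1}} this yields \(R(x)=\Rsim(x)\) for every marked principal. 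Second, by induction on reachability distance (base case \(x\in I\) handled by \(U=\varnothing\) and reflexivity, inductive step by the closure property) every reachable \(w\in\post^*(I)\) has a marked principal, so \(R(w)=\Rsim(w)\).

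For \eqref{corr-stm-expl_sound}, the inclusion \(R^\sigma\subseteq\Rsim^{\post^*(I)}\) is direct, as a marked \(R(x)=\Rsim(x)\) already meets \(\sigma\subseteq\post^*(I)\). For the reverse inclusion, given \(\Rsim(x)\) with a reachable \(w\in\Rsim(x)\), transitivity of \(\Rsim\) gives \(\Rsim(w)\subseteq\Rsim(x)\); since \(w\) is reachable, \(R(w)=\Rsim(w)\) is marked and contains some \(s\in\sigma\), and then \(s\in\Rsim(w)\subseteq\Rsim(x)\subseteq R(x)\) shows \(R(x)\) is marked, whence \(R(x)=\Rsim(x)\in R^\sigma\). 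For \eqref{corr-stm-expl_sound2}, I take \(B\in\Psim^{\post^*(I)}\) with a reachable representative \(z\in B\), so \(R(z)=\Rsim(z)\) is marked; for any \(y\in B\) one has \(\Rsim(y)=\Rsim(z)\), hence \(R(y)\supseteq\Rsim(y)=R(z)\) is marked and therefore \(R(y)=\Rsim(y)=R(z)\), while conversely \(R(y)=R(z)\) forces \(\Rsim(y)=\Rsim(z)\), i.e.\ \(y\in B\). Thus \(B=\{y\mid R(y)=R(z)\}\) is exactly a block of \(P\) with marked principal, giving the claimed inclusion.

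I expect the main obstacle to be the reverse inclusion of \eqref{corr-stm-expl_sound}: a principal \(R(x)\) can be marked without \(x\) itself being reachable (cf.\ Example~\ref{ex:principals_inclusion_strict}), so ``marked'' and ``reachable state'' must be carefully kept apart, and the bridge between them is precisely the combination of transitivity of \(\Rsim\) with the identity \(R(w)=\Rsim(w)\) on reachable states. Establishing the closure property, and recognizing that it is exactly what upgrades \(R_m\) from a merely ``locally stable'' relation to a genuine simulation, is the load-bearing step on which all of the above rests.
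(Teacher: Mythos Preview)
Your proof is correct, and its overall architecture matches the paper's: establish the invariants, then at termination derive (a) every reachable state has a ``marked'' principal and (b) every marked principal equals the corresponding \(\Rsim\)-principal, and combine these for \eqref{corr-stm-expl_sound} and \eqref{corr-stm-expl_sound2}. The difference lies in how you prove (b). The paper argues by contradiction via the auxiliary procedure \(\Sim\) (Algorithm~\ref{basicsim}): assuming some marked principal \(R(x)\neq\Rsim(x)\), it runs \(\Sim\) on \(R\), picks the \emph{first} marked principal to be refined, and derives a contradiction by showing the principal used to refine it must already coincide with its value under \(R\). You instead package the marked part of \(R\) as a relation \(R_m\) and verify the simulation transfer condition directly, using stability (\(V=\varnothing\)) plus your closure property to land back in a marked principal. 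Your route is more self-contained—no need to reason about iterations of a separate algorithm—and makes explicit the role of the closure property as exactly the ingredient ensuring \(R_m\) is closed under the transfer condition. The paper's argument, on the other hand, generalizes a bit more readily to the infinite-state setting discussed in Appendix~\ref{sec:appendix_extension}, where termination of \(\Sim\) is replaced by a weaker assumption about some finite refinement sequence; your \(R_m\) argument would work there verbatim, which is arguably an advantage.
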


The reader might find surprising that \eqref{corr-stm-expl_sound2} is not \(\Psim^{\post^*(I)} \subseteq P^\sigma\).
Indeed, \(\Psim^{\post^*(I)} \subseteq P^\sigma\) does not hold in general as shown by the following example:

\begin{example}\label{ex:Psigmanotenough}
Let us consider the system $(\Sigma=\{1,2\},I=\{1\},L,\mathord{\sra}=\{(1,2)\})$, where \(L\) is a singleton, as depicted below.

{\centering
\tikz[baseline=(1.south),auto]{
      \tikzstyle{arrow}=[->,>=latex']
      \path      
        (-1,0) node[name=1]{1}
        (0,0) node[name=2]{2}
        ;
      \node (C1) [left of = 1] {};
      \draw[color=blue!80,thick,arrow={-Stealth},shorten >=-3pt, shorten <=3pt] (C1) to (1);
      
      \draw[arrow,shorten >=0pt,shorten <=0pt,] (1) to (2);
      
           \path (1.north west) ++(-0.1,0.1) node[name=a1]{} (1.south east) ++(0.1,-0.1) 
      node[name=a2]{};
     \draw[rounded corners=6pt] (a1) rectangle (a2);
     
          \path (0.north west) ++(-0.1,0.1) node[name=b1]{} (0.south east) ++(0.1,-0.1) 
      node[name=b2]{};
     \draw[rounded corners=6pt] (b1) rectangle (b2);
} 
\par}

\medskip
\noindent
Moreover, fix \(\sigma_i = \varnothing\) and $R_i=\{(1,1), (2,1), (2,2)\} = \Rsim$. 
Here, we have that $\post^*(I)=\{1, 2\}$, and $\Psim=\{\braces{1},\braces{2}\}$, so that the blocks of \(\Psim\) are depicted as boxes in the picture. 
Algorithm~\ref{algo:explicit_sound} on input $R_i$ and $\sigma_i$ returns  $R=\Rsim$, $\sigma=\{1\}$. 
It follows that \(\Psim^{\post^*(I)} = \Psim\), \(P^\sigma = \{\braces{1}\}\), so that \(\Psim^{\post^*(I)} \not\subseteq P^\sigma\) holds.
Since \(\{B \in P \mid R(B)\cap \sigma \neq \varnothing\} = \Psim\), we have that \eqref{corr-stm-expl_sound2} holds, as guaranteed by Theorem~\ref{thm:explicit_sound}. \lipicsEnd
\end{example}

Moreover, the following example shows that the containment of \eqref{corr-stm-expl_sound2} may be strict.
\begin{example}\label{ex:shortcoming}
Let us consider the system $(\Sigma=\{1,2\},I=\{1\},L,\mathord{\sra}=\{(1,1)\})$, where \(L\) is a singleton, as depicted below.

{\centering
\tikz[baseline=(1.south),auto]{
      \tikzstyle{arrow}=[->,>=latex']
      \path      
        (-1,0) node[name=1]{1}
        (0,0) node[name=2]{2}
        ;
      \node (C1) [left of = 1] {};
      \draw[color=blue!80,thick,arrow={-Stealth},shorten >=-3pt, shorten <=3pt] (C1) to (1);
      
      \draw[arrow] (1) edge[loop right] node[right] {} (1);
      
           \path (1.north west) ++(-0.1,0.1) node[name=a1]{} (1.south east) ++(0.1,-0.1) 
      node[name=a2]{};
     \draw[rounded corners=6pt] (a1) rectangle (a2);
     
          \path (0.north west) ++(-0.1,0.1) node[name=b1]{} (0.south east) ++(0.1,-0.1) 
      node[name=b2]{};
     \draw[rounded corners=6pt] (b1) rectangle (b2);

} 
\par}

\medskip
\noindent
Also, let \(\sigma_i = \varnothing\) and $R_i=\{1,2\}\times \{1,2\}$. 
Here, it turns out that $\post^*(I)=\{1\}$, $\Rsim(1)=\{1\}$, $\Rsim(2)=\{1,2\}$, so that $\Psim=\{\{1\},\{2\}\}$, whose blocks are depicted as boxes in the picture. 
Algorithm~\ref{algo:explicit_sound} on input $R_i$ and $\sigma_i$ outputs  $R=\Rsim$, $\sigma=\{1\}$. 
Thus, the containment \eqref{corr-stm-expl_sound2} turns out to be strict since $\{\{1\}\}\subsetneq \{\{1\}, \{2\}\}$.\lipicsEnd
\end{example}

It turns out that Algorithm~\ref{algo:explicit_sound} always terminates on finite state systems.
\begin{restatable}[\textbf{Termination of Algorithm~\ref{algo:explicit_sound}}]{theorem}{terminationExplicit}
\label{thm:algo_sound_finite_termination}
  Let \(G = (\Sigma, I,L, \sra)\) with \(|\Sigma|\in\bN\), \(R_i\in \PreO(\Sigma)\) and \(\sigma_i \subseteq \post^*(I)\). Then, Algorithm~\ref{algo:explicit_sound} terminates on input \(G\), \(R_i\), and \(\sigma_i\).
\end{restatable}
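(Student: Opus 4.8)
The plan is to exhibit a simple ranking argument built on the observation that the two quantities modified by the loop body, namely the set $\sigma$ of known‑reachable states and the relation $R$, evolve monotonically and in opposite directions throughout any execution. Indeed, $\sigma$ can only grow (the \emph{Search} branch performs $\sigma := \sigma \cup \{s\}$ and \emph{Refine} leaves $\sigma$ untouched), while $R$ can only shrink (the \emph{Refine} branch performs $R(x) := R(x) \cap \pre_a(R(x'))$ and \emph{Search} leaves $R$ untouched). Since $|\Sigma|\in\bN$, both $|\sigma| \le |\Sigma|$ and $|R| = \sum_{x\in\Sigma} |R(x)| \le |\Sigma|^2$ are finite and bounded below by $0$, so it suffices to show that each branch makes strict progress on its own quantity and then bound the number of times each branch can fire.

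First I would show that every execution of \emph{Search} strictly increases $|\sigma|$. When \emph{Search} fires we have $U \neq \varnothing$ and pick $R(x) \in U$ together with $s \in R(x) \cap (I \cup \post(\sigma))$; since the condition defining $U$ requires $R(x) \cap \sigma = \varnothing$ and $s \in R(x)$, we get $s \notin \sigma$, whence the assignment at line~\ref{loc:ex_sigmaupdate} yields a strictly larger set. As $\sigma$ is never decreased elsewhere, \emph{Search} can be executed at most $|\Sigma|$ times over the whole run. Dually, every execution of \emph{Refine} strictly decreases $|R|$: when it fires we have $V \neq \varnothing$ and pick $\tuple{a,x,x'} \in V$, and the defining condition $R(x) \not\subseteq \pre_a(R(x'))$ provides a witness $y \in R(x) \setminus \pre_a(R(x'))$, so $R(x) \cap \pre_a(R(x')) \subsetneq R(x)$; the assignment at line~\ref{loc:ex_refine_principal} therefore removes at least $y$ from the principal of $x$ while leaving every other principal unchanged, dropping $|R|$ by at least one. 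Since $R$ starts at $R_i$ and is never enlarged, \emph{Refine} can be executed at most $|R_i| \le |\Sigma|^2$ times over the whole run.

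To conclude, I would invoke the semantics of the nondeterministic $\KwSty{nif}$: whenever $U \neq \varnothing$ or $V \neq \varnothing$ the guard of the return command is false, so one of \emph{Search} or \emph{Refine} is enabled and fires; hence every iteration that does not return consumes one of the (at most $|\Sigma|+|\Sigma|^2$) available \emph{Search}/\emph{Refine} steps. After finitely many iterations these are exhausted, so the execution necessarily reaches a state with $U = V = \varnothing$, at which point the guarded command at line~\ref{loc:ex_return} returns $\tuple{R,\sigma}$. This reasoning is independent of how the nondeterministic choices in the $\KwSty{nif}$ are resolved, so every execution terminates. Equivalently, the pair $(|\Sigma| - |\sigma|,\, |R|)$, ordered lexicographically over $\bN \times \bN$, is a well-founded ranking function that strictly decreases at every iteration.

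I do not expect a genuine obstacle here: the only point requiring care is that $\sigma$ and $R$ move in opposite directions, so no single scalar measure decreases — one must either bound the two branch counts separately (as above) or combine the two quantities lexicographically with reachability progress as the dominant component. Both formulations rely essentially on the finiteness hypothesis $|\Sigma| \in \bN$, which is precisely the assumption that breaks down in the infinite-state setting where Theorem~\ref{thm:undecidability} already shows that no such terminating procedure can exist in general.
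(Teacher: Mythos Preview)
Your proposal is correct and follows essentially the same approach as the paper's own proof: bound the number of \emph{Search} iterations by the finite size of $\Sigma$ (since each one strictly enlarges $\sigma$) and the number of \emph{Refine} iterations by the total size of the relation (since each one strictly shrinks some principal), then conclude. Your version is slightly more explicit in justifying why the added $s$ is fresh and why the refined principal strictly shrinks, and you add the lexicographic-ranking reformulation, but the core argument is the same.
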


Algorithm~\ref{algo:explicit_sound} may also correctly terminate on infinite state systems, as shown in the next example.
In Appendix~\ref{sec:appendix_extension} we discuss how the correctness proof of Theorem~\ref{thm:explicit_sound} can be extended 
to infinite state systems.
\begin{example}\label{ex:algo_sound_termination}
  Consider the following transition system with infinitely many states where \(I=\{0\}\) is the initial state, \(L\) is a singleton, and the boxes gather states sharing the same principal in \(R_i\).  

{\centering
\tikz[baseline=(1.south),auto]{
      \tikzstyle{arrow}=[->,>=latex']
      \path      
        (1,0) node[name=0]{0}
        (2,0) node[name=1]{1}
        (-3.2,0) node[name=dots]{$\cdots$}
        (-2,0) node[name=m2]{$-2$}
        (-0.8,0) node[name=m1]{$-1$}
        ;
      \node (C1) [left of = 0] {};
      \draw[color=blue!80,thick,arrow={-Stealth},shorten >=-3pt, shorten <=3pt] (C1) to (0);
      \draw[arrow,shorten >=0pt, shorten <=0pt] (0) to (1);
      \draw[arrow,shorten >=0pt, shorten <=0pt] (m2) to (m1);
      \draw[arrow,shorten >=0pt, shorten <=0pt] (dots) to  (m2);

      \path (0.north west) ++(-0.1,0.1) node[name=a1]{} (0.south east) ++(0.1,-0.1) 
        node[name=a2]{};
      \draw[rounded corners=6pt] (a1) rectangle (a2);

      \path (1.north west) ++(-0.1,0.1) node[name=a1]{} (1.south east) ++(0.1,-0.1) 
        node[name=a2]{};
      \draw[rounded corners=6pt] (a1) rectangle (a2);

      \path (m2.north west) ++(-1.2,0.1) node[name=z1]{} (m1.south east) ++(0.1,-0.1) 
        node[name=z2]{};
      \draw[rounded corners=6pt] (z1) rectangle (z2);
}
\par}

\medskip
\noindent
We fix the inputs of Algorithm~\ref{algo:explicit_sound} to be \(\sigma_i=\varnothing\) and  the initial preorder \(R_i\) is given by \(R_i(0) = \{0\}\), \(R_i(1) = \{0,1\}\), and \(R_i(n) = \interval[open left]{-\infty}{-1}\) for \(n < 0\). 
  The simulation preorder is therefore: \(\Rsim(0) = \{0\}\), \(\Rsim(1) = \{0,1\}\), and \(\Rsim(n) = \interval[open left]{-\infty}{n}\) for each \(n>0\). 
  Notice that \(\Rsim\) has infinitely many principals.
    Executing Algorithm~\ref{algo:explicit_sound}, we get \(\sigma = \{0,1\}\) after two \emph{Search} iterations.
  At this point, \(V = \varnothing\) and \(U = \varnothing\) holds, and the algorithm terminates and returns the correct result.

\noindent
  Consider instead the process of refining each principal \(R_i(x)\) such that \(R_i(x) \neq \Rsim(x)\).
  This process, which converges to \(\Rsim\), does not terminate after finitely many steps since infinitely many principals need to be refined.\lipicsEnd
\end{example}

\subsection{An Algorithm for the Reachable Simulation Partition}\label{subsec:partition}

Since, for finite state systems, it is possible to compute precisely the reachable blocks of the simulation partition in a na\"{\i}ve way (by first computing the simulation partition blocks, and then filtering out the ones containing no reachable state), it is a natural question to ask whether Algorithm~\ref{algo:explicit_sound} can be modified to compute precisely those blocks.
We designed a modified version of Algorithm~\ref{algo:explicit_sound} that computes precisely the reachable blocks of \(\Psim\), i.e., the output of this modified algorithm turns the sound containment \eqref{corr-stm-expl_sound2} of Theorem~\ref{thm:explicit_sound} into an equality. 
Intuitively, our modification changes the notions of reachability of principals for the sets \(U\) and \(V\), switching from that defined in~\eqref{eq:principals_intersect} to the one given by~\eqref{eq:principals_generated}.
Moreover, a third branch is added to the \(\KwSty{nif}\) statement of Algorithm~\ref{algo:explicit_sound}, which computes all the one-step successors of states in \(\sigma\) and checks whether the whole \(\post^*(I)\) set has been computed. 
Like Algorithm~\ref{algo:explicit_sound}, this modified version is guaranteed to terminate on finite state systems.

Let us recall that, due to the results given in Section~\ref{sec:complexity_finite}, computing the whole set \(\post^*(I)\) is, in general, unavoidable.
In particular, several features of this modified algorithm suggest that it terminates less often for infinite state systems compared to Algorithm~\ref{algo:explicit_sound}, because, 
intuitively, it relies more tightly on explicitly computing reachable states (e.g., some part of the algorithm explicitly relies on having computed the whole set \(\post^*(I)\)).
Therefore, we made the choice to focus  on Algorithm~\ref{algo:explicit_sound} in the main body of the paper, while
in Appendix~\ref{sec:completeonline} we provide the full modified algorithm as  Algorithm~\ref{algo:nonterminating_ex}, together with a more in-depth discussion of its features and formal correctness and termination results.

\section{2PR Triples for Designing a Symbolic Algorithm}\label{sec:2pr}\label{sec:symbolic_algo}
Symbolic approaches for simulation algorithms based on state partitions are known to be beneficial for algorithms manipulating infinite state systems, as shown by Henzinger et al.\@'s symbolic simulation algorithm for infinite graphs and, in particular, hybrid automata~\cite{henzingerComputingSimulationsFinite1995}, and have been proven to be advantageous in terms of space and time efficiency also for finite state systems \cite{Cece17, crt2011,Ranzato2013}.
Accordingly, we introduce the notion of 2-partitions-relation triple (2PR), generalizing the partition-relation pairs used in the most efficient symbolic simulation algorithms  \cite{Cece17,gpp03,ranzatoEfficientSimulationAlgorithm2010} as symbolic representation of a relation between system states.
We exploit here 2PRs to design a symbolic version of Algorithm~\ref{algo:explicit_sound}.
The rationale behind the need for 2PRs rather than partition-relation pairs (viz.\ 1PR) has more to do with enhancing the presentation and ease of understanding and less to do with fundamental limitations of partition-relation pairs.

\begin{definition}[\textbf{2PR Triple}]
\rm
Given a (possibly infinite) set \(\Sigma\), a triple \(\tuple{P, \tau, Q}\) with \(P, Q \in \Part(\Sigma)\) and \(\tau \colon P \rightarrow \wp(Q)\), is a \emph{2-partitions-relation} (2PR) \emph{triple}. \lipicsEnd
\end{definition}

A relation \(R \in \Rel(\Sigma)\) induces a 2PR triple \(\tuple{P_R, \tau_R, Q_R}\) as follows: 
\begin{align*}
 P_R &\ud \{y \in \Sigma \mid R(x) = R(y)\}_{x \in \Sigma} \enspace, & \tau_R(B) &\ud \{C \in Q_R \mid C \subseteq R(B)\}\enspace,\\
 Q_R &\ud \{y \in \Sigma \mid R^{-1}(x) = R^{-1}(y)\}_{x \in \Sigma} \enspace. 
\end{align*}
Hence, the rationale for maintaining two separate partitions \(P\) and \(Q\) is that \(P\) keeps track of states having the same principal, while \(Q\) keeps track of states occurring in the same set of principals.

Conversely, a 2PR triple \(\tuple{P, \tau, Q}\) encodes a relation \(R_{\tuple{P, \tau, Q}} \in \Rel(\Sigma)\) defined as \(R_{\tuple{P, \tau, Q}}(x) \ud \cup\tau(P(x))\) (the union of the blocks returned by \(\tau\)), and a partition \(P_{\tuple{P, \tau, Q}} \in \Part(\Sigma)\) given by \(P_{\tuple{P, \tau, Q}}\ud \{y \in \Sigma \mid \cup\tau(P(y)) = \cup\tau(P(x))\}_{x\in \Sigma}\), such that \(P_{\tuple{P, \tau, Q}}\) turns out to be coarser than \(P\). 
Some additional properties of 2PR triples can be found in Appendix~\ref{sec:2prproperties}.

We put forward Algorithm~\ref{algo:symbolic_sound}, designed as a revision of Algorithm~\ref{algo:explicit_sound} based on representing 
the  current relation \(R\) as a 2PR triple.
Algorithm~\ref{algo:symbolic_sound} is in symbolic logical form, in the sense that it symbolically represents and processes state relations as 2PR triples \(\tuple{P, \tau, Q}\). Since the refinement process preserves reflexivity of the underlying relation (as it happens for Algorithm~\ref{algo:explicit_sound}), it follows that \(\lambda B\in P. \cup\!\tau(B)\) is an extensive operator.
During its execution, for each state \(x\), \(R_{\tuple{P, \tau, Q}}(x)\) is a set of states candidates to simulate \(x\), and all the states in the same block of \(P_{\tuple{P, \tau, Q}}(x)\) are candidates to be simulation equivalent.

\begin{algorithm}[tbh]
\caption{\textsf{2PR-based Algorithm}}\label{algo:symbolic_sound}
{ \color{green!40!black}
\KwIn{A transition system \(G=(\Sigma,I,L,\mathord{\sra})\), an initial preorder \(R_i\in \PreO(\Sigma)\), an initial finite set \(\sigma_i \subseteq \post^*(I)\)}}
\(\Part(\Sigma) \ni P := \{y \in \Sigma \mid R_i(x) = R_i(y)\}_{x \in \Sigma}\)\;\label{loc:sym_P_defin}
\(\Part(\Sigma) \ni Q := \{y \in \Sigma \mid R_i^{-1}(x) = R_i^{-1}(y)\}_{x \in \Sigma}\)\;\label{loc:sym_Q_defin}
\lForAll{$B\in P$}{\(\wp(Q)\ni \tau(B):=\{C \in Q \mid C \subseteq R_i(B)\}\)\label{loc:sym_tau_defin}} 
\(\wp(\Sigma)\ni \sigma:=\sigma_i\)\; 
\While(){\True}{ 
 { \color{blue!75!black}
  \tcp{{\rm \textsc{Inv\textsubscript{1}}}: \(\!\!\forall x\in \Sigma\ldotp\! \Rsim(x) \subseteq\! R_{\tuple{P, \tau, Q}}(x)\subseteq\! R_i(x)\)}
  \tcp{{\rm \textsc{Inv\textsubscript{2}}}: \(\!\!\sigma_i \subseteq \sigma \subseteq \post^*(I)\)}
  \tcp{{\rm \textsc{Inv\textsubscript{3}}}: \(\!\!\forall B\in P\ldotp B\subseteq \cup\tau(B)\)}
  }
\(U:= \{B\in P \mid \cup\tau(B) \cap \sigma = \varnothing,\, \cup\tau(B) \cap (I\cup \post(\sigma))\neq \varnothing\)\};

\(V:= \{\tuple{a,B,C} \in L \times P^2 \mid \cup\tau(B) \cap \sigma \neq \varnothing,\) 
\hspace*{11.35em}\(B \cap \pre_a(C) \neq \varnothing, \cup\tau(B) \not\subseteq 
\pre_a(\cup\tau(C))\}\)\;

\Nif{\label{loc:sym_nif_begin}
	\((U \neq \varnothing) \longrightarrow \mathit{Search}:\) \label{loc:sym_search_begin}\\ 
	\Indp
		\Choose \(B\in U,\, s \in  (\cup\tau(B) \cap (I\cup \post(\sigma)))\)\;
		\(\sigma:= \sigma \cup \{s\}\)\;\label{loc:sym_sigmaupdate} \label{loc:sym_search_end}
	\Indm

  \((V \neq \varnothing) \longrightarrow \mathit{Refine}:\) \label{loc:sym_refine_begin}\\   
	\Indp
		\Choose \(\tuple{a, B, C}\in V\); \(S := \pre_a(\cup\tau(C))\)\;
		\(B' := B \cap \pre_a(C)\); \(B'' := B \smallsetminus \pre_a(C)\)\;\label{loc:sym_refine_splitB_begin}
		\(P.replace(B, \{B', B''\})\)\;
		\(\tau(B') := \tau(B)\); \(\tau(B'') := \tau(B)\)\;\label{loc:sym_refine_splitB_end}
		\ForAll{\(X \in \{E \in \tau(B') \mid E \cap S \neq \varnothing, E \not\subseteq S\}\)\label{loc:sym_refine_Q_begin}}{
			\(Q.\mathit{replace}(X, \{X \cap S, X \smallsetminus S\})\)\;
      \lForEach{\(A\in P\)}{\(\tau(A).\mathit{replace}(X, \{X \cap S, X \smallsetminus S\})\)}
		}\label{loc:sym_refine_Q_end}
		\( \tau(B') := \{E \in \tau(B') \mid E \subseteq S\}\)\;\label{loc:sym_refine_principal} \label{loc:sym_refine_end}
	\Indm
	
  \((U = \varnothing \land V = \varnothing) \longrightarrow {}\)\Return{\(\tuple{P, \tau, Q, \sigma}\)}\label{loc:sym_return}\;
}\label{loc:sym_nif_end}

}
\end{algorithm}

Following the approach of Algorithm~\ref{algo:explicit_sound}, this symbolic procedure computes a set of principals \(\{\cup\tau(B) \mid \cup\tau(B) \cap \sigma \neq \varnothing\}_{B\in P}\) which are known to contain a reachable state. 
A block \(B\) is in \(U\) if \(\cup\tau(B)\) does not contain states that are currently known to be reachable, while
it contains either an initial state or a successor of a state which is known to be reachable.
Also, the set \(V\) contains unstable symbolic triples, that is: \(\tuple{a,B,C}\) is in \(V\) if{}f \(\cup\tau(B)\) is known to contain some reachable state and there exists \(b \in B, c \in C\) such that \(R_{\tuple{P, \tau, Q}}(b)\) is \(a\)-unstable w.r.t.\ \(R_{\tuple{P, \tau, Q}}(c)\).
Algorithm~\ref{algo:symbolic_sound} either updates the reachability information for some principal of a block in \(U\) executing the \emph{Search} procedure 
or stabilizes the pair of blocks associated to some triple in \(V\) executing the \emph{Refine} procedure.
The \emph{Refine} procedure \(a\)-stabilizes a pair of blocks \((B, C)\) by possibly splitting \(B\) into \(B \cap \pre_a(C)\) and \(B \smallsetminus \pre_a(C)\) (lines~\ref{loc:sym_refine_splitB_begin}--\ref{loc:sym_refine_splitB_end}), and then  refines the principal \(\cup\tau(B \cap \pre_a(C))\) at lines~\ref{loc:sym_refine_Q_begin}--\ref{loc:sym_refine_Q_end} by first splitting blocks of \(Q\) if they are known to contain states occurring in different sets of principals (for the current underlying relation \(R_{\tuple{P, \tau, Q}}\)), and, successively, by removing at line~\ref{loc:sym_refine_principal} all the blocks not contained in \(\pre_a(\cup\tau(C))\).

\subparagraph*{A Region Algebra based Implementation.}
To implement this algorithm, we must identify a suitable representation for blocks of states.
In particular, what we need is a \emph{region algebra}, in the same vein of those used by Henzinger et al.~\cite{henzingerClassificationSymbolicTransition2005}, consisting of a set \(\mathcal{R}\) of regions, endowed with the operations listed below, and an extension (or interpretation) function \(\exten{\cdot}\colon\mathcal{R}\sra\wp(\Sigma)\) mapping each region to a set of contained states.
Encoding the input \(R_i\) as a 2PR triple over regions consistently with lines~\ref{loc:sym_P_defin}--\ref{loc:sym_tau_defin}, allows us to identify adequate region algebras 
following the algorithm's structure, that is, we need region algebras satisfying the following requirements:
\begin{enumerate}
\item Given \(s \in \Sigma\) (or, at least, \(s \in \post^*(I)\)), and \(r \in \mathcal{R}\), decide if \(\post(s) \cap \exten{r} \neq \varnothing\) holds and return an element in the intersection unless it is empty;
\item For each region \(r\) and \(a \in L\), compute a region \(\pre_a(r)\) such that \(\exten{\pre_a(r)} = \pre_a(\exten{r})\);
\item For each pair of regions \(r, r'\), compute a region \(r\cap r'\) such that \(\exten{r\cap r'} = \exten{r} \cap \exten{r'}\),
and a region \(r \smallsetminus r'\) such that \(\exten{r\smallsetminus r'} = \exten{r} \smallsetminus \exten{r'}\);
\item For every region \(r\), decide if \(\exten{r} = \varnothing\) holds.
\end{enumerate}
Once such a region algebra is defined, the input to Algorithm~\ref{algo:symbolic_sound} will be: the region algebra (composed of the operators listed in points 1--4 above), the initial set \(\sigma_i\), and a 2PR triple satisfying the constraints given by lines~\ref{loc:sym_P_defin}--\ref{loc:sym_tau_defin} of the algorithm, encoding the initial preorder.
The logical structure of the algorithm is then implemented using the region algebra operators in input.

\medskip
Upon termination of Algorithm~\ref{algo:symbolic_sound}, it turns out that \(R_{\tuple{P, \tau, Q}}^\sigma\) coincides with the set of reachable principals of \(\Rsim\) (cf.~\eqref{corr-stm-sym_sound} below), while for the reachable blocks of \(\Psim\), it turns out that we obtain a sound overapproximation of \(\Psim^{\post^*(I)}\) (cf.~\eqref{corr-stm-sym_sound2} below).  

\begin{restatable}[\textbf{Correctness of Algorithm~\ref{algo:symbolic_sound}}]{theorem}{correctnessSymbolic}
\label{thm:symbolic_sound}
  Let $\tuple{P, \tau, Q, \sigma}$ be the output of Algorithm~\ref{algo:symbolic_sound} on input \(G\) with \(|\Sigma|\in\bN\), $R_i\in \PreO(\Sigma)$ and \(\sigma_i \subseteq \post^*(I)\).
  Let $\Rsim\in \PreO(\Sigma)$ and $\Psim\in \Part(\Sigma)$ be, resp., the simulation preorder and partition w.r.t.\ $R_i$. 
  Then:
  \begin{align}
    \Rsim^{\post^*(I)} &=
    R_{\tuple{P, \tau, Q}}^\sigma\enspace,\tag{3.a}\label{corr-stm-sym_sound}\\
    \Psim^{\post^*(I)} &\subseteq 
     \{B \in P_{\tuple{P, \tau, Q}} \mid \exists E \in P. \: E \subseteq B 
      \wedge (\cup \tau(E))\cap \sigma \neq \varnothing\}\enspace .
     \tag{3.b}\label{corr-stm-sym_sound2}
  \end{align} 
\end{restatable}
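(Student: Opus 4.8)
The plan is to establish partial correctness by (i) proving that the three loop invariants annotated in Algorithm~\ref{algo:symbolic_sound} are genuinely maintained, together with two structural invariants on the 2PR triple, and (ii) deriving \eqref{corr-stm-sym_sound} and \eqref{corr-stm-sym_sound2} at termination by reusing the part of the proof of Theorem~\ref{thm:explicit_sound} that depends only on the invariants and on the exit condition $U=V=\varnothing$. The structural invariants I would carry are that $P$ refines $P_{\tuple{P, \tau, Q}}$ and that every principal $\cup\tau(B)$ (for $B\in P$) is a union of blocks of $Q$; together with \textsc{Inv\textsubscript{3}} (reflexivity, i.e.\ $B\subseteq\cup\tau(B)$) these guarantee that the triple always encodes a well-defined reflexive relation $R_{\tuple{P, \tau, Q}}$, and that $\cup\tau(B)=R_{\tuple{P, \tau, Q}}(x)$ for every $x\in B$. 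I stress in advance that this is a correspondence at the level of invariants and exit condition, not a literal step-by-step simulation of Algorithm~\ref{algo:explicit_sound}: a single symbolic \emph{Refine} refines the principal of a whole block $B'$ of equi-principal states at once, an effect that no single schedule of the state-by-state refinements of Algorithm~\ref{algo:explicit_sound} need reproduce when $B'$ carries internal $a$-transitions. What survives is that both algorithms keep the current relation squeezed between $\Rsim$ and $R_i$ and both stop at a relation with no reachable unstable pair.

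First I would check initialization: lines~\ref{loc:sym_P_defin}--\ref{loc:sym_tau_defin} build exactly the 2PR triple induced by $R_i$, so $R_{\tuple{P, \tau, Q}}=R_i$ and all invariants hold on entry, with $\sigma=\sigma_i\subseteq\post^*(I)$. For preservation I would read the guards through the encoding: since $\cup\tau(B)=R_{\tuple{P, \tau, Q}}(x)$ for every $x\in B$, the set $U$ collects precisely the principals of $R_{\tuple{P, \tau, Q}}$ that are unreached yet meet $I\cup\post(\sigma)$, and a triple $\tuple{a,B,C}\in V$ witnesses (via $B\cap\pre_a(C)\neq\varnothing$) a pair $b\in B$, $c\in C$ with $b\sraa{a}c$ whose principals are reachable-and-unstable, exactly as in the guards of Algorithm~\ref{algo:explicit_sound}. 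The \emph{Search} branch (lines~\ref{loc:sym_search_begin}--\ref{loc:sym_search_end}) only enlarges $\sigma$ by a state of $\post^*(I)$, so \textsc{Inv\textsubscript{2}} is preserved and the encoded relation is untouched.

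The main obstacle is the \emph{Refine} branch, where I must prove that the interleaved manipulation of $P$, $Q$ and $\tau$ realizes precisely the intended refinement of the encoded relation while re-establishing every invariant. The argument splits into three checks. First, splitting $B$ into $B'\ud B\cap\pre_a(C)$ and $B''\ud B\smallsetminus\pre_a(C)$ (lines~\ref{loc:sym_refine_splitB_begin}--\ref{loc:sym_refine_splitB_end}) is forced because after the refinement $B'$ acquires the strictly smaller principal $\cup\tau(B)\cap S$ (with $S=\pre_a(\cup\tau(C))$, strict since $\cup\tau(B)\not\subseteq S$) while $B''$ keeps $\cup\tau(B)$; the degenerate case $B''=\varnothing$ just leaves a single block. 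Second, and this is the crux, the loop over straddling blocks (lines~\ref{loc:sym_refine_Q_begin}--\ref{loc:sym_refine_Q_end}) splits each $X\in\tau(B')$ with $X\cap S\neq\varnothing\neq X\smallsetminus S$ into $X\cap S$ and $X\smallsetminus S$ in $Q$ and propagates this split into every $\tau(A)$; this is exactly what makes $\cup\tau(B)\cap S$ a union of the refined $Q$-blocks, so that the pruning at line~\ref{loc:sym_refine_principal} yields $\cup\tau(B')=\cup\tau(B)\cap S=\cup\tau(B)\cap\pre_a(\cup\tau(C))$, while every other principal $\cup\tau(A)$ is merely re-expressed over the finer $Q$ and hence unchanged as a set. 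Third, I would re-establish the invariants: the structural ones because principals stay unions of $Q$-blocks and the split of $B$ refines $P$ consistently with $P_{\tuple{P, \tau, Q}}$; reflexivity because $B'\subseteq S$ (each $b\in B'$ has $b\sraa{a}c$ with $c\in C\subseteq\cup\tau(C)$) and $B'\subseteq\cup\tau(B)$, hence $B'\subseteq\cup\tau(B')$; and \textsc{Inv\textsubscript{1}} because $\Rsim(b)\subseteq\cup\tau(B)$ and, as $\Rsim$ is a simulation, $\Rsim(b)\subseteq\pre_a(\Rsim(c))\subseteq\pre_a(\cup\tau(C))=S$, so $\Rsim(b)\subseteq\cup\tau(B)\cap S=\cup\tau(B')$.

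Finally I would close at termination. When the algorithm returns, $U=V=\varnothing$, so $R_{\tuple{P, \tau, Q}}$ has no reachable unstable pair and, by \textsc{Inv\textsubscript{1}}--\textsc{Inv\textsubscript{3}}, meets exactly the hypotheses from which the proof of Theorem~\ref{thm:explicit_sound} derives its conclusions; applying that derivation to $R_{\tuple{P, \tau, Q}}$ gives \eqref{corr-stm-sym_sound}, i.e.\ $\Rsim^{\post^*(I)}=R_{\tuple{P, \tau, Q}}^\sigma$. For \eqref{corr-stm-sym_sound2}, the same derivation yields $\Psim^{\post^*(I)}\subseteq\{B\in P_{R_{\tuple{P, \tau, Q}}}\mid R_{\tuple{P, \tau, Q}}(B)\cap\sigma\neq\varnothing\}$, and I would finish with two identifications: $P_{R_{\tuple{P, \tau, Q}}}=P_{\tuple{P, \tau, Q}}$ by definition, and, for each $B\in P_{\tuple{P, \tau, Q}}$, every $P$-block $E\subseteq B$ has $\cup\tau(E)=R_{\tuple{P, \tau, Q}}(B)$ (as $B$ is equi-principal and $P$ refines $P_{\tuple{P, \tau, Q}}$), so $R_{\tuple{P, \tau, Q}}(B)\cap\sigma\neq\varnothing$ iff $\exists E\in P.\,E\subseteq B\wedge(\cup\tau(E))\cap\sigma\neq\varnothing$. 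These turn the explicit bound into the right-hand side of \eqref{corr-stm-sym_sound2}.
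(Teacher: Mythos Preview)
Your proposal is correct and essentially matches the paper's own proof: both establish \textsc{Inv\textsubscript{1}}--\textsc{Inv\textsubscript{3}} (the paper's points (i), (ii), (v)), verify that the \emph{Refine} branch realizes exactly the intersection $\cup\tau(B)\cap\pre_a(\cup\tau(C))$ on $B'$ while leaving all other principals unchanged, and then derive \eqref{corr-stm-sym_sound}--\eqref{corr-stm-sym_sound2} at termination from $U=V=\varnothing$ by the same chain of implications (paper's (iii), (iv), (vi)) that underlies Theorem~\ref{thm:explicit_sound}. The only presentational difference is that you package the termination reasoning as an explicit reduction to Theorem~\ref{thm:explicit_sound} (after translating the symbolic $U,V$ into their state-level meaning via $\cup\tau(P(x))=R_{\tuple{P,\tau,Q}}(x)$), whereas the paper inlines that derivation; the arguments are otherwise the same.
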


Moreover, we are able to extend the correctness proofs for Algorithms~\ref{algo:explicit_sound}~and~\ref{algo:symbolic_sound} to infinite transition systems, under reasonable assumptions. Due to lack of space, the details of this extension are discussed in Appendix~\ref{sec:appendix_extension}.

\begin{restatable}[\textbf{Termination of Algorithm~\ref{algo:symbolic_sound}}]{theorem}{terminationSymbolic}
\label{thm:algo_sym_sound_finite_termination}
  Let \(G = (\Sigma, I,L, \sra)\) with \(|\Sigma|\in\bN\), \(R_i\in \PreO(\Sigma)\) and \(\sigma_i \subseteq \post^*(I)\). Then, Algorithm~\ref{algo:symbolic_sound} terminates on input \(G\), \(R_i\), and \(\sigma_i\).
\end{restatable}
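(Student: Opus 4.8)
The plan is to exhibit a well-founded ranking function on the algorithm's state $\tuple{P,\tau,Q,\sigma}$ that strictly decreases at every iteration of the \KwSty{while} loop; then, $\Sigma$ being finite, only finitely many iterations can occur before the guard $U=\varnothing\wedge V=\varnothing$ at line~\ref{loc:sym_return} is reached and the algorithm returns. Concretely, I would work with the lexicographic pair $\bigl(|\Sigma|-|\sigma|,\ \mu\bigr)\in\bN\times\bN$, where $\mu\ud\sum_{B\in P}|B|\cdot|\cup\tau(B)|$ measures the total size of the underlying relation $R_{\tuple{P,\tau,Q}}$ (recall $R_{\tuple{P,\tau,Q}}(x)=\cup\tau(P(x))$). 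Both components stay nonnegative integers throughout, since $\sigma\subseteq\post^*(I)\subseteq\Sigma$ by \textsc{Inv\textsubscript{2}} and $\Sigma$ is finite, so the lexicographic order on such pairs is well founded.

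First I would dispatch the \emph{Search} branch (lines~\ref{loc:sym_search_begin}--\ref{loc:sym_search_end}). Since $B\in U$ forces $\cup\tau(B)\cap\sigma=\varnothing$ while the chosen state $s$ lies in $\cup\tau(B)$, we have $s\notin\sigma$, so the update at line~\ref{loc:sym_sigmaupdate} strictly increases $|\sigma|$ and hence strictly decreases the first component $|\Sigma|-|\sigma|$. Because \emph{Search} touches neither $P$, $\tau$ nor $Q$, the value $\mu$ is unchanged, and the lexicographic pair strictly decreases.

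The crux is the \emph{Refine} branch (lines~\ref{loc:sym_refine_begin}--\ref{loc:sym_refine_end}), which leaves $\sigma$ untouched; I must therefore show that it strictly decreases $\mu$. The key bookkeeping observation is that the two splitting operations preserve the relevant unions. Splitting $B$ into $B'=B\cap\pre_a(C)$ and $B''=B\smallsetminus\pre_a(C)$ with $\tau(B')=\tau(B'')=\tau(B)$ (lines~\ref{loc:sym_refine_splitB_begin}--\ref{loc:sym_refine_splitB_end}) satisfies $|B'|+|B''|=|B|$ and leaves $\cup\tau(B')=\cup\tau(B'')=\cup\tau(B)$, so the contribution $|B|\cdot|\cup\tau(B)|$ to $\mu$ is merely redistributed and $\mu$ is unchanged (this holds regardless of whether $B''$ is empty). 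Likewise, each replacement of a block $X$ by $X\cap S$ and $X\smallsetminus S$ inside $Q$ and inside every $\tau(A)$ (lines~\ref{loc:sym_refine_Q_begin}--\ref{loc:sym_refine_Q_end}) leaves every union $\cup\tau(A)$ invariant, since $X=(X\cap S)\cup(X\smallsetminus S)$, so again $\mu$ is unchanged. Finally, writing $S=\pre_a(\cup\tau(C))$, the $Q$-splitting loop guarantees that after line~\ref{loc:sym_refine_Q_end} every block $E\in\tau(B')$ is either contained in $S$ or disjoint from $S$; since $\tuple{a,B,C}\in V$ gives $\cup\tau(B)=\cup\tau(B')\not\subseteq S$, at least one nonempty block of $\tau(B')$ is disjoint from $S$. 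The pruning at line~\ref{loc:sym_refine_principal} removes exactly these blocks, so $\cup\tau(B')$ strictly shrinks while $B'\neq\varnothing$ (the condition $B\cap\pre_a(C)\neq\varnothing$ is part of $V$-membership); hence $\mu$ strictly decreases. As \emph{Refine} keeps the first component fixed, the lexicographic pair again strictly decreases.

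I expect the main obstacle to be precisely the invariance claim for $\mu$ under the block and $Q$ splittings: one must verify carefully that neither the $P$-split nor the cascade of $Q$-splits propagated into every $\tau(A)$ alters any union $\cup\tau(A)$, so that the only net effect on $\mu$ arises from the final pruning at line~\ref{loc:sym_refine_principal}. Once this is in place, the argument is uniform---every loop body is either a \emph{Search} that drops $|\Sigma|-|\sigma|$ or a \emph{Refine} that drops $\mu$---and well-foundedness of the lexicographic order on $\bN\times\bN$ concludes, in close analogy with the ranking function underlying Theorem~\ref{thm:algo_sound_finite_termination} for the explicit Algorithm~\ref{algo:explicit_sound}.
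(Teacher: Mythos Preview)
Your proof is correct and follows essentially the same approach as the paper: both hinge on the fact that \emph{Search} strictly grows $\sigma$ and \emph{Refine} strictly shrinks the underlying relation $R_{\tuple{P,\tau,Q}}$ (your $\mu$ is precisely $|R_{\tuple{P,\tau,Q}}|$ as a set of pairs). The paper argues the two bounds separately and concludes finiteness of the total, whereas you package the same content as a single lexicographic ranking function on $\bN\times\bN$; your treatment of the invariance of $\mu$ under the $P$- and $Q$-splits is also more explicit than the paper's, which leaves that step implicit.
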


As it is the case for the explicit Algorithm~\ref{algo:explicit_sound}, termination is guaranteed for finite state systems, while Algorithm~\ref{algo:symbolic_sound} may terminate on infinite state systems as well, as shown below in Example~\ref{ex:symbolic_infinite_cont}.
Moreover, the use of 2PR triples as a representation structure brings benefits in terms of termination on infinite systems, since there exist inputs on which Algorithm~\ref{algo:explicit_sound} does not terminate but Algorithm~\ref{algo:symbolic_sound} does. 
For instance, Algorithm~\ref{algo:explicit_sound} terminates for no input where \(R_i\) is such that the set of principals given by \(\{R_i(x) \mid x\in \Sigma, \: \Rsim(x) \cap \post^*(I) \neq \varnothing,\, \Rsim(x) \neq R_i(x)\}\) is infinite.
On such inputs Algorithm~\ref{algo:explicit_sound} executes the \emph{Refine} block infinitely often, as shown in
the following example. 

\begin{example}\label{ex:symbolic_infinite}
Consider the following infinite transition system, where \(\Sigma = \bN\), \(I=\{1\}\), \(L\) is a singleton, \(\sigma_i=I\), and the initial preorder \(R_i\) is \(\bN \times \bN\). 

{\centering
\tikz[baseline=(1.south),auto]{
      \tikzstyle{arrow}=[->,>=latex']
      \path      
        (-1,0) node[name=0]{0}
        (0,0) node[name=1]{1}
        (1,0) node[name=2]{2}
        (2,0) node[name=3]{3}
        (2.8,0) node[name=dots]{$\cdots$}
        ;
      \node (C1) [above right of = 1] {};
      \draw[color=blue!80,thick,arrow={-Stealth},shorten >=-3pt, shorten <=3pt] (C1) to (1);
      
      \draw[arrow,shorten >=0pt,shorten <=0pt,] (1) to (0);
      \draw[arrow,shorten >=0pt, shorten <=0pt] (2) .. controls (-.4,-.4) .. (0);
      \draw[arrow,shorten >=0pt, shorten <=0pt] (3) .. controls (-.6,-.6) .. (0);
      
      \path (0.north west) ++(-0.1,0.1) node[name=w1]{} (2.8,-.2) ++(0.4,-0.4) node[name=w2]{};
      \draw[rounded corners=6pt] (w1) rectangle (w2);
}
\par}

\medskip
\noindent
The simulation preorder \(\Rsim\) is such that \(\Rsim(0) = \bN\), and \(\Rsim(n) = \bN \smallsetminus \{0\}\) for \(n \geq 1\), meaning that \(\Psim = \{\{0\}, \{1, 2, \dots\}\}\).
During execution of Algorithm~\ref{algo:explicit_sound}, the set \(U\) is empty, since the state \(1\) is in every principal.
On the other hand, the set \(V\) contains a triple \(\tuple{a, i, 0}\) for every positive integer \(i>0\).
Each time a \emph{Refine} iteration is executed, exactly one principal is updated. Assume that a triple \(\tuple{a, i, 0}\in V\) is selected, so that the refinement \(R(i) = \bN \smallsetminus \{0\}\) will be performed.
As a consequence, the execution will never terminate because \(V\) initially contains infinitely many triples, and at each iteration exactly  one of them will be removed from $V$. \lipicsEnd
\end{example}

On the other hand, Algorithm~\ref{algo:symbolic_sound} is able to refine infinitely many principals of the underlying relation in a single step of \emph{Refine}. To illustrate this, we consider the input from Example~\ref{ex:symbolic_infinite} and show that Algorithm~\ref{algo:symbolic_sound} converges in a few iterations. %

\begin{example}\label{ex:symbolic_infinite_cont}
Consider the same input as in Example~\ref{ex:symbolic_infinite}. Here, Algorithm~\ref{algo:symbolic_sound} terminates.
The initial 2PR triple is given by \(P = Q = \{\bN\}\) and \(\tau(\bN) = \{\bN\}\), the set \(U\) is empty, and the set \(V\) is \(\{\tuple{a, \bN, \bN}\}\).
Executing one iteration refines both \(P\), \(Q\) and \(\tau\), so that \(P = Q = \{\{0\}, \bN \smallsetminus \{0\}\}\), \(\tau(\{0\}) = P\) and \(\tau(\bN \smallsetminus \{0\}) = \{\bN \smallsetminus \{0\}\}\).
At this point, the underlying relation is such that \(R_{\tuple{P, \tau, Q}} = \Rsim\), so that the execution of Algorithm~\ref{algo:symbolic_sound} terminates. \lipicsEnd
\end{example}

Let us finally remark that Algorithm~\ref{algo:symbolic_sound} initializes the partitions \(P\) and \(Q\) of the 2PR triple in such a way that \(P\) keeps track of states having the same principal for \(R_i\), and \(Q\) keeps track of states occurring in the same set of principals for \(R_i\).
The idea is to start with \(P\) and \(Q\) as coarse as possible.
On the opposite end, \(P\) and \(Q\) can be initialized with \(P=Q\) such that each of their blocks is a singleton. 
Algorithm~\ref{algo:symbolic_sound} with that initialization remains correct and is not much different from Algorithm~\ref{algo:explicit_sound} since the relation \(R\) of Algorithm~\ref{algo:explicit_sound} and \(\tau\) of Algorithm~\ref{algo:symbolic_sound} are essentially the same.    

Coming back to the goal of having \(P\) and \(Q\) as coarse as possible, we next show that when Algorithm~\ref{algo:symbolic_sound} executes, 
each refinement of \(P\) (or \(Q\)) can be attributed to a change among states having the same principal for the current \(R_{\tuple{P, \tau, Q}}\) (or a change among states occurring in the same set of principals for the current \(R_{\tuple{P, \tau, Q}}\)). 
This is formally stated as follows.

\begin{restatable}[]{property}{algosymSplittingStates}
\label{pr:algosymSplittingStates}
Let \(\tuple{P, \tau, Q, \sigma}\) be the output of Algorithm~\ref{algo:symbolic_sound} with input \(G\), \(R_i \in \PreO(\Sigma)\) and \(\sigma_i \subseteq \post^*(I)\), and assume that the execution terminates after \(t\) iterations. Moreover, let \(x, y \in \Sigma\) be two initially equivalent states (i.e., \(R_i(x) = R_i(y)\)), and \(\tuple{P_j, \tau_j, Q_j}\) be the 2PR triple after \(j = 0,\dots,t\) iterations of the algorithm, and \(R_j = R_{\tuple{P_j, \tau_j, Q_j}}\) be the corresponding underlying relation. Then:
\begin{gather}
  P(x) \neq P(y) \Ra R_j(x) \neq R_j(y) \;\;\text{ for some } j \in \interval{0}{t}\enspace, \label{eq:2prsplitP}
  \shortintertext{while a dual property for \(Q\) holds:}
  Q(x) \neq Q(y) \Ra R^{-1}_j(x) \neq R^{-1}_j(y) \;\;\text{ for some } j \in \interval{0}{t}\enspace. \label{eq:2prsplitQ}
\end{gather}

\end{restatable}

We have observed, through a prototype implementation in Python of our algorithms and a corresponding experimental evaluation, that employing 2PR triples brings benefits in terms of execution speed, in practice. Due to lack of space, we report in Appendix~\ref{sec:experiments} on such empirical evidence over a set of benchmarks consisting of well-known mutual exclusion protocols.

\section{Conclusion and Future Work}\label{sec:conclusion}

We introduced and proved the correctness of algorithms solving the reachable simulation problem, which discloses some fundamental differences in decidability and complexity w.r.t.\ to the analogous problem for 
bisimulation studied by Lee and Yannakakis~\cite{leeOnlineMinimizationTransition1992}. 
To the best of our knowledge, this is the first investigation of Lee and Yannakakis' problem recast to the simulation preorder and partition.   
Algorithm~\ref{algo:symbolic_sound} is the most relevant one for practical purposes, since this procedure converges 
on all finite state systems and is well-suited to handle infinite state systems thanks to its symbolic representation, 
being able to converge on some---but not all, due to an undecidability result---such infinite  systems.

Future work includes, but is not limited to, the following tasks:
\begin{itemize}
  \item To further investigate the reachable simulation problem for transition systems with infinitely many states. One such example of infinite systems is given by timed automata or hybrid systems since they enjoy subclasses with finite similarity quotients.
  \item To establish upper bounds of the algorithms in terms of number of basic operations performed on the region algebra, following an approach similar to Lee and Yannakakis~\cite[Section~3]{leeOnlineMinimizationTransition1992}.
  \item To study algorithmic improvements of our algorithms, for instance, by leveraging the data structures such as queue and stacks used by Lee and Yannakakis~\cite{leeOnlineMinimizationTransition1992} algorithm for bisimulation. 
  \item To investigate the reachable  problem, analogous to Problem~\ref{problem1}, for further behavioral relations. One appealing relation is branching bisimilarity for labeled transition systems or stuttering equivalence for Kripke structures, which have been  proven in
  \cite{groote2017,groote2020} to be symbolically computable in $O(|{\sra}|\log |\Sigma|)$ time, the same complexity of the renowned Paige-Tarjan bisimulation algorithm~\cite{pt87}.
\end{itemize}

\clearpage
\appendix
\section{Auxiliary Algorithms}
\begin{algorithm}[H]
\caption{\textsf{Sim}}\label{basicsim}
{ \color{green!40!black}
\KwIn{A transition system \(G=(\Sigma,I,L,\mathord{\sra})\) , a relation \(R_i \in \Rel(\Sigma)\) s.t. \(\Rsim \subseteq R_i \subseteq R_{\text{po}}\) for some preorder \(R_{\text{po}}\), and the simulation preorder \(\Rsim\) induced by \(R_{\text{po}}\).} }
\(\Rel(\Sigma)\ni R:=R_i \)\;
\While{$\exists x, x'\!\in\! \Sigma, a\!\in\! L\ldotp x \sraa{a} x' \land R(x)\not\subseteq \pre_a(R(x'))$}{
 { \color{blue!75!black}
\tcp{{\rm \textsc{Inv\textsubscript{1}}}: \(\forall x\in \Sigma\ldotp \Rsim(x) \subseteq R(x)\subseteq R_i(x)\)}
}
$R(x):=R(x) \cap \pre_a(R(x'))$\;
}

\Return{$R$}\;
\end{algorithm}

\begin{assumption}\label{assumption_algo4}
  If \(y \in R_i(x) \smallsetminus \Rsim(x)\) for some \(x \in \Sigma\), then there exists an execution (not necessarily terminating) of {\rm \textsf{Sim}} over input \(R_i\) such that, after a finite number of iterations, \(y \not\in R(x)\) holds.
\end{assumption}

\subsection{A Refined Algorithm}
\begin{algorithm}[H]
\caption{\textsf{Refined Algorithm} (\(\RefAlgo\))}\label{algo6_refine_ex}
{ \color{green!40!black}
\KwIn{A transition system \(G=(\Sigma,I,L,\mathord{\sra})\), the set of reachable states \(\post^*(I)\), and a relation \(R_i\in\Rel(\Sigma)\) s.t. \(\Rsim \subseteq R_i \subseteq R_{\text{po}}\), where \(R_{\text{po}}\) is some preorder and \(\Rsim\) is the simulation preorder induced by \(R_{\text{po}}\).} }
\(\Rel(\Sigma)\ni R:=R_i \)\;
\(\wp(\Sigma)\ni \sigma:=\post^*(I)\)\; 
\While(){\True}{ 
 { \color{blue!75!black}
  \tcp{{\rm \textsc{Inv\textsubscript{1}}}: \(\forall x\ldotp \Rsim(x) \subseteq R(x)\subseteq R_i(x)\)}
  \tcp{{\rm \textsc{Inv\textsubscript{2}}}: \(\forall x\in \Sigma\ldotp x\in R(x)\)}
  }
\(V:= \{\tuple{a,x,x'} \in L \times \Sigma^2 \mid R(x) \cap \sigma \neq \varnothing, x\sraa{a} x', R(x) \not\subseteq  \pre_a(R(x'))\}\)\;

\eIf{\((V \neq \varnothing)\)}{
	\(\mathit{Refine}:\) \\  
	\Indp
		\Choose \(\tuple{a, x,x'}\in V\)\;
		\( R(x) := R(x) \cap \pre_a(R(x'))\)\;
	\Indm
}{
	\Return{$\tuple{R,\sigma}$}\;  
}

}

\end{algorithm}

\medskip
Let us observe that if Algorithm~\ref{algo:explicit_sound} is called with \(\sigma_i=\post^*(I)\), then the set \(U\) will be empty at each iteration, so that \emph{Search} never executes.
Based on this observation, we define Algorithm~\ref{algo6_refine_ex}, also denoted by \(\RefAlgo\), obtained as 
partial evaluation of Algorithm~\ref{algo:explicit_sound} by assuming to have in input \(\sigma_i=\post^*(I)\). 
Moreover, Algorithm~\ref{algo6_refine_ex} differs from Algorithm~\ref{algo:explicit_sound} in that we require the input relation \(R_i\) to be reflexive but not necessarily transitive.
The rationale is that, Algorithm~\ref{algo6_refine_ex} will be employed as a subroutine having in input a relation \(R_i\) that sits in between a preorder \(R_{\text{po}}\) and the simulation preorder \(\Rsim\) induced by \(R_{\text{po}}\).    
More precisely, it turns out that for Algorithm~\ref{algo6_refine_ex} the invariant \({\rm \textsc{Inv\textsubscript{1}}}\ud {\Rsim \subseteq R_i \subseteq R_{\text{po}}}\) holds.
Note that \(R_i\) is reflexive since \(\Rsim\) is but \(R_i\) need not be transitive.
Under these assumptions on the inputs, it is easily seen that Algorithm~\ref{algo6_refine_ex} inherits correctness and termination results as given by Theorems~\ref{thm:explicit_sound}~and~\ref{thm:algo_sound_finite_termination}.

\section{An Algorithm for the Reachable Simulation Partition}\label{sec:completeonline}

\begin{algorithm}[H]
\caption{\textsf{Reachable Simulation Partition Algorithm}}\label{algo:nonterminating_ex}
{ \color{green!40!black}
\KwIn{A trans.\ system \(G\!=\!(\Sigma,I,L,\mathord{\sra})\), an initial preorder \(R_i\in \PreO(\Sigma)\), an initial finite set \(I \subseteq \sigma_i \subseteq \post^*(I)\).} }
\(\Rel(\Sigma) \ni R := R_i\)\;

\(\wp(\Sigma)\ni \sigma:=\sigma_i\)\; 
\(\wp(U)\ni \Ubad:=\varnothing\)\; 
\While(){\True}{ 
 { \color{blue!75!black}
  \tcp{{\rm \textsc{Inv\textsubscript{1}}}: \(\forall x\in \Sigma\ldotp \Rsim(x) \subseteq R(x)\subseteq R_i(x)\)}
  \tcp{{\rm \textsc{Inv\textsubscript{2}}}: \(\sigma_i \subseteq \sigma \subseteq \post^*(I)\)}
  \tcp{{\rm \textsc{Inv\textsubscript{3}}}: \(\forall x\in\Sigma\ldotp x\in R(x)\)}
  \tcp{{\rm \textsc{Inv\textsubscript{4}}}: \(\Ubad \subseteq U\)}
  }
\(U:= \{x \in \Sigma \mid \nexists s \in \sigma \ldotp R(x) = R(s),\, R(x)\cap \post(\sigma)\neq \varnothing\}\)\;\label{loc:ntex_udef}
\(V:= \{(a,x,x')\in L {\times} \Sigma^2 \mid \exists s {\in} \sigma \ldotp R(x) {=} R(s),\, (x\sraa{a} x', R(x) \nsubseteq \pre_a(R(x')) \}\)\;\label{loc:ntex_vdef}

\Nif{
	\((U\smallsetminus \Ubad \neq \varnothing) \longrightarrow \mathit{Search}:\)\\ 
	\Indp
		\Choose \(x\in U \smallsetminus \Ubad\)\;
		\(S := (R(x)\cap \post(\sigma)) \smallsetminus \sigma\)\;
		\eIf{\(S \neq \varnothing\)}{
			\Choose \(s \in S\)\;
			\(\sigma:= \sigma \cup \{s\}\)\;\label{loc:ntex_sigmaupdate}
			\(\Ubad := \varnothing\)\;\label{loc:ntex_resetubad1} 
		}{
			\(\Ubad := \Ubad \cup \{x\}\)\;\label{loc:ntex_ubadupdate}
		}
	\Indm

	\((V \neq \varnothing) \longrightarrow \mathit{Refine}:\)\\  
	\Indp
		\Choose \((a,x,x')\in V\)\;
		\(S:=\pre_a(R(x'))\)\;
		\(R(x):= R(x) \cap S\)\;
		\(\Ubad := \varnothing\)\;\label{loc:ntex_resetubad2} 

	\Indm
	
	\((U = \Ubad \neq \varnothing \land V = \varnothing) \longrightarrow \mathit{Expand}:\)\\ 	
	\Indp
		\eIf{\(\post(\sigma) \subseteq \sigma\)}{
			 { \color{blue!75!black} \tcp{{\rm Run Algo.~\ref{algo6_refine_ex} with input \(R\) and \(\sigma\)}}}
			\Return{$\RefAlgo(R, \sigma)$}\;\label{loc:ntex_algo2call} 
		}{
			\(\sigma:= \sigma \cup \post(\sigma)\)\;\label{loc:ntex_sigma_expand_update}
			\(\Ubad := \varnothing\)\;\label{loc:ntex_resetubad3} 
		}		
	\Indm
	
	\((U = \varnothing \land V = \varnothing) \longrightarrow {}\)\Return{$\tuple{R,\sigma}$}\;\label{loc:ntex_return2} 
}

}
\end{algorithm}

\medskip
Algorithm~\ref{algo:explicit_sound} is sound and complete for $\Rsim$ (cf.\ Theorem~\ref{thm:explicit_sound}~\eqref{corr-stm-expl_sound}) 
and, as shown in Example~\ref{ex:shortcoming}, is in general sound but not precise for $\Psim$ (cf.\ Theorem~\ref{thm:explicit_sound}~\eqref{corr-stm-expl_sound2}).
To achieve this precision, we perform several key changes to Algorithm~\ref{algo:explicit_sound} leading to the pseudocode in logical form 
presented as Algorithm~\ref{algo:nonterminating_ex}.
First, we modify the definitions of the sets $U$ and $V$ at lines~\ref{loc:ntex_udef} and~\ref{loc:ntex_vdef}.
Informally speaking, the changes made to the definitions of \(U\) and \(V\) reflect the difference between definitions \eqref{eq:principals_intersect} and \eqref{eq:principals_generated} of reachable principal.
Moreover, this algorithm keeps track of a subset of $U$, denoted by 
$\Ubad$, that contains states on which executing a \emph{Search} iteration can never expand $\sigma$. 
Furthermore, we added an \emph{Expand} guarded procedure in the loop of nondeterministic choices whose role is to guarantee progress by adding new elements to \(\sigma\).
The \emph{Expand} procedure at line~\ref{loc:ntex_algo2call} has a return statement invoking Algorithm~\ref{algo6_refine_ex} as a subroutine whenever \(\sigma\) turns out to be the whole set $\post^*(I)$ of reachable states.
It is worth recalling that the results in Section~\ref{sec:complexity_finite} entail that computing the whole 
set $\post^*(I)$ of reachable states is, in general, unavoidable.

The main feature of Algorithm~\ref{algo:nonterminating_ex} is that if it terminates with output $\tuple{R,\sigma}$, then it induces a partition whose blocks having a nonempty intersection with $\sigma$ are in a 1-1 correspondence with the reachable blocks of \(\Psim\), and they split every reachable block consistently with \(\Psim\) (cf.~\eqref{corr-stm-ntex2} below). 
Note that a block \(B\in P^\sigma\) might be strictly contained in the corresponding block \(\Psim(B)\) of \(\Psim\), but this algorithm ensures that \(\Psim(B) \smallsetminus B \cap \post^*(I) = \varnothing\) (cf.~\eqref{corr-stm-ntex3} below).
On the other hand, Algorithm~\ref{algo:nonterminating_ex} is also precise for the reachable principals of the simulation preorder $\Rsim$ where reachability for principals is defined according to definition~\eqref{eq:principals_generated} (cf.~\eqref{corr-stm-ntex} below). 

\begin{restatable}[\textbf{Correctness of Algorithm~\ref{algo:nonterminating_ex}}]{theorem}{correctnessNonterminatingExplicit}
\label{thm:CorrectnessAlgoNonterminatingEx}
  Let $\tuple{R,\sigma}\in \Rel(\Sigma)\times\wp(\Sigma)$ be the output of Algorithm~\ref{algo:nonterminating_ex} on input \(G\) with \(|\Sigma|\in\bN\), $R_i\in \PreO(\Sigma)$, \(I \subseteq \sigma_i \subseteq \post^*(I)\). 
  Let $\Rsim\in \PreO(\Sigma)$ and $\Psim\in \Part(\Sigma)$ be, resp., the simulation preorder and partition w.r.t.\ $R_i$. %
  Let \(P\ud \{y \in \Sigma \mid R(y) = R(x)\}_{x\in \Sigma}\in \Part(\Sigma)\).
  Then:
  \begin{align}
    &\{\Rsim(x) 
    \mid x\in \post^*(I)\} =
    \{R(x)\mid x\in \sigma\},\tag{3.a}\label{corr-stm-ntex}\\
    &\{B\cap\post^*(I) 
  \mid B\in\Psim^{\post^*(I)}\} = 
  \{B\cap\post^*(I)\mid B\in P^\sigma\},\tag{3.b}\label{corr-stm-ntex2}\\
  &\Psim^{\post^*(I)} =
  \{\Psim(B)\mid B\in P^\sigma \}\enspace.\tag{3.c}\label{corr-stm-ntex3}
\end{align} 
\end{restatable}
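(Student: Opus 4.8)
The plan is to prove \emph{partial} correctness: assuming Algorithm~\ref{algo:nonterminating_ex} returns a pair $\tuple{R,\sigma}$, I derive the three identities (termination on finite systems being a separate matter). I would first check that the annotated invariants \textsc{Inv\textsubscript{1}}--\textsc{Inv\textsubscript{4}} hold at every loop entry, by induction on the number of iterations. The only steps needing care are: \textsc{Inv\textsubscript{1}}, where the \emph{Refine} update $R(x):=R(x)\cap\pre_a(R(x'))$ preserves $\Rsim(x)\subseteq R(x)$ because $\Rsim$ is a simulation (if $t\in\Rsim(x)$ and $x\sraa{a}x'$ then some $t'$ has $t\sraa{a}t'$ and $t'\in\Rsim(x')\subseteq R(x')$, so $t\in\pre_a(R(x'))$); \textsc{Inv\textsubscript{3}}, where the same update keeps $x\in R(x)$ since $x\sraa{a}x'$ and $x'\in R(x')$ give $x\in\pre_a(R(x'))$; and \textsc{Inv\textsubscript{4}}, where $\Ubad$ grows only in the branch of \emph{Search} that leaves both $R$ and $\sigma$ unchanged, while every update that could shrink $U$ simultaneously resets $\Ubad$. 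Writing $\mathcal{P}\ud\{R(s)\mid s\in\sigma\}$ and calling $y$ \emph{covered} when $R(y)\in\mathcal{P}$, the entire argument then reduces to two facts about the output: (i) $R(y)=\Rsim(y)$ for every covered $y$, and (ii) every state of $\post^*(I)$ is covered. I would establish (i) and (ii) separately for the two exit points, namely the return at line~\ref{loc:ntex_return2} (Case~A, $U=V=\varnothing$) and the call $\RefAlgo(R,\sigma)$ at line~\ref{loc:ntex_algo2call} (Case~B, $V=\varnothing$, $U=\Ubad$, $\post(\sigma)\subseteq\sigma$).

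For Case~A, the substantive one, I would first prove a \emph{closure} lemma: if $y$ is covered and $y\sraa{a}y'$, then $y'$ is covered. Pick $s\in\sigma$ with $R(s)=R(y)$; reflexivity (\textsc{Inv\textsubscript{3}}) gives $s\in R(y)$, and $V=\varnothing$ forces $R(y)\subseteq\pre_a(R(y'))$ (else $\tuple{a,y,y'}\in V$), so $s\sraa{a}s''$ with $s''\in R(y')$; as $s\in\sigma$ we have $s''\in\post(\sigma)$, whence $R(y')$ meets $\post(\sigma)$ and $U=\varnothing$ forces $R(y')\in\mathcal{P}$. Since $I\subseteq\sigma_i\subseteq\sigma$ makes initial states covered, induction along reachability paths then yields (ii). For (i) I would show that $W\ud\{(x,t)\mid x\text{ covered},\,t\in R(x)\}$ is a simulation w.r.t.\ $R_i$: containment $W\subseteq R_i$ is \textsc{Inv\textsubscript{1}}, and the transfer property uses $V=\varnothing$ to obtain a matching $t'$ and the closure lemma to keep the target covered. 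Maximality of $\Rsim$ gives $W\subseteq\Rsim$, so $R(x)=W(x)\subseteq\Rsim(x)$ for covered $x$, which with \textsc{Inv\textsubscript{1}} is exactly (i).

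For Case~B I would observe that $I\subseteq\sigma\subseteq\post^*(I)$ together with $\post(\sigma)\subseteq\sigma$ forces $\sigma=\post^*(I)$, so (ii) is immediate and the subroutine is invoked on the exact reachable set with an input relation satisfying $\Rsim\subseteq R\subseteq R_i$ (\textsc{Inv\textsubscript{1}}). The inherited correctness of $\RefAlgo$ (Algorithm~\ref{algo6_refine_ex}, via Theorem~\ref{thm:explicit_sound}) then refines $R$ so that $R(x)=\Rsim(x)$ for every principal meeting $\post^*(I)$; since any covered $y$ has $R(y)\ni s\in\sigma=\post^*(I)$, it meets $\post^*(I)$ and hence satisfies $R(y)=\Rsim(y)$, giving (i).

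Finally I would derive the three identities uniformly from (i) and (ii). For \eqref{corr-stm-ntex}: each $s\in\sigma$ gives $R(s)=\Rsim(s)$ with $s\in\post^*(I)$, and conversely each $x\in\post^*(I)$ is covered, so $\Rsim(x)=R(x)=R(s)$ for some $s\in\sigma$. For the partition identities the key is that a block $B=P(s)\in P^\sigma$ consists of states $y$ with $R(y)=R(s)\in\mathcal{P}$, all covered, so $\Rsim(y)=R(y)=R(s)=\Rsim(s)$ and thus $\Psim(y)=\Psim(s)$; hence $B\subseteq\Psim(s)$ and $B\cap\post^*(I)=\Psim(s)\cap\post^*(I)$ (for reachable $y$, $R(y)=R(s)$ iff $y$ and $s$ are similar). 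Matching each $\Psim$-block $\Psim(x)$, $x\in\post^*(I)$, with $P(x)=P(s)\in P^\sigma$ then yields \eqref{corr-stm-ntex2} and, reading $\Psim(B)=\Psim(s)$, also \eqref{corr-stm-ntex3}. The main obstacle is precisely the Case~A argument: stability alone ($V=\varnothing$) does not propagate correctness to successor principals, and it is the coupling with $U=\varnothing$ --- fed a $\sigma$-witness into $\post(\sigma)$ through reflexivity --- that certifies the partially explored $\sigma$ already generates \emph{every} reachable simulation principal, without ever computing all of $\post^*(I)$.
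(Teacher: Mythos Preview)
Your proof is correct, and its overall architecture (split into the two exit points, reduce to the two facts you call (i) and (ii), then derive \eqref{corr-stm-ntex}--\eqref{corr-stm-ntex3}) matches the paper's, including the closure-of-covered-states argument for (ii) in Case~A and the use of $\sigma=\post^*(I)$ plus the inherited correctness of $\RefAlgo$ in Case~B.

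The one genuine difference is how you establish (i) in Case~A. The paper argues by contradiction: it runs the auxiliary procedure $\Sim$ (Algorithm~\ref{basicsim}) on the output relation $R$, uses finiteness of $\Sigma$ to guarantee $\Sim(R)=\Rsim$, and then looks at the \emph{first} iteration where some covered principal gets refined, deriving a contradiction from $U=V=\varnothing$. Your route is more direct: you show that $W\ud\{(x,t)\mid x\text{ covered},\,t\in R(x)\}$ is itself a simulation contained in $R_i$, whence $W\subseteq\Rsim$ and hence $R(x)=W(x)\subseteq\Rsim(x)$ for covered $x$. This is cleaner and, notably, does not invoke finiteness of $\Sigma$ at all for this step; indeed the paper singles out exactly this point (its fact~(iii)) as the one place where finiteness is used, and devotes Appendix~\ref{sec:appendix_extension} to patching it for infinite systems via an extra hypothesis (Assumption~\ref{assumption_algo4}). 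Your simulation argument sidesteps that technicality entirely. The price is only that the paper's $\Sim$-based argument is reused verbatim across Theorems~\ref{thm:explicit_sound}, \ref{thm:CorrectnessAlgoNonterminatingEx} and~\ref{thm:symbolic_sound}, so it buys uniformity; your argument would have to be restated (with the appropriate notion of ``covered'') for each algorithm, though it adapts without difficulty.
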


The following example shows the role 
of the \emph{Expand} guarded procedure and of the call of Algorithm~\ref{algo6_refine_ex} at line~\ref{loc:ntex_algo2call}
in ensuring termination.

\begin{example}\label{ex:ntex_algo2call_motivation}
  Consider the following finite system, where \(I=\{0\}\), \(L\) is a singleton, \(\sigma_i=I\), and the initial preorder \(R_i\) is given by \(R_i(0) = R_i(1) = \{0,1\}\), \(R_i(2) = \{2\}\) and \(R_i(3) = \{2,3\}\), where, as before, the boxes gather states sharing the same principal in \(R_i\).  
  \begin{center}
  \begin{tikzpicture}[scale=0.9]
    \tikzstyle{arrow}=[->,>=latex']
    \path      
      (1,0) node[name=0]{0}
      (3,0) node[name=1]{1}
      (1,-1) node[name=2]{2}
      (3,-1) node[name=3]{3}
      ;
    \node (C1) [above left of = 0] {};
    \draw[color=blue!80,thick,arrow={-Stealth},shorten >=-3pt, shorten <=3pt] (C1) to (0);
    \draw[arrow,shorten >=0pt, shorten <=0pt] (0) to (2);
    \draw[arrow,shorten >=0pt, shorten <=0pt] (1) to (3);
    \draw[arrow, shorten >=0pt, shorten <=0pt] (1) to  (2);
    \draw[arrow] (3) edge[loop right] node[right] {} (1);

    \path (0.north west) ++(-0.1,0.1) node[name=a1]{} (1.south east) ++(0.1,-0.1) 
      node[name=a2]{};
    \draw[rounded corners=6pt] (a1) rectangle (a2);

    \path (2.north west) ++(-0.1,0.1) node[name=z1]{} (2.south east) ++(0.1,-0.1) 
      node[name=z2]{};
    \draw[rounded corners=6pt] (z1) rectangle (z2);

    \path (3.north west) ++(-0.1,0.1) node[name=w1]{} (3.south east) ++(0.1,-0.1) 
      node[name=w2]{};
    \draw[rounded corners=6pt] (w1) rectangle (w2);

  \end{tikzpicture}    
\end{center}
The simulation preorder \(\Rsim\) is such that \(\Rsim(0) = \{0,1\}\), \(\Rsim(1) = \{1\}\), \(\Rsim(2) = \{2\}\) and \(\Rsim(3) = \{3\}\), meaning that states $0$ and $1$ are split in \(\Psim\).
Executing Algorithm~\ref{algo:nonterminating_ex}, we find that \(\sigma = \{0,2\}\) after the first \emph{Search} iteration.
At this point, \(V = \varnothing\) and \(U =\{3\}\), and the state \(3\) will be inserted into \(\Ubad\) after a further iteration
of  \emph{Search}.
Executing an \emph{Expand} procedure is then unable to enlarge \(\sigma\) since \(\post(\sigma) \subseteq \sigma\), so that 
the algorithm will execute the call $\RefAlgo(R, \sigma)$ of Algorithm~\ref{algo6_refine_ex} 
at line~\ref{loc:ntex_algo2call}, thus computing that \(R(1)= \{1\}\), and, in turn, 
separating the states \(0\) and \(1\) in $P$. \lipicsEnd
\end{example}

As aforementioned, it turns out that Algorithm~\ref{algo:nonterminating_ex} terminates on finite state systems.

\begin{restatable}[\textbf{Termination of Algorithm~\ref{algo:nonterminating_ex}}]{theorem}{terminationNonterminatingExplicit}
\label{thm:algo_complete_finite_termination}
  Let \(G = (\Sigma, I, L, \sra)\) with \(|\Sigma|\in\bN\), \(I \subseteq \sigma_i\subseteq\post^*(I)\) and \(R_i\in \PreO(\Sigma)\). Then, Algorithm~\ref{algo:nonterminating_ex}  terminates on input \(G\), \(R_i\) and \(\sigma_i\).
\end{restatable}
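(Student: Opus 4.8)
The plan is to exhibit a well-founded ranking function on the configurations of the main \KwSty{while} loop that strictly decreases at every iteration which does not execute a return, and then to observe that the single subroutine call the loop may perform itself terminates by a previously established result. Since the guards are exhaustive (given $\Ubad\subseteq U$, at least one of \emph{Search}, \emph{Refine}, \emph{Expand} or the final return is always enabled), bounding the number of non-returning iterations suffices to conclude that the loop reaches a return, after which only the possible $\RefAlgo$ call remains to be accounted for.

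Concretely, I would attach to each loop iteration the lexicographic measure
\[
\mu \ud \bigl(\card{\post^*(I)\smallsetminus\sigma},\ \card{R},\ \card{U\smallsetminus\Ubad}\bigr)\in\bN^3,
\]
ordered lexicographically, which is well-founded because $\card{\Sigma}\in\bN$: by $\text{Inv}_2$ we have $\sigma\subseteq\post^*(I)\subseteq\Sigma$, the relation satisfies $R\subseteq\Sigma\times\Sigma$, and $\Ubad\subseteq U\subseteq\Sigma$ by $\text{Inv}_4$, so all three components are natural numbers. I would then inspect the guarded commands one by one. A successful \emph{Search} ($S\neq\varnothing$) and the $\mathit{else}$ branch of \emph{Expand} ($\post(\sigma)\not\subseteq\sigma$) both enlarge $\sigma$ by at least one genuinely new state that $\text{Inv}_2$ keeps inside $\post^*(I)$, strictly decreasing the first component. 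A \emph{Refine} step leaves $\sigma$ untouched but, since the chosen triple $\tuple{a,x,x'}$ lies in $V$ and hence satisfies $R(x)\not\subseteq\pre_a(R(x'))$, it replaces $R(x)$ by the proper subset $R(x)\cap\pre_a(R(x'))$; thus the first component is unchanged while $\card{R}$ strictly drops.

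The step requiring the most care, and the crux of the argument, is the unsuccessful \emph{Search} branch ($S=\varnothing$), the one command that modifies neither $\sigma$ nor $R$ and therefore cannot move the first two components. Here I would argue that, because $U$ is recomputed at line~\ref{loc:ntex_udef} solely from $R$ and $\sigma$ (both unchanged), the set $U$ is identical in the next iteration; meanwhile the chosen $x\in U\smallsetminus\Ubad$ is inserted into $\Ubad$, so $\card{U\smallsetminus\Ubad}$ strictly decreases while the first two components stay fixed, and $\mu$ decreases lexicographically. Combining all cases, every non-returning iteration strictly decreases the well-founded $\mu$, so the loop performs only finitely many iterations and must eventually take either the return at line~\ref{loc:ntex_return2} or the \emph{Expand} return at line~\ref{loc:ntex_algo2call}. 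In the latter case the algorithm invokes $\RefAlgo(R,\sigma)$, whose termination on finite state systems is inherited from Theorem~\ref{thm:algo_sound_finite_termination}; as $\card{\Sigma}\in\bN$, this call returns, and hence Algorithm~\ref{algo:nonterminating_ex} terminates. The only genuinely delicate point is justifying the invariance of $U$ across an unsuccessful \emph{Search}, which is exactly where $\text{Inv}_4$ and the observation that this branch touches only $\Ubad$ are both needed.
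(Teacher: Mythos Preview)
Your proof is correct and captures the same termination argument as the paper, but you package it more cleanly. The paper argues by separately bounding the number of \emph{Refine} iterations (each shrinks some principal), the number of \emph{Expand} iterations that do not return (each enlarges $\sigma$), and the number of \emph{Search} iterations that enlarge $\sigma$; it then observes that between any two resets of $\Ubad$ there can be only finitely many $\Ubad$-expanding \emph{Search} steps because $\Ubad\subseteq\Sigma$, and that resets are triggered only by the already-bounded iteration types. You instead fold all of this into a single lexicographic ranking $\mu=(\card{\post^*(I)\smallsetminus\sigma},\,\card{R},\,\card{U\smallsetminus\Ubad})$, which handles the interleaving automatically and makes explicit the key fact the paper leaves implicit: that across an unsuccessful \emph{Search} the set $U$ is literally unchanged (since it is recomputed from the untouched $R$ and $\sigma$), so the third component genuinely drops. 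Both arguments invoke the termination of $\RefAlgo$ via Theorem~\ref{thm:algo_sound_finite_termination} for the final call. Your formulation is tighter and avoids the informal ``between two resets'' bookkeeping; the paper's version is more narrative but relies on the same invariants.
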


\section{Complexity Proof}\label{sec:app-complexity}

\nlhardnesslmembership*
\begin{proof}
We show the NL-hardness result by means of a reduction using the st-connectivity problem for leveled directed graphs.
A \emph{leveled directed graph} is a directed graph $G$ whose nodes
are partitioned in $k>0$ levels \(A_1,A_2,\ldots,A_k\), and every edge \((n,m)\) of $G$ is such that if \(n\in A_i\), 
for some \(0<i<k\), then \(m\in A_{i+1}\).
The st-connectivity problem in a leveled directed graph 
asks given two nodes \(s\in A_1\) and \(t\in A_k\) whether there exists a path in $G$ from \(s\) to \(t\).
The st-connectivity problem for leveled directed graphs is NL-complete \cite[p.~333]{sipserIntroductionTheoryComputation2006}.

Now, we reduce the st-connectivity problem to our decision Problem~\ref{rsdpfs}.
We first observe that given an instance of the st-connectivity problem in leveled directed graphs we can add self-loops to every node of the graph while preserving the existence or not of a path from \(s\) to \(t\).
Next, given such a leveled directed graph \(G=(N, E)\) with nodes \(N= A_1\cup A_2\cup\cdots\cup A_k\), edges \(E\), and two nodes \(s\in A_1\), \(t\in A_k\), define the transition system \(T_s\) by taking \(G\) with set of initial states \(I=\{s\}\).  
Next, consider as initial preorder relation \(R_i = \{ \{t\}\times N\} \cup \{ (N\smallsetminus\{t\}) \times (N\smallsetminus\{t\}) \}\).
It is easy to check, using a reasoning analogous to the undecidability proof in Section~\ref{sec:undecidability}, that \(R_i\) is the simulation preorder of \(T_s\) (that is, \(R_i=\Rsim\)). In particular, since \(G\) is a leveled directed graph, then \(t \sra x\) for no state \(x\), so that the only inclusions which need to be checked are 
\(R_i(N\smallsetminus \{t\}) \subseteq \pre(R_i(N\smallsetminus \{t\}))\),
\(R_i(N\smallsetminus \{t\}) \subseteq \pre(R_i(\{t\}))\),
and \(R_i(\{t\}) \subseteq \pre(R_i(\{t\}))\), and they all follow by definition of $T_s$ and $R_i$.
In turn, the simulation equivalence \(\Rsim\cap(\Rsim)^{-1}\) induces the partition \(\Psim=\{ N\smallsetminus\{t\}, \{t\} \}\).
Finally, it turns out that there exists a path from \(s\) to \(t\) in \(G\) if{}f all the blocks of \(\Psim\) are reachable.
More precisely, \((N\smallsetminus\{t\}) \cap \post^*(I) \neq \varnothing\) since \(I=\{s\}\) and \(s\in N\smallsetminus\{t\}\);
also, \(\{t\} \cap \post^*(I)\neq\emptyset\) if{}f there exists a path from \(s\) to \(t\).
Note that our reduction uses only two blocks as required.

Next, we show that the formulation of the decision Problem~\ref{rsdpfs} for bisimulation (rather than simulation) partition \(\Pbis\) consisting of two blocks \(B_0, B_1\) is in L.
Assume, without loss of generality, that \(I\subseteq B_0\) (the other cases are easily settled).
We are left to decide whether \(B_1\cap \post^*(I)\neq\varnothing\), which can be done by scanning one by one the transitions of the system to find a pair \((n,m)\) with \(n\in B_0\) and \(m\in B_1\).
Indeed, assume that \(B_1\cap \post^*(I)\neq\varnothing\) holds.
Therefore there exists a path from a state in \(I \subseteq B_0\) to a state in \(B_1\). 
Since the bisimulation partition \(\Pbis=\{B_0,B_1\}\), there exists a path in \(G\) from \(I\) to \(B_1\) if{}f every state in \(B_0\) has a transition into \(B_1\).
Thus to check whether \(B_1\cap \post^*(I)\neq\varnothing\) holds, it suffices to find a transition \(n\rightarrow m\) with \(n\in B_0\) and \(m\in B_1\). 
We need no more than logarithmic space to scan one by one the transitions and to prove the existence or absence of such an edge and we are done.
\end{proof}

\section{Properties of 2PR Triples}\label{sec:2prproperties}
\begin{restatable}[]{property}{twoprExact}
\label{pr:2prExact}
If \(R\in \Rel(\Sigma)\) (no assumption on $R$), then \(R = R_{\tuple{P_R, \tau_R, Q_R}}\) holds.
\end{restatable}

Moreover, 2PR triples encoding reflexive relations can be characterized by extensivity of  \(\lambda B\in P. \cup\!\tau(B)\):
\begin{restatable}[]{property}{twoprReflexiveCharacterization}
\label{pr:2prReflexiveCharacterization}
Let \(\tuple{P, \tau, Q}\) be a 2PR triple over \(\Sigma\). Then:
\[
R_{\tuple{P, \tau, Q}} \textit{ is reflexive} \Lra \forall B \in P \ldotp B \subseteq \cup\tau(B)\enspace .
\]
\end{restatable}
Finally, it turns out that 2PR triples induced by preorders enjoy the following property:
\begin{restatable}[]{property}{twoprQO}
\label{pr:2prQO}
Let \(R\) be a preorder and \(\tuple{P_R, \tau_R, Q_R}\) be the induced 2PR triple. Then, \(P_R = Q_R\).
\end{restatable}

\section{Proofs}\label{sec:app-proofs}
\correctnessExplicit*
\begin{proof}
The following facts on the output tuple $\tuple{R,\sigma}$ hold:
\begin{romanenumerate}
\item At every iteration, $R$ is reflexive.\\
This holds because at the beginning $R$ is a preorder and the update and refine block of $R$ at lines~\ref{loc:ex_refine_begin}--\ref{loc:ex_refine_end}
of Algorithm~\ref{algo:explicit_sound}
preserves the reflexivity of $R$, in particular the refinement statement at line \ref{loc:ex_refine_principal}.
\item For all $y\in \Sigma$, $\Rsim(y) \subseteq R(y)$.\\ 
This holds because the \emph{Refine} block of Algorithm~\ref{algo:explicit_sound} is always correct, so that  
$\Rsim(y) \subseteq R(y)\subseteq R_i(y)$ holds for every iteration of Algorithm~\ref{algo:explicit_sound}.

\item $x\in \post^*(I)$ $\Ra$ $R(x)\cap \sigma \neq \varnothing$.\\
This is proven by induction on $n\in \bN$ such that $I\sra^n x$. 
If $n=0$ then $x\in I$, so that, since $R$ is reflexive following \textcolor{lipicsGray}{\sffamily\bfseries\upshape (i)}, 
$x\in R(x)$ and therefore $R(x)\cap (I\cup \post(\sigma))\neq \varnothing$. 
Hence, $U=\varnothing$ implies that  $R(x)\cap \sigma \neq \varnothing$. 
If $n>0$ then $I\sra^n x' \sraa{a} x$, and
by inductive hypothesis, $R(x')\cap \sigma \neq \varnothing$. 
Therefore, since $V=\varnothing$,   $R(x') \subseteq \pre_a(R(x))$ must hold. 
Thus, $\pre_a(R(x)) \cap \sigma \neq \varnothing$, so that $R(x) \cap \post_a(\sigma) \neq \varnothing$. 
Hence, $U=\varnothing$ implies  $R(x)\cap \sigma \neq \varnothing$.

\item For all $x \in \Sigma$, $R(x)\cap \sigma \neq \varnothing \Ra R(x) = \Rsim(x)$.\\
Let $\Sim$ be the basic simulation algorithm as recalled in Algorithm~\ref{basicsim}.
By \textcolor{lipicsGray}{\sffamily\bfseries\upshape (ii)}, $\Rsim \subseteq R \subseteq R_i$.
Thus, $\Sim(R) = \Rsim$ because
$\Rsim = \Sim(\Rsim) \subseteq \Sim(R) \subseteq \Sim(R_i) = \Rsim$.
Assume, by contradiction, that $\bS \ud \{R(x) \in \wp(\Sigma) \mid R(x)\cap \sigma \neq \varnothing, R(x) \neq \Rsim(x)\}\neq \varnothing$,
so that each $R(x)\in \bS$, will be refined 
at some iteration of $\Sim(R)$.
Then, let \(R(x) \in \bS\) be the first principal in \(\bS\) such that  $R(x)$  is refined by $\Sim$
at some iteration $\mathit{it}$ whose current relation is $R'$. Thus, 
each principal \(R(z)\in\bS\) is such that $R(z)=R'(z)$ because no principal 
in $\bS$ was refined  by $\Sim$ before this iteration $\mathit{it}$.
Let $R'(y)\subseteq R(y)$ be the principal of $R'$ used by $\Sim$ in this iteration $\mathit{it}$ 
to refine \(R(x)\), so that \(x \sraa{a} y\),  
\(R'(x)=R(x) \not\subseteq \pre_a(R'(y))\).
We have $R(x)\cap \sigma \neq \varnothing$, \(x \sraa{a} y\) and $V=\varnothing$ entail $R(x)\subseteq \pre_a(R(y))$. Hence, 
$R(x)\cap \sigma \neq \varnothing$ implies  
$R(y)\cap \post_a(\sigma)\neq \varnothing$, 
so that $U=\varnothing$ entails $R(y)\cap \sigma\neq \varnothing$.
If $R(y)\not\in\bS$ then $R(y)\cap \sigma \neq \varnothing$ implies $R(y)=\Rsim(y)\subseteq
R'(y)\subseteq R(y)$, so that $R'(y)= R(y)$ holds. 
If $R(y)\in\bS$ then, since $R(x)$ is the first principal in $\bS$ to be refined by $\Sim$, we 
have that $R'(y)= R(y)$ holds. Thus, in both cases, $R'(y)= R(y)$ must hold, 
so that $R(x) \not\subseteq \pre_a(R(y))$. Moreover, $R(x)\cap \sigma \neq \varnothing$, 
\(x \sraa{a} y\) and $V=\varnothing$ imply $R(x) \subseteq \pre_a(R(y))$, 
which is therefore a contradiction. Thus, $\bS=\varnothing$, and, in turn, $R(x) = \Rsim(x)$.

\item $\sigma\subseteq \post^*(I)$. \\
This follows by \(\sigma_i \subseteq \post^*(I)\) and because $\sigma$ is updated at line~\ref{loc:ex_sigmaupdate} of Algorithm~\ref{algo:explicit_sound} by 
adding $s\in I\cup \post(\sigma)$ and $\post^*(I)$ is the least fixpoint
of $\lambda X.I\cup \post(X)$. 

\end{romanenumerate}
\noindent
Let us now show the equality \eqref{corr-stm-expl_sound}. 

$(\subseteq)$
Let $x\in \Sigma$ such that  $\Rsim(x)\cap \post^*(I)\neq \varnothing$. Thus, there exists $s\in \post^*(I)$
such that $s\in \Rsim(x)$. By \textcolor{lipicsGray}{\sffamily\bfseries\upshape (iii)}, $R(s) \cap \sigma \neq \varnothing$. By \textcolor{lipicsGray}{\sffamily\bfseries\upshape (iv)}, $R(s)= \Rsim(s)$. 
Moreover, $s\in \Rsim(x)$ implies $\Rsim(s)  \subseteq \Rsim(\Rsim(x))=\Rsim(x)$, and since, 
by \textcolor{lipicsGray}{\sffamily\bfseries\upshape (ii)}, $\Rsim(x) \subseteq R(x)$, we obtain $\Rsim(s)\subseteq R(x)$. 
Thus,  $R(s)\subseteq R(x)$ holds, so that  $R(s) \cap \sigma \neq \varnothing$ entails
$R(x) \cap \sigma \neq \varnothing$, and in turn, by \textcolor{lipicsGray}{\sffamily\bfseries\upshape (iii)}, $R(x)=\Rsim(x)$.

$(\supseteq)$ Let $x\in \Sigma$ such that  $R(x) \cap \sigma \neq \varnothing$. By \textcolor{lipicsGray}{\sffamily\bfseries\upshape (iv)}, $R(x)=\Rsim(x)$, so that $\Rsim(x)\cap \sigma \neq \varnothing$. By \textcolor{lipicsGray}{\sffamily\bfseries\upshape (v)}, $\Rsim(x)\cap \post^*(I) \neq \varnothing$.

\medskip
\noindent
Let us now prove \eqref{corr-stm-expl_sound2}.
Let $x \in \Sigma$ such that $\Psim(x)\cap \post^*(I) \neq \varnothing$.
By definition of \(P\), we have that for all $x\in \Sigma$, 
$P(x)=\{y \in \Sigma \mid R(x) = R(y)\}$. 
Since $\Psim(x)\subseteq \Rsim(x)$, we have that $\Rsim(x)\cap \post^*(I) \neq \varnothing$.
Let $y\in \Psim(x)$. Observe that 
$\Psim(x) = \{z\in \Sigma \mid \Rsim(x)=\Rsim(z)\}$.  
Thus, $\Rsim(x)=\Rsim(y)$, so that 
$\Rsim(y)\cap \post^*(I) \neq \varnothing$. Thus, by \eqref{corr-stm-expl_sound}, $\Rsim(y) =  R(y)$ and 
$R(y) \cap \sigma \neq \varnothing$.
In particular, $R(x) = \Rsim(x) = \Rsim(y) = R(y)$.
Therefore, $\Psim(x) \subseteq \{y \in \Sigma \mid R(x) = R(y)\}=P(x)$.
On the other hand, if $z \in P(x)$ then $R(z) = R(x)$,
so that $R(x) \cap \sigma \neq \varnothing$ implies 
$R(z) \cap \sigma \neq \varnothing$. 
Thus, by \eqref{corr-stm-expl_sound}, $\Rsim(z) = R(z)$. Hence, 
$\Rsim(z)= R(z) = R(x)= \Rsim(x)$, 
thus proving that $P(x) \subseteq \Psim(x)$, and therefore $P(x) = \Psim(x)$ holds.
Moreover, $R(x)\cap \sigma \neq \varnothing$, thus proving~\eqref{corr-stm-expl_sound2}.
\end{proof}

\terminationExplicit*
\begin{proof}
We first observe that each \emph{Search} iteration adds some new state to \(\sigma\) through the update at line~\ref{loc:ex_sigmaupdate}. Thus, since \(\Sigma\) has finitely many elements, Algorithm~\ref{algo:explicit_sound} will always execute a finite number of \emph{Search} iterations.
Similarly, executing the \emph{Refine} block is guaranteed to remove some state from at least one principal of the current relation, and since each principal has an initial finite number of elements, the overall number of \emph{Refine} iterations is also finite.
Therefore, since every iteration of the while-loop either executes a \emph{Search} or a \emph{Refine}, the algorithm terminates after a finite number of iterations.
Finally, we observe that each iteration computes a finite number of operations, so that this completes the proof.
\end{proof}

\correctnessNonterminatingExplicit*
\begin{proof}
We first observe that at termination \(V = \varnothing\) and one of the two following holds:
\begin{align}
U &= \varnothing \\
U &\neq \varnothing \land (\post(\sigma) \subseteq \sigma) 
\end{align}
Corresponding to, respectively, the case where the algorithm executes line \ref{loc:ntex_algo2call} or \ref{loc:ntex_return2}.
Moreover, in the case $U \neq \varnothing$, we observe that \(\post(\sigma) \subseteq \sigma\), 
together with \(I \subseteq \sigma\) allows us to conclude that \(\post^*(I) \subseteq \sigma\) by fixpoint definition of \(\post^*(I)\), 
and Inv\textsubscript{2} implies \(\sigma = \post^*(I)\).
Moreover, by correctness of Algorithm \ref{algo6_refine_ex} we get that, at line~\ref{loc:ntex_algo2call}, \(\sigma \subseteq \sigma' \subseteq \post^*(I)\) and thus \(\sigma = \sigma' = \post^*(I)\).

We now show that the following facts on the  output pair $\tuple{R,\sigma}$  of Algorithm~\ref{algo:nonterminating_ex}  hold:
\begin{romanenumerate}
\item  For all $x\in \Sigma$, $\Rsim(x) \subseteq R(x)$.\\ 
This holds because the \emph{Refine} block of Algorithm~\ref{algo:nonterminating_ex} is always correct, so that,  for all $x\in \Sigma$, 
 $\Rsim(x)\subseteq R(x) \subseteq R_i(x)$ holds for every iteration of Algorithm~\ref{algo:nonterminating_ex}.
Moreover, correctness of Algorithm~\ref{algo6_refine_ex} ensures that line \ref{loc:ntex_algo2call} preserves this property.

\item $x\in \post^*(I) \Ra \exists s \in \sigma \ldotp R(x) = R(s)$.\\
We distinguish the two possible cases at termination:

\noindent \((U=\varnothing)\): We proceed by induction on $n\in \bN$ such
that $x\in \post^n(I)$.
For $n=0$, $x\in I$, so that by initialization of \(\sigma\), \(x \in \sigma\).
For the inductive case we proceed by induction on $n\in \bN$ such
that $x\in \post^n(I)$:
If $x\in \post^{n+1}(I)$ then there exists $y\in \post^n(I)$ such that 
$y \sra x$, so that, by inductive hypothesis, $R(y) = R(s')$ for some $s'\in\sigma$. 
Since $y\sra x$, $V=\varnothing$ imply
$R(y)\subseteq \pre(R(x))$. Thus, since $R$ is reflexive, $s' \in R(s') = R(y)$ holds, 
implying $s' \in \pre(R(x))$, i.e., 
$R(x)\cap \post(\sigma)\neq \varnothing$. Then, $U=\varnothing$ implies $\exists s \in \sigma \ldotp R(s) = R(x)$.

\noindent \((U\neq\varnothing)\): Since \(\sigma = \post^*(I)\), 
then \(x \in \post^*(I) = \sigma\) and the property trivially holds.

\item $(\exists s \in \sigma \ldotp R(s) = R(x)) \Ra R(x) = \Rsim(x)$.\\ 
We distinguish the two possible cases at termination:

\noindent \((U\neq\varnothing)\): 
By reflexivity, \(s \in R(s) = R(x)\) implies \(R(x) \cap \sigma\neq\varnothing\), and
thus correctness of Algorithm~\ref{algo6_refine_ex} let us conclude \(R(x) = \Rsim(x)\).

\noindent \((U=\varnothing)\): Let \(\Sim\) be the basic simulation algorithm taking in input 
a relation $R$ and computing, for finite state systems, $\Rsim$.

By (i), $\Rsim(z) \subseteq R(z) \subseteq R_i(z)$, for all $z\in \Sigma$. Moreover, \(\Sim(R) = \Rsim\) because \(\Rsim =
\Sim(\Rsim) \subseteq \Sim(R) \subseteq \Sim(R_i) = \Rsim\). 
Let us assume, by contradiction, that
\(\bS \ud \{R(x), \mid x\in \Sigma, R(x) \neq \Rsim(x), \exists s \in \sigma \ldotp R(x) = R(s)\}\neq \varnothing\), so that
every principal in \(\bS\) will be refined at some iteration of 
\(\Sim\) on input \(R\). 
Let \(R(x) \in \bS\) be the first principal in \(\bS\) to be refined by $\Sim$
at some iteration $\mathit{it}$ whose current relation is $R'$ with \(R'\preceq R\), so that
each principal in \(\bS\) is  a principal for \(R'\) since no principal in $\bS$ was refined before this iteration $\mathit{it}$.
Let $R'(y)\subseteq R(y)$ be the principal of $R'$ used in this iteration $\mathit{it}$ 
to refine \(R(x)\), so that \(x \sraa{a} y\),
\(R'(x) = R(x)\not\subseteq \pre_a(R'(y))\). 
We have that \(x \sraa{a} y$, along with 
\( \exists s \in \sigma \ldotp R(s) = R(x) \) and \(V=\varnothing\) entails 
\(R(x) \subseteq \pre_a(R(y))\). 
Hence, by reflexivity \(s \in R(s) = R(x) \subseteq \pre_a(R(y))\) 
implies \(R(y) \cap \post(\sigma) \neq \varnothing\), 
so that \(U = \varnothing\) entails \(\exists s' \in \sigma \ldotp R(y) = R(s')\).

Now, either \(R(y) \not\in \bS\), which implies \(R(y) = \Rsim(y)\) and therefore \(R(y) = R'(y)\),
or \(R(y) \in \bS\) and since no principal in \(\bS\) was refined before, \(R(y) = R'(y)\).

This lets us conclude that \(R(x) \not\subseteq \pre_a(R'(y)) = \pre_a(R(y))\),
which is a contradiction to \(R(x) \subseteq \pre_a(R(y))\).
Thus \(\bS = \varnothing\) and, in turn \(R(x) = \Rsim(x)\).

\item $(\exists s \in \sigma \ldotp R(s) = R(x)) \Ra P(x)\cap\post^*(I) = \Psim(x)\cap\post^*(I)$.\\
We first show that
\begin{equation}\label{eq:ntex_blockinclusion}
(\exists s \in \sigma \ldotp R(s) = R(x)) \Ra P(x) \subseteq \Psim(x).
\end{equation}
Consider an element \(y \in P(x)\), then by definition of \(P\) it holds \(R(y) = R(x)\) and thus \(R(y) = R(s)\),
moreover, by \textcolor{lipicsGray}{\sffamily\bfseries\upshape (iii)} we get \(\Rsim(y) = R(y) = R(x) = \Rsim(x)\) and thus by definition of \(\Psim\) it follows that \(y \in \Psim(x)\),
meaning \(P(x) \subseteq \Psim(x)\), which proves \eqref{eq:ntex_blockinclusion}. The inclusion \(P(x)\cap\post^*(I) \subseteq \Psim(x)\cap\post^*(I)\) follows by definition from \eqref{eq:ntex_blockinclusion}.
On the other hand, pick some \(y \in \Psim(x)\cap\post^*(I)\), then by definition of \(\Psim\) we get \(\Rsim(y) = \Rsim(x)\),
by \textcolor{lipicsGray}{\sffamily\bfseries\upshape (ii)}, we get that \(y \in \post^*(I)\) entails \(R(y) = R(s')\) for some \(s' \in \sigma\)
and \textcolor{lipicsGray}{\sffamily\bfseries\upshape (iii)} implies both \(R(y) = \Rsim(y) = \Rsim(x) = R(x)\).
Thus by definition of \(P\), \(y \in P(x)\) and therefore \(\Psim(x)\cap\post^*(I) \subseteq P(x)\cap\post^*(I)\).

\item \(s \in \sigma \Ra \Psim(s) = \Psim(P(s))\):
Take a state \(s \in \sigma\), then it holds that, by~\eqref{eq:ntex_blockinclusion}, \(s \in P(s) \subseteq \Psim(s)\), 
and by definition of additive lifting it holds \(\bigcup_{x \in \{s\}}\Psim(x) \subseteq \bigcup_{x \in P(s)}\Psim(x) \subseteq \bigcup_{x \in \Psim(s)}\Psim(x)\) 
that, in turn, entails \(\Psim(s) \subseteq \Psim(P(s)) \subseteq \Psim(\Psim(s)) = \Psim(s)\), 
and therefore \(\Psim(s) = \Psim(P(s))\).

\item $\sigma \subseteq \post^*(I)$.\\ 
This holds since \(\sigma_i \subseteq \post^*(I)\) and updates at lines~\ref{loc:ntex_sigmaupdate},~\ref{loc:ntex_sigma_expand_update} preserve this property.

\end{romanenumerate}

Let us now show \eqref{corr-stm-ntex}: 
Consider \(x \in \post^*(I)\), then by \textcolor{lipicsGray}{\sffamily\bfseries\upshape (ii)} there exists some \(s \in \sigma\) s.t. \(R(s) = R(x)\)
and by \textcolor{lipicsGray}{\sffamily\bfseries\upshape (iii)} \(R(x) = \Rsim(x)\), proving that \(\Rsim(x) \in \{R(y) \mid y \in \sigma\}\).
On the other hand, consider a state \(s \in \sigma\), then by \textcolor{lipicsGray}{\sffamily\bfseries\upshape (vi)} we have \(s \in \post^*(I)\),
moreover, by \textcolor{lipicsGray}{\sffamily\bfseries\upshape (iii)} we get \(R(s) = \Rsim(s)\) and thus we get \(R(s) \in \{\Rsim(y) \mid y \in \post^*(I)\}\).

We now prove \eqref{corr-stm-ntex2}: 
Consider a block \(\Psim(z)\) s.t. \(\Psim(z) \cap \post^*(I) \neq \varnothing\), then \(\Psim(z) = \Psim(x)\) for some \(x \in \post^*(I)\).
By \textcolor{lipicsGray}{\sffamily\bfseries\upshape (ii)}, \(R(x) = R(s)\) for some \(s \in \sigma\) and by \textcolor{lipicsGray}{\sffamily\bfseries\upshape (iii)} \(\Rsim(x) = R(x) = R(s) = \Rsim(s)\) meaning \(\Psim(s) = \Psim(x) = \Psim(z)\), moreover, by \textcolor{lipicsGray}{\sffamily\bfseries\upshape (iv)} we get \(\Psim(s) \cap \post^*(I) = P(s) \cap\post^*(I)\) 
and by reflexivity of \(P\) we get \(\Psim(z) \cap \post^*(I) \in \{B \cap \post^*(I) \mid B \in P,\, B \cap \sigma \neq \varnothing\}\).

For the other inclusion, we consider \(P(z)\) s.t. \(P(z) \cap \sigma \neq \varnothing\) 
and thus \(P(z) = P(s)\) for some \(s \in \sigma\) and \(R(z) = R(s)\), by definition of \(P\).
By \textcolor{lipicsGray}{\sffamily\bfseries\upshape (iv)}, \(P(z) \cap \post^*(I) = \Psim(z) \cap \post^*(I)\),
moreover, \textcolor{lipicsGray}{\sffamily\bfseries\upshape (v)} entails \(s \in P(z) \cap \post^*(I)\) and thus \(s \in \Psim(z) \cap \post^*(I)\),
meaning \(\Psim(z) \cap \post^*(I) \neq \varnothing\) and thus proving 
\(P(z) \cap \post^*(I) \in \{B \cap \post^*(I) \mid B \in \Psim,\, B \cap \post^*(I) \neq \varnothing\}\).

Finally, we prove the two inclusions of \eqref{corr-stm-ntex3}:

(\(\subseteq\)): Pick a block \(\Psim(x)\) s.t. \(\Psim(x) \cap \post^*(I) \neq \varnothing\) 
and consider \(z \in \Psim(x)\cap\post^*(I)\), then by definition of \(\Psim\) we have \(\Rsim(x) = \Rsim(z)\),
moreover by \textcolor{lipicsGray}{\sffamily\bfseries\upshape (ii)} there exists some state \(s \in \sigma\) s.t. \(R(s) = R(z)\) and by \textcolor{lipicsGray}{\sffamily\bfseries\upshape (iii)} 
we get \(\Rsim(z) = R(z) = R(s) = \Rsim(s)\) therefore proving \(s \in \Psim(z) = \Psim(x)\).
Moreover, by definition of \(\Psim\) it holds \(\Psim(x) = \Psim(s)\) and by \textcolor{lipicsGray}{\sffamily\bfseries\upshape (v)} we get \(\Psim(x) = \Psim(P(s))\) thus proving the inclusion since \(s \in P(s)\).

(\(\supseteq\)): We consider a state \(s \in \sigma\) and the corresponding \(P(s)\), then by \textcolor{lipicsGray}{\sffamily\bfseries\upshape (v)} we get \(\Psim(P(s)) = \Psim(s) \in \Psim\) and by \textcolor{lipicsGray}{\sffamily\bfseries\upshape (vi)} we get that, since \(s \in \Psim(s)\), then \(\Psim(s) \cap \post^*(I) \neq \varnothing\), proving the inclusion.
\end{proof}

\terminationNonterminatingExplicit*
\begin{proof}
We first observe that executing the \emph{Refine} block is guaranteed to remove some state from at least one principal of the current relation, and since each principal has an initial finite number of elements, the overall number of \emph{Refine} iterations is finite.
Then, every \emph{Expand} iteration either adds some new state to \(\sigma\) through the update at line~\ref{loc:ntex_sigma_expand_update}, or it reaches the return statement at line~\ref{loc:ntex_algo2call}, and thus the number of \emph{Expand} iterations is also finite.
Moreover, every \emph{Search} iteration will either add some new state to \(\sigma\) through the update at line~\ref{loc:ntex_sigmaupdate}, or some new state to $\Ubad$ through the update at line~\ref{loc:ntex_ubadupdate}. Thus, since \(\Sigma\) has finitely many elements,  
the algorithm will always execute a finite number of \emph{Search} iterations which expand \(\sigma\).
Finally, the number of \emph{Search} iterations expanding \(\Ubad\) occurring in between two iterations resetting \(\Ubad\) to \(\varnothing\) is also finite, since \(\Ubad \subseteq \Sigma\) and \(\Sigma\) is finite, and observing that the iterations executing \(\Ubad := \varnothing\) are, in turn, either \emph{Refine} iterations, non terminating \emph{Expand} iterations, or \emph{Search} iterations expanding \(\sigma\), which we have shown to be finitely many allows us to conclude that the total number of \emph{Search} iterations is also finite.

Therefore, since every iteration of the while-loop executes either a \emph{Search}, \emph{Refine} or \emph{Expand} iteration, the algorithm terminates after a finite number of iterations.
Observing that each iteration computes a finite number  of operations, and in particular the execution of Algorithm~\ref{algo6_refine_ex} at line~\ref{loc:ntex_algo2call} is guaranteed to terminate by Theorem~\ref{thm:algo_sound_finite_termination} completes the proof.
\end{proof}

\twoprQO*
\begin{proof}
We first observe that for any preorder \(R\), \(P_R = R \cap R^{-1}\) holds.
In fact, for all \(x \in \Sigma\), assume \(y \in P_R(x)\), then \(R(x) = R(y)\) and by reflexivity of \(R\), \(x R y R x\) holds, thus proving \((x, y), (y, x) \in R\cap R^{-1}\).
Conversely, assume \((x, y), (y, x) \in R\cap R^{-1}\) and consider \(z \in R(x)\), thus \(y R x R z\) an by transitivity \(z \in R(x)\), proving \(R(x) \subseteq R(y)\). The proof for \(R(y) \subseteq R(x)\) is symmetric.

Finally, we observe that since \(R\) is a preorder, then \(R^{-1}\) is a preorder too and therefore by the previous result \(Q_R = R^{-1} \cap (R^{-1})^{-1}\).
Observing that \(R^{-1} \cap (R^{-1})^{-1} = R \cap R^{-1}\) completes the proof.
\end{proof}

\twoprExact*
\begin{proof}
We first observe that \(\forall x \in \Sigma \ldotp R(x) = R(P_R(x))\), by additive lifting of \(R\) over the elements of \(P_R(x)\), and the fact that \(z \in P_R(x) \Lra R(z) = R(x)\). 
Moreover, \(P_R\) and \(Q_R\) are both partitions of \(\Sigma\), thus for all \(x\in \Sigma\), both \(x \in P_R(x)\) and \(x \in Q_R(y)\).
We prove the two inclusions for the property:\\
(\(\subseteq\)): Let \(y \in R(x)\), then by definition of \(P_R\), \(x \in P_R(x)\) and similarly \(y \in Q_R(y)\).
We show that \(Q_R(y) \in \tau_R(P_R(x))\), in fact \(\forall z \in Q_R(y) \ldotp x \in R^{-1}(y) = R^{-1}(z)\) holds,
and since \(R(x) = R(P_R(x))\) then \(Q_R(y) \subseteq R(P_R(x))\).
Therefore, \(Q_R(y) \in \tau_R(P_R(x))\) hols and as a consequence \(y \in Q_R(y) \subseteq \cup\tau_R(P_R(x)) = R_{\tuple{P_R, \tau_R, Q_R}}(x)\).

(\(\supseteq\)): Let \(y \in R_{\tuple{P_R, \tau_R, Q_r}}(x)\), then \(Q_R(y) \in \tau_R(P_R(x))\) and thus \(Q_R(y) \subseteq R(P_R(x))\) holds.
Since \(Q_R\) is a partition and \(R(P_R(x)) = R(x)\), then we get \(y \in Q_R(y) \subseteq R(x)\), completing the proof.
\end{proof}

\twoprReflexiveCharacterization*
\begin{proof}
We prove the two implications:

(\(\Ra\)): Let \(B \in P\) and \(x \in B\), then by reflexivity \(x \in R_{\tuple{P, \tau, Q}}(x) = \cup\tau(P(x)) = \cup\tau(B)\).

(\(\La\)): Let \(x \in \Sigma\), then by extensivity of \(\cup\tau\) it holds \(x \in P(x) \subseteq \cup\tau(P(x)) = R_{\tuple{P, \tau, Q}}(x)\).
\end{proof}

\correctnessSymbolic*
\begin{proof}
We show some preliminary properties for the algorithm:
\begin{romanenumerate}
\item At each iteration: \(B \subseteq \cup\tau(B)\) for all \(B \in P\).\\
This holds at initialization because $R_i$ is a preorder and we proceed to show that the property is preserved at each \emph{Refine} step.
Suppose that a triple \(\tuple{a, B, C}\) is picked for refinement, and let \(\tuple{P', \tau', Q'}\) be the 2PR triple after execution of the \emph{Refine} block.
We notice that \(\forall X \in P \ldotp X \not\subseteq B \Ra \cup\tau(X) = \cup\tau'(X)\), and that \(\cup\tau(P(B'')) = \cup\tau'(B'')\) since the only instruction removing states from \(\cup\tau(X)\), for some block \(X\) is line~\ref{loc:sym_refine_principal}.
We now take \(x \in B' = B \cap \pre_a(C)\) and proceed to show that \(x \in \cup\tau'(B')\).
Since \(x \in \pre_a(C) \subseteq \pre_a(\cup\tau(C))\) then \(x \in S\) and thus \(Q(x) \cap S \neq \varnothing\). 
We now distinguish two cases in order to show that \(Q'(x) \subseteq S\):

(\(Q(x) \subseteq S\)): Thus \(Q(x)\) is not split and \(Q'(x) = Q(x) \subseteq S\).

(\(Q(x) \not\subseteq S\)): Thus \(Q(x)\) is split and \(Q'(x) = X \cap S \subseteq S\) by definition.

Therefore, since \(Q'(x) \subseteq S\) holds, then we get \(Q'(x) \in \tau(B')\) after line~\ref{loc:sym_refine_principal} and thus we conclude \(x \in \cup\tau'(B')\), completing the proof.

\item At each iteration: $\forall x\in \Sigma \ldotp \Rsim(x) \subseteq R_{\tuple{P, \tau, Q}}(x) \subseteq R_i(x)$.\\ 
This holds at initialization since \(R_{\tuple{P_{R_i}, \tau_{R_i}, Q_{R_i}}} = R_i\), and we proceed to show that every \emph{Refine} step preserves the invariant.
Let us consider a \emph{Refine} iteration, the corresponding 2PR triple \(\tuple{P, \tau, Q}\) for which the property holds, the triple \(\tuple{a, B, C} \in V\) selected for refinement and the 2PR triple \(\tuple{P', \tau', Q'}\) at the end of the \emph{Refine} block.
We first note that \(x \not\in B \Ra \cup\tau(P(x)) = \cup\tau'(P'(x))\) and  similarly \(x \in B \smallsetminus \pre_a(C) \Ra \cup\tau(P(x)) = \cup\tau'(P'(x))\) since the \emph{Refine} step splits no block in \(P\) other than \(B\) and updates the underlying principal for states in \(B \cap \pre_a(C)\) only.
Thus we proceed by assuming \(x \in B \cap \pre_a(C)\) and show that for every \(y \in \cup\tau(P(x)) \smallsetminus \cup\tau'(P'(x))\) it holds that \(y \not\in \Rsim(x)\).

In order to do so we observe that \(Q'(y) \not\in \tau'(P'(x))\) implies that line~\ref{loc:sym_refine_principal} removed the block \(Q'(y)\), thus meaning \(Q'(y) \not\subseteq S\).
We now show that \(y \not\in S\) by distinguishing two cases:

(\(Q(y) \cap S = \varnothing\)):
This case entails \(y \not\in S\) trivially, by definition of intersection.

(\(Q(y) \cap S \neq \varnothing\)):
Since \(Q'(y) \subseteq Q(y)\) then \(Q(y) \not\subseteq S\), meaning the block \(Q(y)\) was split during the \emph{Refine} step, meaning that either \(Q'(y) = Q(y) \cap S\) or \(Q'(y) = Q(y) \smallsetminus S\).
We note that, the case \(Q'(y) = Q(y) \cap S\) leads to a contradiction since \(Q'(y) \not\subseteq S\) and thus we infer \(Q'(y) = Q(y) \smallsetminus S\).
Observing that \(y \in Q'(y)\) lets us conclude \(y \not\in S\).

We now observe that, since \(x \in B \cap \pre_a(C)\), then there exists some \(x' \in C\) s.t. \(x \sraa{a} x'\).
Since \(y \not\in S\) and \(\Rsim(x') \subseteq \cup\tau(P(x'))\) we can conclude that \(y \not\in \pre_a(\Rsim(x')) \subseteq S\), which together with \(x \sra x'\) implies \(y \not\in \Rsim(x)\).

\item At termination: $x\in \post^*(I)$ $\Ra$ $R_{\tuple{P, \tau, Q}}(x)\cap \sigma \neq \varnothing$.\\
This is proven by induction on $n\in \bN$ such that $I\sra^n x$. 
If $n=0$ then $x\in I$, so that, since $P(x) \subseteq \cup\tau(P(x))$ following \textcolor{lipicsGray}{\sffamily\bfseries\upshape (i)}, and \(x \in P(x)\), then $\cup\tau(P(x))\cap (I\cup \post(\sigma))\neq \varnothing$. 
Hence, $U=\varnothing$ implies that  $\cup\tau(P(x))\cap \sigma \neq \varnothing$. 
If $n>0$ then $I\sra^n x' \sraa{a} x$, and
by inductive hypothesis, $\cup\tau(P(x'))\cap \sigma \neq \varnothing$. 
Therefore, since $V=\varnothing$, and \(P(x') \cap \pre_a(P(x)) \neq \varnothing\), then $\cup\tau(P(x')) \subseteq \pre_a(\cup\tau(P(x)))$ must hold. 
Thus, $\pre_a(\cup\tau(P(x))) \cap \sigma \neq \varnothing$, so that $\cup\tau(P(x)) \cap \post_a(\sigma) \neq \varnothing$. 
Hence, $U=\varnothing$ implies  $\cup\tau(P(x))\cap \sigma \neq \varnothing$.

\item At termination: $\forall x\in \Sigma \ldotp R_{\tuple{P, \tau, Q}}(x)\cap \sigma \neq \varnothing \Ra R_{\tuple{P, \tau, Q}}(x) = \Rsim(x)$.\\
Let $\Sim$ be the basic simulation algorithm as recalled in Algorithm~\ref{basicsim}.
By \textcolor{lipicsGray}{\sffamily\bfseries\upshape (ii)}, $\Rsim \subseteq R_{\tuple{P, \tau, Q}} \subseteq R_i$.
Thus, $\Sim(R_{\tuple{P, \tau, Q}}) = \Rsim$ because
$\Rsim = \Sim(\Rsim) \subseteq \Sim(R_{\tuple{P, \tau, Q}}) \subseteq \Sim(R_i) = \Rsim$.
Assume, by contradiction, that $\bS \ud \{R_{\tuple{P, \tau, Q}}(x) \in \wp(\Sigma) \mid R_{\tuple{P, \tau, Q}}(x)\cap \sigma \neq \varnothing, R_{\tuple{P, \tau, Q}}(x) \neq \Rsim(x)\}\neq \varnothing$,
so that each $R_{\tuple{P, \tau, Q}}(x)\in \bS$, will be refined 
at some iteration of $\Sim(R_{\tuple{P, \tau, Q}})$.
Then, let \(R_{\tuple{P, \tau, Q}}(x) \in \bS\) be the first principal in \(\bS\) such that  $R_{\tuple{P, \tau, Q}}(x)$  is refined by $\Sim$
at some iteration $\mathit{it}$ whose current relation is $R'$. 
Thus, each principal \(R_{\tuple{P, \tau, Q}}(z)\in\bS\) is such that $R_{\tuple{P, \tau, Q}}(z)=R'(z)$ because no principal 
in $\bS$ was refined  by $\Sim$ before this iteration $\mathit{it}$.
Let $R'(y)\subseteq R_{\tuple{P, \tau, Q}}(y)$ be the principal of $R'$ used by $\Sim$ in this iteration $\mathit{it}$ 
to refine \(R_{\tuple{P, \tau, Q}}(x)\), so that \(x \sraa{a} y\),  
\(R'(x)=R_{\tuple{P, \tau, Q}}(x) \not\subseteq \pre_a(R'(y))\).
By termination of Algorithm~\ref{algo:symbolic_sound} it holds that $R_{\tuple{P, \tau, Q}}(x)\cap \sigma \neq \varnothing$, \(P(x) \cap \pre(P(y)) \neq \varnothing\) and $V=\varnothing$ entail $\cup\tau(P(x))\subseteq \pre_a(\cup\tau(P(y)))$. 
Hence, $\cup\tau(P(x))\cap \sigma \neq \varnothing$ implies  
$\cup\tau(P(y))\cap \post_a(\sigma)\neq \varnothing$, 
so that $U=\varnothing$ entails $\cup\tau(P(y))\cap \sigma\neq \varnothing$, that is \(R_{\tuple{P, \tau, Q}}(y) \cap \sigma\neq \varnothing\).
If $R_{\tuple{P, \tau, Q}}(y)\not\in\bS$ then $R_{\tuple{P, \tau, Q}}(y)\cap \sigma \neq \varnothing$ implies $R_{\tuple{P, \tau, Q}}(y)=\Rsim(y)\subseteq
R'(y)\subseteq R_{\tuple{P, \tau, Q}}(y)$, so that $R'(y)= R_{\tuple{P, \tau, Q}}(y)$ holds. 
If $R_{\tuple{P, \tau, Q}}(y)\in\bS$ then, since $R_{\tuple{P, \tau, Q}}(x)$ is the first principal in $\bS$ to be refined by $\Sim$, we 
have that $R'(y)= R_{\tuple{P, \tau, Q}}(y)$ holds. Thus, in both cases, $R'(y)= R_{\tuple{P, \tau, Q}}(y)$ must hold, 
so that $R_{\tuple{P, \tau, Q}}(x) \not\subseteq \pre_a(R_{\tuple{P, \tau, Q}}(y))$, which gives a contradiction to \(\cup\tau(P(x))\subseteq \pre_a(\cup\tau(P(y)))\).
 Thus, $\bS=\varnothing$, and, in turn, $R_{\tuple{P, \tau, Q}}(x) = \Rsim(x)$.
 
 \item At each iteration: $\sigma\subseteq \post^*(I)$. \\
This follows by \(\sigma_i \subseteq \post^*(I)\) and because $\sigma$ is updated at line~\ref{loc:sym_sigmaupdate} of Algorithm~\ref{algo:symbolic_sound} by 
adding $s\in I\cup \post(\sigma)$ and $\post^*(I)$ is the least fixpoint
of $\lambda X.I\cup \post(X)$. 

\item At termination: \(P_{\tuple{P, \tau, Q}}(x) \cap \sigma \neq \varnothing \Ra P_{\tuple{P, \tau, Q}}(x) = \Psim(x)\).\\
We first observe that, assuming \(P_{\tuple{P, \tau, Q}}(x) \cap \sigma \neq \varnothing\), and considering \(s \in P_{\tuple{P, \tau, Q}}(x) \cap \sigma\) we find that
\(R_{\tuple{P, \tau, Q}}(x) = \cup\tau(P(x)) = \cup\tau(P(s)) = R_{\tuple{P, \tau, Q}}(s)\).
Moreover, since by \textcolor{lipicsGray}{\sffamily\bfseries\upshape (i)}, \(s \in P(s) \subseteq \cup\tau(P(s))\) then \(R_{\tuple{P, \tau, Q}}(s) \cap \sigma \neq \varnothing\).
Therefore, by \textcolor{lipicsGray}{\sffamily\bfseries\upshape (iv)}, \(\Rsim(x) = R_{\tuple{P, \tau, Q}}(x) = R_{\tuple{P, \tau, Q}}(s) = \Rsim(s)\).
We prove the two inclusions of \(P_{\tuple{P, \tau, Q}}(x) = \Psim(x)\) separately:

(\(\subseteq\)):
Assume \(y \in P_{\tuple{P, \tau, Q}}(x)\), then \(\cup\tau(P(y)) = \cup\tau(P(x))\) and since \(\cup\tau(P(x)) \cap \sigma \neq \varnothing\) then \(\cup\tau(P(y)) \cap \sigma \neq \varnothing\).
Finally, \textcolor{lipicsGray}{\sffamily\bfseries\upshape (iv)} entails \(\Rsim(y) = \cup\tau(P(y)) = \cup\tau(P(x)) = \Rsim(x)\) thus proving \(y \in \Psim(x)\).

(\(\supseteq\)):
Assume \(y \in \Psim(x)\), then \(\Rsim(y) = \Rsim(x) = \Rsim(s)\).
Moreover, since \(s \in \sigma\), and \(s \in \Rsim(s)\), we get \(\Rsim(y) \cap \sigma \neq \varnothing\) and since, by \textcolor{lipicsGray}{\sffamily\bfseries\upshape (ii)}, \(\Rsim(y) \subseteq R_{\tuple{P, \tau, Q}}(y)\) then \(R_{\tuple{P, \tau, Q}}(y) \cap \sigma \neq \varnothing\).
Therefore, \textcolor{lipicsGray}{\sffamily\bfseries\upshape (iv)} entails \(R_{\tuple{P, \tau, Q}}(y) = \Rsim(y)\), 
and thus \(R_{\tuple{P, \tau, Q}}(y) = \Rsim(y) = \Rsim(x) = R_{\tuple{P, \tau, Q}}(x)\) holds,
proving \(\cup\tau(P(y)) = \cup\tau(P(x))\) and thus \(y \in P_{\tuple{P, \tau, Q}}(x)\).

\end{romanenumerate}
\noindent
Let us now show the equality \eqref{corr-stm-sym_sound}. 

$(\subseteq)$
Let $x\in \Sigma$ such that  $\Rsim(x)\cap \post^*(I)\neq \varnothing$. Thus, there exists $s\in \post^*(I)$
such that $s\in \Rsim(x)$. By \textcolor{lipicsGray}{\sffamily\bfseries\upshape (iii)}, $R_{\tuple{P, \tau, Q}}(s) \cap \sigma \neq \varnothing$. By \textcolor{lipicsGray}{\sffamily\bfseries\upshape (iv)}, $R_{\tuple{P, \tau, Q}}(s)= \Rsim(s)$. 
Moreover, $s\in \Rsim(x)$ implies $\Rsim(s) \subseteq \Rsim(\Rsim(x))=\Rsim(x)$, and since, 
by \textcolor{lipicsGray}{\sffamily\bfseries\upshape (ii)}, $\Rsim(x) \subseteq R_{\tuple{P, \tau, Q}}(x)$, we obtain $\Rsim(s)\subseteq R_{\tuple{P, \tau, Q}}(x)$. 
Thus,  $R_{\tuple{P, \tau, Q}}(s)\subseteq R_{\tuple{P, \tau, Q}}(x)$ holds, so that  $R_{\tuple{P, \tau, Q}}(s) \cap \sigma \neq \varnothing$ entails
$R_{\tuple{P, \tau, Q}}(x) \cap \sigma \neq \varnothing$, and in turn, by \textcolor{lipicsGray}{\sffamily\bfseries\upshape (iv)}, $R_{\tuple{P, \tau, Q}}(x)=\Rsim(x)$.

$(\supseteq)$ Let $x\in \Sigma$ such that  $R_{\tuple{P, \tau, Q}}(x) \cap \sigma \neq \varnothing$. By \textcolor{lipicsGray}{\sffamily\bfseries\upshape (iv)}, $R_{\tuple{P, \tau, Q}}(x)=\Rsim(x)$, so that $\Rsim(x)\cap \sigma \neq \varnothing$. By \textcolor{lipicsGray}{\sffamily\bfseries\upshape (v)}, $\Rsim(x)\cap \post^*(I) \neq \varnothing$.

\medskip
\noindent
Let us now prove \eqref{corr-stm-sym_sound2}.
Let $x \in \Sigma$ such that $\Psim(x)\cap \post^*(I) \neq \varnothing$.
Since $\Psim(x)\subseteq \Rsim(x)$, we have that $\Rsim(x)\cap \post^*(I) \neq \varnothing$.
Let $y\in \Psim(x)$. Observe that 
$\Psim(x) = \{z\in \Sigma \mid \Rsim(x)=\Rsim(z)\}$.  
Thus, $\Rsim(x)=\Rsim(y)$, so that $\Rsim(y)\cap \post^*(I) \neq \varnothing$. 
Thus, by \eqref{corr-stm-sym_sound}, $\Rsim(y) =  R_{\tuple{P, \tau, Q}}(y)$ 
and $R_{\tuple{P, \tau, Q}}(y) \cap \sigma \neq \varnothing$.
In particular, $R_{\tuple{P, \tau, Q}}(x) = \Rsim(x) = \Rsim(y) = R_{\tuple{P, \tau, Q}}(y)$.

Therefore, $\Psim(x) \subseteq \{y \in \Sigma \mid \cup\tau(P(x)) = \cup\tau(P(y))\}=P_{\tuple{P, \tau, Q}}(x)$.
On the other hand, if $z \in P_{\tuple{P, \tau, Q}}(x)$ then $R_{\tuple{P, \tau, Q}}(z) = \cup\tau(P(z)) = \cup\tau(P(x)) = R_{\tuple{P, \tau, Q}}(x)$,
so that $\cup\tau(P(x)) \cap \sigma \neq \varnothing$ implies 
$R_{\tuple{P, \tau, Q}}(z) \cap \sigma \neq \varnothing$. 
Thus, by \eqref{corr-stm-sym_sound}, $\Rsim(z) = R_{\tuple{P, \tau, Q}}(z)$. Hence, 
$\Rsim(z)= R_{\tuple{P, \tau, Q}}(z) = R_{\tuple{P, \tau, Q}}(x)= \Rsim(x)$, 
thus proving that $P_{\tuple{P, \tau, Q}}(x) \subseteq \Psim(x)$, and therefore $P_{\tuple{P, \tau, Q}}(x) = \Psim(x)$ holds.
Finally, since $R_{\tuple{P, \tau, Q}}(x)\cap \sigma \neq \varnothing$, then \(\cup\tau(P(x)) \cap \sigma \neq \varnothing\).
Therefore, since \(P_{\tuple{P, \tau, Q}}\) is coarser than \(P\), then it holds \(P(x) \subseteq P_{\tuple{P, \tau, Q}}(x)\),
thus proving \(\exists E \in P \ldotp E \subseteq P_{\tuple{P, \tau, Q}}(x) \wedge (\cup\tau(E))\cap \sigma \neq \varnothing\)
and consequently~\eqref{corr-stm-expl_sound2} holds.
\end{proof}

\terminationSymbolic*
\begin{proof}
We first observe that each \emph{Search} iteration adds some new state to \(\sigma\) through the update at line~\ref{loc:sym_sigmaupdate}. Thus, since \(\Sigma\) has finitely many elements, Algorithm~\ref{algo:symbolic_sound} will always execute a finite number of \emph{Search} iterations.
We now proceed to show that the algorithm executes a finite number of \emph{Refine} iterations too.
Consider a \emph{Refine} iteration of the algorithm, the corresponding triple \(\tuple{a, B, C}\) selected for refinement, the 2PR triple \(\tuple{P, \tau, Q}\) before executing the refinement block and the 2PR triple \(\tuple{P', \tau', Q'}\) obtained right after the refinement step has been executed, then
\begin{equation}
\forall x \in B \cap \pre_a(C) \Ra R_{\tuple{P', \tau', Q'}}(x) \subset R_{\tuple{P, \tau, Q}}(x)\enspace .\label{eq:prop_termination_algosym}
\end{equation}
In fact, let \(x \in B \cap\pre_a(C) = B'\) and \(x' \in C\) be s.t. \(x \sraa{a} x'\), then
by definition of \(V\) we get \(\cup\tau(P(x)) \not\subseteq \pre_a(\cup\tau(P(x')))\).
Assume \(y \in \cup\tau(P(x)) \smallsetminus \pre_a(\cup\tau(P(x')))\) and observe that either 
\(Q(y) \cap \pre_a(\cup\tau(C)) = \varnothing\), therefore \(Q(y) = Q'(y)\) will not be split and it will be removed at line~\ref{loc:sym_refine_principal}, or \(Q(y) \cap \pre_a(\cup\tau(C)) \neq \varnothing\), therefore \(Q(y)\) will be split between lines~\ref{loc:sym_refine_Q_begin}--\ref{loc:sym_refine_Q_end} and by definition \(Q'(y) = Q(y) \smallsetminus \pre_a(\cup\tau(C))\) is such that \(Q'(y)\) will be removed at line~\ref{loc:sym_refine_principal}.
Therefore, \(Q'(y) \not\subseteq \pre_a(\cup\tau(C)) = S\), thus showing that
the execution of line~\ref{loc:sym_refine_principal} effectively refines \(\tau(B)\) by removing (some sub-block of) \(Q(y)\) from it, proving \eqref{eq:prop_termination_algosym}.
Therefore, executing the \emph{Refine} block is guaranteed to remove some state from at least one principal of the underlying relation \(R_{\tuple{P, \tau, Q}}\), and since each principal has an initial finite number of elements, the overall number of \emph{Refine} iterations is also finite.
Therefore, since every iteration of the while-loop either executes a \emph{Search} or a \emph{Refine} step, the algorithm executes a finite number of iterations.
Finally, we observe that each iteration computes a finite number of operations, so that this completes the proof.
\end{proof}

\algosymSplittingStates*
\begin{proof}
We show by induction that the following implications hold for any \(k \in \interval{0}{t}\):
\begin{align*}
\big(\forall j \in \interval{0}{k} \ldotp R_j(x) = R_j(y)\big) &\Ra P_k(x) = P_k(y) \\
\big(\forall j \in \interval{0}{t} \ldotp R_j^{-1}(x) = R_j^{-1}(y)\big) &\Ra Q_k(x) = Q_k(y)
\end{align*}
We proceed by induction on \(k\). Assume \(k = 0\), then \(R_i(x) = R_i(y)\) implies \(P_{R_i}(x) = P_{R_i}(y)\) by definition of 2PR triples and since \(P_{R_i} = P_0\) then \(P_k(x) = P_k(y)\) holds. Moreover, Property~\ref{pr:2prQO} ensures that \(P_0 = Q_0\) and thus \(Q_k(x) = Q_k(y)\) holds.
For the inductive case assume \(P_k(x) = P_k(y)\) and \(Q_k(x) = Q_k(y)\).
Assume by contradiction that \(P_{k+1}(x) \neq P_{k+1}(y)\), then this implies that the \(k+1\)-th iteration was a \emph{Refine} one in which a triple \(\tuple{a, P_k(x), C}\) was selected.
Assume \(x \in P_k(x) \cap \pre_a(C)\) and \(y \in P_k(x) \smallsetminus \pre_a(C)\) w.l.o.g., then by \eqref{eq:prop_termination_algosym}
\(R_{k+1}(x) \subseteq R_k(x)\).
Now since the \emph{Refine} block does not update the underlying principal for states in \(P_k(x) \smallsetminus \pre_a(C)\), we get \(R_k(y) = R_{k+1}(y)\) and since by hypothesis \(R_k(x) = R_k(y)\) then it follows that \(R_{k+1}(x) \subseteq R_{k+1}(y)\), and consequently \(R_{k+1}(x) \neq R_{k+1}(y)\), giving a contradiction to the hypothesis \(R_{k+1}(x) = R_{k+1}(y)\), and thus proving \(P_{k+1}(x) = P_{k+1}(y)\).
We now prove the dual result on \(Q\), assume by contradiction that \(Q_{k+1}(x) \neq Q_{k+1}(y)\), then this implies that the \(k+1\)-th iteration was a \emph{Refine} one in which a triple \(\tuple{a, B, C}\) was selected to be stabilized and \(Q_k(x) \in \tau_k(B)\).
Moreover, since \(Q_k(x)\) has been split, then both \(Q_k(x) \cap \pre_a(\cup\tau_k(C)) \neq \varnothing\) and \(Q_k(x) \not\subseteq \pre_a(\cup\tau_k(C))\) hold.
Suppose \(Q_{k+1}(x) = Q_{k}(x) \cap \pre_a(\cup\tau_k(C))\) (the case \(Q_{k+1}(y) = Q_{k}(y) \cap \pre_a(\cup\tau_k(C))\)  is symmetric), then \(Q_{k+1}(y) = Q_{k}(y) \smallsetminus \pre_a(\cup\tau_k(C))\).
Consequently, line \ref{loc:sym_refine_principal} will remove \(Q_{k+1}(y)\) from \(\tau_k(B')\), while \(Q_{k+1}(x) \in \tau_{k+1}(B')\) will still hold. 
Thus, let \(b \in B'\), then \(x \in \cup\tau_{k+1}(P_{k+1}(b))\) but \(y \not\in \cup\tau_{k+1}(P_{k+1}(b))\), which entails \(b \in R_{k+1}^{-1}(x)\) and \(b \not\in R_{k+1}^{-1}(y)\), implying \(R_{k+1}^{-1}(x) \neq R_{k+1}^{-1}(y)\), which gives a contradiction to the hypothesis \(R_{k+1}^{-1}(x) = R_{k+1}^{-1}(y)\), thus proving \(Q_{k+1}(x) = Q_{k+1}(y)\).
Finally, since Algorithm~\ref{algo:symbolic_sound} terminates after \(t\) iterations then \(P = P_t\) and \(Q = Q_t\) and the following holds:
\begin{align*}
\left(\forall j \in \interval{0}{t} \ldotp R_j(x) = R_j(y)\right) &\Ra P(x) = P(y) \\
\left(\forall j \in \interval{0}{t} \ldotp R_j^{-1}(x) = R_j^{-1}(y)\right) &\Ra Q(x) = Q(y)
\end{align*}
Finally, equations~\eqref{eq:2prsplitP}~and~\eqref{eq:2prsplitQ} follow from the above results taking the contrapositive.
\end{proof}

\section{Extension to Infinite Transition Systems}\label{sec:appendix_extension}
We now briefly discuss how to extend the proofs for Theorems~\ref{thm:explicit_sound},~\ref{thm:CorrectnessAlgoNonterminatingEx}~and~\ref{thm:symbolic_sound} to infinite state systems. 
We first note that the only points in the proofs where finiteness of the state space is relied upon are, respectively, point \textcolor{lipicsGray}{\sffamily\bfseries\upshape (iv)} of Theorem~\ref{thm:explicit_sound}, point \textcolor{lipicsGray}{\sffamily\bfseries\upshape (iii)} of Theorem~\ref{thm:CorrectnessAlgoNonterminatingEx} and point \textcolor{lipicsGray}{\sffamily\bfseries\upshape (iv)} of Theorem~\ref{thm:symbolic_sound}, therefore we proceed to show how their common pattern can be generalized to transition systems with infinitely many states (in the following we refer to the notation used in the respective proofs).
Observe that, if the input transition system is finite, then the execution of \textsf{Sim} over the output relation \(R\) is guaranteed to terminate, and correctness of \textsf{Sim} ensures that an iteration $\mathit{it}$ where some element of \(\bS\) is refined will be eventually reached in finite time, for every execution (regardless of how nondeterminism is resolved).
On the other hand, for transition systems with infinitely many states, an iteration such as $\mathit{it}$ might not exist since \textsf{Sim} might not terminate on infinite state systems and, moreover, nondeterminism might choose to refine principals only if they are not marked as reachable 
(that is, \(R(x) \cap \sigma = \varnothing\) for Theorem~\ref{thm:explicit_sound}, \(\nexists s \in \sigma \ldotp R(x) = R(s)\) for Theorem~\ref{thm:CorrectnessAlgoNonterminatingEx} and \(\cup\tau(B)\cap \sigma = \varnothing\) for Theorem~\ref{thm:symbolic_sound}), and which are, therefore, not included in \(\bS\).
However, we can use Assumption~\ref{assumption_algo4} to avoid this situation altogether since, assuming \(\bS \neq \varnothing\) (as we do, by contradiction, in the proofs), we can pick a principal \(R(z) \in \bS\) and some \(z' \in R(z) \smallsetminus \Rsim(z) \neq \varnothing\) by definition of \(\bS\).
Assumption~\ref{assumption_algo4} therefore ensures that there exists an execution where \textsf{Sim} reaches an iteration \(\mathit{it_1}\) where \(z'\) is removed from the principal of \(z\).
Then, let \(R_1\) be the current relation at iteration \(\mathit{it_1} + 1\) and define \(\bS' = \{R(x) \mid x \in \Sigma,\, R(x) \in \bS,\, R_1(x) \neq R(x)\}\) to be the set of principals which have been refined by \textsf{Sim} up to iteration \(\mathit{it_1} + 1\).
We find that \(\bS'\) is nonempty by construction since \(R(z) \in \bS'\), and that \(\Rsim \subseteq R_1\subseteq R\) following Inv\textsubscript{1}.
The proof then proceeds as explained in the respective points \textcolor{lipicsGray}{\sffamily\bfseries\upshape (iv)}, \textcolor{lipicsGray}{\sffamily\bfseries\upshape (iii)} and \textcolor{lipicsGray}{\sffamily\bfseries\upshape (iv)} by considering the first principal in \(\bS'\) to be refined and the corresponding iteration \(\mathit{it}\), allowing us to get a contradiction which proves that \(\bS = \varnothing\).%

\section{Experimental Evidence for 2PR Triples}\label{sec:experiments}
We implemented the algorithms presented in this paper for finite transition systems that are explicitly represented.
In this section, by leveraging this implementation, we empirically compare Algorithm~\ref{algo:explicit_sound}, which uses an explicit representation for \(R\), and Algorithm~\ref{algo:symbolic_sound}, which symbolically represents \(R\) through 2PR triples. 
For the benchmarks we used a small subset of the BEEM database for explicit model checkers~\cite{Pelnek}.
The selected subset of benchmarks consists of mutual exclusion protocols that we modified to increase the number of unreachable states by performing an ``unrolling'' of the protocol loop and then by picking an initial state after the unrolled protocols.
The rationale behind this modification is that, typically, benchmarks available online as explicit transition systems have no unreachable states and, often, are strongly connected, which means that moving the initial state has no effect on the number of unreachable states.

Our experimental results are summarized in Tables~\ref{tab:results}~and~\ref{tab:results2} where the initial equivalence/preorder is set to \(R_i = \Sigma \times \Sigma\), and \(\sigma_i=\varnothing\).
Following their definitions, we infer that the following inequalities must hold for the entries of Tables~\ref{tab:results}~and~\ref{tab:results2} 
\[
  \Sigma \geq P^1 \geq r^1 \enspace \text{, and}\enspace\Sigma \geq P^2 \geq P_{P, \tau, Q}^2 \geq r^2 \enspace .
\]

\begin{table*}[t]
\caption{%
  Execution metrics relative to Algorithm~\ref{algo:explicit_sound}.
  The column \(\mathord{\rightarrow}\) is the number of labeled transitions; 
  \(\Sigma\) the size of the state space;
  \(\sigma^1\) the size of \(\sigma\) as computed by Algorithm~\ref{algo:explicit_sound}, 
  \(P^1\) the size of \(P\) as computed by Algorithm~\ref{algo:explicit_sound} as defined in Theorem~\ref{thm:explicit_sound}, and \(t^1\) is the execution time in seconds;
  \(r^1\) is the number of blocks in the output of Algorithm~\ref{algo:explicit_sound} as per the right hand side of \eqref{corr-stm-expl_sound2}: \(|\{B \in P \mid R(B)\cap \sigma \neq \varnothing\}|\);
  The symbol \(\timeout\) indicates execution timeout after 24 hours.
}
\label{tab:results}
\centering
\begin{tabular}{|l|c|c|c|c|c|c|}\hline
\rule{0pt}{2.5ex}    
  \textbf{Protocol} & ~~~$\rightarrow$~~~ &  ~~~$\Sigma$~~~ & $\sigma^1$ & $P^1$ & $r^1$ & $t^1$ \\[3.5pt] \hline\hline
  \csvreader[
  late after line = \\\hline
  ]{stats.csv}{
  name=\name, edges=\edges, states=\states, esigma=\esigma, ep=\ep, ereach=\ereach, etime=\etime}{%
  \name & \edges & \states & \esigma & \ep &\ereach & \etime 
}%
\end{tabular}
\end{table*}

\begin{table*}[t]
\caption{%
  Execution metrics relative to Algorithm~\ref{algo:symbolic_sound}, and comparison with those of Algorithm~\ref{algo:explicit_sound}.
  The column \(\mathord{\rightarrow}\) is the number of labeled transitions; 
  \(\Sigma\) the size of the state space;
  \(\sigma^2\), \(P^2\), \(Q^2\), \(P_{\tuple{P, \tau, Q}}^2\) are the sizes of \(\sigma\), \(P\), \(Q\) and \(P_{\tuple{P, \tau, Q}}\), respectively, as computed by Algorithm~\ref{algo:symbolic_sound}, and \(t^2\) is the execution time in seconds;
  \(r^2\) is the number of blocks in the output of Algorithm~\ref{algo:symbolic_sound} as per the right hand side of \eqref{corr-stm-sym_sound2}: \(|\{B \in P_{\tuple{P, \tau, Q}} \mid \exists E \in P. \: E \subseteq B \wedge (\cup \tau(E))\cap \sigma \neq \varnothing\}|\);
  {\rm \textbf{gain}} is the percentage decrease in execution time of Algorithm~\ref{algo:symbolic_sound} w.r.t. Algorithm~\ref{algo:explicit_sound}, i.e. \((t^1 - t^2)/t^1\) where \(t^1\) is the corresponding metric of Table~\ref{tab:results}.
  The symbol \(\timeout\) indicates execution timeout after 24 hours.
}
\label{tab:results2}
\centering
\begin{tabular}{|l|c|c|c|c|c|c|c|c|c|}\hline
\rule{0pt}{2.5ex}    
  \textbf{Protocol} & ~~~$\rightarrow$~~~ &  ~~~$\Sigma$~~~ & $\sigma^2$ & $P^2$ & $Q^2$ & $P_{\tuple{P, \tau, Q}}^2$ & $r^2$ & $t^2$ & \textbf{gain} \%\\[3.5pt] \hline\hline
  \csvreader[
  late after line = \\\hline
  ]{stats.csv}{
  name=\name, edges=\edges, states=\states, ssigma=\ssigma, sp=\sp, sq=\sq, spptq=\spptq, sreach=\sreach, stime=\stime, gain=\gain}{%
  \name & \edges & \states & \ssigma & \sp & \sq & \spptq & \sreach & \stime & \gain
}%
\end{tabular}
\end{table*}

The results in Tables~\ref{tab:results}~and~\ref{tab:results2} show that Algorithm~\ref{algo:symbolic_sound} has a significant gain in term of execution time (reduction ranges between a minimum of \(46.9\%\) and a maximum of \(99.6\%\) less time required for computation).
Of course, it is worth pointing out that the set of benchmarks is fairly small and that every transition system is a mutual exclusion protocol that has been unrolled. 


\begin{thebibliography}{10}

\bibitem{alurGraphMinimization}
Rajeev Alur and Thomas~A. Henzinger.
\newblock Computer-{A}ided {V}erification.
\newblock Chapter 4: Graph Minimization. (Unpublished manuscript), 1999.

\bibitem{BensalemBLS92}
Saddek Bensalem, Ahmed Bouajjani, Claire Loiseaux, and Joseph Sifakis.
\newblock Property preserving simulations.
\newblock In {\em {Proc}. of the Fourth International Workshop on Computer Aided Verification, {CAV}'92}, LNCS, pages 260--273. Springer, 1992.
\newblock \href {https://doi.org/10.1007/3-540-56496-9\_21} {\path{doi:10.1007/3-540-56496-9\_21}}.

\bibitem{BloomP95}
Bard Bloom and Robert Paige.
\newblock Transformational design and implementation of a new efficient solution to the ready simulation problem.
\newblock {\em Sci. Comput. Program.}, 24(3):189--220, 1995.
\newblock \href {https://doi.org/10.1016/0167-6423(95)00003-B} {\path{doi:10.1016/0167-6423(95)00003-B}}.

\bibitem{bouajjaniMinimalModelGeneration1991}
Ahmed Bouajjani, Jean-Claude Fernandez, and Nicolas Halbwachs.
\newblock Minimal model generation.
\newblock In {\em {Proc}. of the International Workshop on Computer-{{Aided Verification}}, {CAV}'91}, LNCS, pages 197--203. {Springer}, 1991.
\newblock \href {https://doi.org/10.1007/BFb0023733} {\path{doi:10.1007/BFb0023733}}.

\bibitem{bouajjaniMinimalStateGraph1992}
Ahmed Bouajjani, Jean-Claude Fernandez, Nicolas Halbwachs, Pascal Raymond, and Christophe Ratel.
\newblock Minimal state graph generation.
\newblock {\em Science of Computer Programming}, 18(3):247--269, 1992.
\newblock \href {https://doi.org/10.1016/0167-6423(92)90018-7} {\path{doi:10.1016/0167-6423(92)90018-7}}.

\bibitem{bustanSimulationBasedMinimization}
Doron Bustan and Orna Grumberg.
\newblock Simulation-based minimization.
\newblock {\em {ACM} Trans. Comput. Log.}, 4(2):181--206, 2003.
\newblock \href {https://doi.org/10.1145/635499.635502} {\path{doi:10.1145/635499.635502}}.

\bibitem{Cece17}
G{\'{e}}rard C{\'{e}}c{\'{e}}.
\newblock Foundation for a series of efficient simulation algorithms.
\newblock In {\em {Proc}. of the 32nd Annual {ACM/IEEE} Symposium on Logic in Computer Science, {LICS} 2017}, pages 1--12. {IEEE} Computer Society, 2017.
\newblock \href {https://doi.org/10.1109/LICS.2017.8005069} {\path{doi:10.1109/LICS.2017.8005069}}.

\bibitem{hmc18}
Edmund~M. Clarke, Thomas~A. Henzinger, Helmut Veith, and Roderick Bloem.
\newblock {\em Handbook of Model Checking}.
\newblock Springer, 1st edition, 2018.

\bibitem{crt2011}
Silvia Crafa, Francesco Ranzato, and Francesco Tapparo.
\newblock Saving space in a time efficient simulation algorithm.
\newblock {\em Fundam. Informaticae}, 108(1-2):23--42, 2011.
\newblock \href {https://doi.org/10.3233/FI-2011-412} {\path{doi:10.3233/FI-2011-412}}.

\bibitem{fislerBisimulationMinimizationSymbolic2002}
Kathi Fisler and Moshe~Y. Vardi.
\newblock Bisimulation minimization and symbolic model checking.
\newblock {\em Formal Methods Syst. Des.}, 21(1):39--78, 2002.
\newblock \href {https://doi.org/10.1023/A:1016091902809} {\path{doi:10.1023/A:1016091902809}}.

\bibitem{gpp03}
Raffaella Gentilini, Carla Piazza, and Alberto Policriti.
\newblock From bisimulation to simulation: Coarsest partition problems.
\newblock {\em Journal of Automated Reasoning}, 31(1):73--103, 2003.
\newblock \href {https://doi.org/10.1023/A:1027328830731} {\path{doi:10.1023/A:1027328830731}}.

\bibitem{gp08}
Rob~van Glabbeek and Bas Ploeger.
\newblock Correcting a space-efficient simulation algorithm.
\newblock In {\em {Proc}. of the International Conference on Computer Aided Verification, {CAV}'08}, pages 517--529. Springer, 2008.
\newblock \href {https://doi.org/10.1007/978-3-540-70545-1\_49} {\path{doi:10.1007/978-3-540-70545-1\_49}}.

\bibitem{groote2017}
Jan~Friso Groote, David~N. Jansen, Jeroen J.~A. Keiren, and Anton~J. Wijs.
\newblock An ${O}(m \log n)$ algorithm for computing stuttering equivalence and branching bisimulation.
\newblock {\em ACM Trans. Comput. Logic}, 18(2), jun 2017.
\newblock \href {https://doi.org/10.1145/3060140} {\path{doi:10.1145/3060140}}.

\bibitem{GrumbergL91}
Orna Grumberg and David~E. Long.
\newblock Model checking and modular verification.
\newblock In {\em {Proc}. of the 2nd International Conference on Concurrency Theory, {CONCUR}'91}, LNCS, pages 250--265. Springer, 1991.
\newblock \href {https://doi.org/10.1007/3-540-54430-5\_93} {\path{doi:10.1007/3-540-54430-5\_93}}.

\bibitem{gl94}
Orna Grumberg and David~E Long.
\newblock Model checking and modular verification.
\newblock {\em ACM Transactions on Programming Languages and Systems (TOPLAS)}, 16(3):843--871, 1994.
\newblock \href {https://doi.org/10.1145/177492.177725} {\path{doi:10.1145/177492.177725}}.

\bibitem{gulavaniSYNERGYNewAlgorithm2006}
Bhargav~S. Gulavani, Thomas~A. Henzinger, Yamini Kannan, Aditya~V. Nori, and Sriram~K. Rajamani.
\newblock {SYNERGY:} a new algorithm for property checking.
\newblock In {\em {Proc}. of the 14th {ACM} {SIGSOFT} International Symposium on Foundations of Software Engineering, {FSE} 2006}, pages 117--127. {ACM}, 2006.
\newblock \href {https://doi.org/10.1145/1181775.1181790} {\path{doi:10.1145/1181775.1181790}}.

\bibitem{henzingerComputingSimulationsFinite1995}
Monika~Rauch Henzinger, Thomas~A. Henzinger, and Peter~W. Kopke.
\newblock Computing simulations on finite and infinite graphs.
\newblock In {\em {Proc}. of {{IEEE}} 36th {{Annual Foundations}} of {{Computer Science}}, {FOCS}'95}, pages 453--462, 1995.
\newblock \href {https://doi.org/10.1109/SFCS.1995.492576} {\path{doi:10.1109/SFCS.1995.492576}}.

\bibitem{henzingerHybridAutomataFinite1995}
Thomas~A Henzinger and Peter~W Kopke.
\newblock Hybrid {{Automata}} with {{Finite Mutual Simulations}}.
\newblock Technical Report TR-95-1497, {Computer Science Departement}, {Cornell University}, 1995.

\bibitem{henzingerClassificationSymbolicTransition2005}
Thomas~A. Henzinger, Rupak Majumdar, and Jean-Fran{\c c}ois Raskin.
\newblock A classification of symbolic transition systems.
\newblock {\em ACM Transactions on Computational Logic}, 6(1):1--32, 2005.
\newblock \href {https://doi.org/10.1145/1042038.1042039} {\path{doi:10.1145/1042038.1042039}}.

\bibitem{groote2020}
David~N. Jansen, Jan~Friso Groote, Jeroen J.~A. Keiren, and Anton Wijs.
\newblock An ${O}(m \log n)$ algorithm for branching bisimilarity on labelled transition systems.
\newblock In Armin Biere and David Parker, editors, {\em Tools and Algorithms for the Construction and Analysis of Systems}, pages 3--20, Cham, 2020. Springer International Publishing.

\bibitem{KM02}
Anton{\'{\i}}n Kucera and Richard Mayr.
\newblock Simulation preorder over simple process algebras.
\newblock {\em Inf. Comput.}, 173(2):184--198, 2002.
\newblock \href {https://doi.org/10.1006/inco.2001.3122} {\path{doi:10.1006/inco.2001.3122}}.

\bibitem{KuceraM02}
Anton{\'{\i}}n Kucera and Richard Mayr.
\newblock Why is simulation harder than bisimulation?
\newblock In {\em {Proc}. of the 13th International Conference on Concurrency Theory, {CONCUR} 2002}, LNCS, pages 594--610. Springer, 2002.
\newblock \href {https://doi.org/10.1007/3-540-45694-5\_39} {\path{doi:10.1007/3-540-45694-5\_39}}.

\bibitem{leeOnlineMinimizationTransition1992}
David Lee and Mihalis Yannakakis.
\newblock Online {{Minimization}} of {{Transition Systems}}.
\newblock In {\em {Proc}. of the {{24th Annual ACM Symposium}} on {{Theory}} of {{Computing}}, {{STOC}} '92}, pages 264--274. {ACM}, 1992.
\newblock \href {https://doi.org/10.1145/129712.129738} {\path{doi:10.1145/129712.129738}}.

\bibitem{LoiseauxGSBB95}
Claire Loiseaux, Susanne Graf, Joseph Sifakis, Ahmed Bouajjani, and Saddek Bensalem.
\newblock Property preserving abstractions for the verification of concurrent systems.
\newblock {\em Formal Methods Syst. Des.}, 6(1):11--44, 1995.
\newblock \href {https://doi.org/10.1007/BF01384313} {\path{doi:10.1007/BF01384313}}.

\bibitem{majumdarAbstractionbasedControllerDesign2020}
Rupak Majumdar, Necmiye Ozay, and Anne-Kathrin Schmuck.
\newblock On abstraction-based controller design with output feedback.
\newblock In {\em {Proc}. of the 23rd {{International Conference}} on {{Hybrid Systems}}: {{Computation}} and {{Control}}, {HSCC} 2020}, pages 1--11. {ACM}, 2020.
\newblock \href {https://doi.org/10.1145/3365365.3382219} {\path{doi:10.1145/3365365.3382219}}.

\bibitem{minsky67}
Marvin~L. Minsky.
\newblock {\em Computation: Finite and Infinite Machines}.
\newblock Prentice-Hall Series in Automatic Computation. Prentice-Hall, 1967.

\bibitem{pt87}
Robert Paige and Robert~E. Tarjan.
\newblock Three partition refinement algorithms.
\newblock {\em SIAM Journal on Computing}, 16(6):973--989, 1987.
\newblock \href {https://arxiv.org/abs/https://doi.org/10.1137/0216062} {\path{arXiv:https://doi.org/10.1137/0216062}}, \href {https://doi.org/10.1137/0216062} {\path{doi:10.1137/0216062}}.

\bibitem{pasareanuConcreteModelChecking2005}
Corina~S. Pasareanu, Radek Pel{\'{a}}nek, and Willem Visser.
\newblock Concrete model checking with abstract matching and refinement.
\newblock In {\em {Proc}. 17th International Conference on Computer Aided Verification, {CAV} 2005}, LNCS, pages 52--66. Springer, 2005.
\newblock \href {https://doi.org/10.1007/11513988\_7} {\path{doi:10.1007/11513988\_7}}.

\bibitem{Pelnek}
Radek Pel{\'{a}}nek.
\newblock {BEEM}: Benchmarks for explicit model checkers.
\newblock In {\em Model Checking Software}, pages 263--267. Springer, 2007.
\newblock \href {https://doi.org/10.1007/978-3-540-73370-6_17} {\path{doi:10.1007/978-3-540-73370-6_17}}.

\bibitem{Ranzato2013}
Francesco Ranzato.
\newblock A more efficient simulation algorithm on {K}ripke structures.
\newblock In {\em {Proc}. of the 38th International Symposium on Mathematical Foundations of Computer Science 2013, {MFCS} 2013}, LNCS, pages 753--764. Springer, 2013.
\newblock \href {https://doi.org/10.1007/978-3-642-40313-2\_66} {\path{doi:10.1007/978-3-642-40313-2\_66}}.

\bibitem{ranzato2014}
Francesco Ranzato.
\newblock An efficient simulation algorithm on {K}ripke structures.
\newblock {\em Acta informatica}, 51(2):107--125, 2014.
\newblock \href {https://doi.org/10.1007/s00236-014-0195-9} {\path{doi:10.1007/s00236-014-0195-9}}.

\bibitem{RT07}
Francesco Ranzato and Francesco Tapparo.
\newblock A new efficient simulation equivalence algorithm.
\newblock In {\em {Proc}. of the 22nd {IEEE} Symposium on Logic in Computer Science, {LICS} 2007}, pages 171--180. {IEEE} Computer Society, 2007.
\newblock \href {https://doi.org/10.1109/LICS.2007.8} {\path{doi:10.1109/LICS.2007.8}}.

\bibitem{ranzatoEfficientSimulationAlgorithm2010}
Francesco Ranzato and Francesco Tapparo.
\newblock An efficient simulation algorithm based on abstract interpretation.
\newblock {\em Information and Computation}, 208(1):1--22, 2010.
\newblock \href {https://doi.org/10.1016/j.ic.2009.06.002} {\path{doi:10.1016/j.ic.2009.06.002}}.

\bibitem{sipserIntroductionTheoryComputation2006}
Michael Sipser.
\newblock {\em Introduction to the Theory of Computation}.
\newblock {Thomson Course Technology}, 2nd ed edition, 2006.

\bibitem{TanC01}
Li~Tan and Rance Cleaveland.
\newblock Simulation revisited.
\newblock In {\em {Proc}. of the 7th International Conference on Tools and Algorithms for the Construction and Analysis of Systems, {TACAS} 2001}, LNCS, pages 480--495. Springer, 2001.
\newblock \href {https://doi.org/10.1007/3-540-45319-9\_33} {\path{doi:10.1007/3-540-45319-9\_33}}.

\bibitem{vanglabbeekFiveDeterminisationAlgorithms2008}
Rob {van Glabbeek} and Bas Ploeger.
\newblock Five determinisation algorithms.
\newblock In {\em Implementation and {{Applications}} of {{Automata}}}, Lecture {{Notes}} in {{Computer Science}}, pages 161--170. {Springer Berlin Heidelberg}, 2008.

\end{thebibliography}
\end{document}